\newtheorem{lemma}{Lemma}
\newtheorem{proposition}{Proposition}
\newtheorem{remark}{Remark}
\def\bff{{\bf f}}
\def\bg{{\bf g}}
\def\bh{{\bf h}}
\def\bs{{\bf s}}
\def\bx{{\bf x}}
\def\bA{{\bf A}}
\def\bB{{\bf B}}
\def\bC{{\bf C}}
\def\bD{{\bf D}}
\def\bF{{\bf F}}
\def\bR{{\bf R}}
\def\cL{\mbox{$\mathcal{L}$}}
\def\cU{\mbox{$\mathcal{U}$}}
\def\bbC{\mbox{$\mathbb{C}$}}
\def\bbE{\mbox{$\mathbb{E}$}}
\def\bbN{\mbox{$\mathbb{N}$}}
\def\bbR{\mbox{$\mathbb{R}$}}
\begin{document}

\title{Joint Optimization for Secure and Reliable Communications in Finite Blocklength Regime
% Secure Precoding for Downlink Communications in Finite Blocklength Regime
}

\author{Mintaek Oh, Jeonghun Park, and Jinseok Choi
% \IEEEauthorblockA{\normalsize{Electrical Engineering, Ulsan National Institute of Science and Technology, Ulsan, 44919, Republic of Korea}\\
% \normalsize{$^\dagger$College of IT Engineering, Kyungpook National University, Daegu, 41566, Republic of Korea}\\
% E-mail: ohmin@unist.ac.kr,  jeonghun.park@knu.ac.kr, and jinseokchoi@unist.ac.kr}
\thanks{
M. Oh and J. Choi are with the Department of Electrical Engineering, Ulsan National Institute of Science and Technology (UNIST), Ulsan, 44919, South Korea (e-mail: {\texttt{\{ohmin, jinseokchoi\}@unist.ac.kr}}). 

J. Park is with the School of Electronics Engineering, College of IT Engineering, Kyungpook National University, Daegu, 41566, South Korea (e-mail: {\texttt{jeonghun.park@knu.ac.kr}}). 
% J. Choi is with the Department of Electrical Engineering, Ulsan National Institute of Science and Technology (UNIST), Ulsan, 44919, South Korea (e-mail: {\texttt{jinseokchoi@unist.ac.kr}}).
}
}

\maketitle \setcounter{page}{1} 
\begin{abstract} 
% Brief motivations
% Ultra-reliable low latency communication (URLLC) is a one of the key concepts in realizing  5G communications.
% Due to its stringent requirements of URLLC in terms of reliability and latency, it is crucial to consider the finite blocklength (FBL) regime.
% Consequently, the existing transmission strategies with infinite blocklength are no longer an optimal design.
% In addition, security is a critical issue to protect confidential information from eavesdroppers.
To realize ultra-reliable low latency communications with high spectral efficiency and security, we investigate a joint optimization problem for downlink communications with multiple users and eavesdroppers in the finite blocklength (FBL) regime.
% We consider downlink multiple-input multiple-output (MIMO) communications with multiple users and eavesdroppers.
We formulate a multi-objective optimization problem to maximize a sum secrecy rate by developing a secure precoder and to minimize a maximum error probability and information leakage rate. 
% In this regard, physical layer security is emerging as a promising solution.
% Brief system model
% In this paper, we feature the URLLC and physical layer security in a secure mission-critical communication system.
% We consider a downlink network wherein an access point (AP), multiple users, and multiple eavesdroppers coexist.
% The eavesdroppers attempt to wiretap confidential messages of the users.
% In addition, we consider short-packet communications for delay-sensitive IoT applications.
% In the considered system, we aim to solve a sum secrecy rate maximization problem in the finite blocklength (FBL) regime.
% Due to the FBL channel coding assumption, the secrecy rate has a back-off factor determined by the blocklength, decoding error probability, and information leakage rate, which makes the problem particularly more challenging.
The main challenges arise from the complicated multi-objective problem, non-tractable back-off factors from the FBL assumption, non-convexity and non-smoothness of the secrecy rate, and the intertwined optimization variables.
% : precoders, error probabilities, and leakage rates.
% Moreover, it is difficult to directly solve the formulated problem due to multiple objectives.
To address these challenges, we adopt an alternating optimization approach by decomposing the problem into two phases:  secure precoding design, and maximum error probability and information leakage rate  minimization.
% Phase A: Precoding
In the first phase, we obtain a lower bound of the secrecy rate and 
% reformulate the problem to a product of Rayleigh quotient form.
% with respect to the precoder. 
 derive a first-order Karush–Kuhn–Tucker (KKT) condition to identify local optimal solutions with respect to the precoders.
Interpreting the condition as a generalized eigenvalue problem, we solve the problem by using a power iteration-based method.
% Phase B: Error and Info leakage
In the second phase, we adopt a weighted-sum approach and derive KKT conditions in terms of the error probabilities and  leakage rates  for given precoders.
Simulations validate the proposed algorithm.
\end{abstract}

\begin{IEEEkeywords}
    Physical layer security, finite blocklength, secure precoding, error probability and information leakage rate minimization, alternating optimization.
\end{IEEEkeywords}

%%%%%%%%%%%%%%%%%%%%%%%%%%%%%%%%%%%%%%%%%%%%%%%%%%%%
\section{Introduction}
%%%%%%%%%%%%%%%%%%%%%%%%%%%%%%%%%%%%%%%%%%%%%%%%%%%%

%%%%%%%%%% Motivation %%%%%%%%%%%%
% <URLLC 소개 (b5G / 6G)> - ex IoT, V2X, etc.
% - 1) FBL for latency 
% - 2) error minimization for reliability

%% <Security>
% - PLS
% => We propose XXX in this paper.

%-----------------------------------

%%%%%%%%%%% Prior Work %%%%%%%%%%%%%
% <FBL> 
% - analysis
% - algorithm
% - various systems

% <PLS>
% - analysis
% - algorithm
% - various system

% <FBL + PLS>
% - existing work
% - Limitations of prior work (FBL only / PLS only/ FBL+PLS)
% - Challenge
% - We put forth a novel algorithm XXXX for FBL + PLS.

%%%%%%%%%%%%%%%%%%%%%%%% Contributions %%%%%%%%%%%%%%%%%%%%%%%%
% Brief system model / problem
% bullet list 1) formulation / reformulation 2) KKT - algorithm, 3) extension to erorr/leakage min, 4) simulation - performance/key findings
%%%%%%%%%%%%%%%%%%%%%%%%%%%%%%%%%%%%%%%%%%%%%%%%%%%%%%%%%%%%%%%%

%%%%%%%%%% Motivation %%%%%%%%%%%%%
% <URLLC 소개 (b5G / 6G)> - ex IoT, V2X, etc.
% - 1) FBL for latency 
% - 2) error minimization for reliability
%%%%%%%%%%%%%%%%%%%%%%%%%%%%%%%%%%%
%% URLLC
Ultra-reliable low latency communications (URLLC) have inevitably become one of the primary usage scenarios in realizing  5G and 6G communications~\cite{URLLC:TCOM:2019}.
% URLLC:ISWCS:2018,
For instance, URLLC is closely related to mission-critical internet-of-things (IoT) and vehicle-to-everything (V2X) applications oriented toward high reliability
and low latency~\cite{V2X:MAG:2015, URLLC:TCOM:2020}.
To support such delay-sensitive communications, a short-packet transmission is often taken into account \cite{SPC:Proc:2016, SPC:TCOM:2016, SPC:TVT:2018, SPC:TWC:2019}.
Unlike conventional communications that mainly consider the infinite blocklength coding regime, the finite blocklengh (FBL) based communications is more suitable to leverage the benefit of short data packet transmission~\cite{FBL:CL:2018, FBL:CL:2017}.
The communication in the FBL regime, however, is limited by a back-off factor determined by non-negligible decoding error probability and blocklength \cite{FBL:TIT:2010}.
% due to its violating the infinite coding length assumption.
Accordingly, the existing transmission strategies with infinite blocklength becomes far less optimal in the FBL regime and thus, delay-constrained applications need to carefully consider the reliability issue as well as the latency; optimizing the per-user error probability plays a key role in achieving the high throughput with low latency in the FBL regime \cite{FBL:WCL:2017}.
% Thus, considering the reliability constraint is critical to ensure communication reliability.

%%%%%%%%%% Security %%%%%%%%%%%%%
% - PLS
% => We propose XXX in this paper.
%%%%%%%%%%%%%%%%%%%%%%%%%%%%%%%%%
%--------------------------------------------------------------
In addition,
% with the advent of more complicated and diversified  wireless networks, 
information security has become a  critical issue for future wireless networks to protect confidential information from eavesdroppers.
Since the wireless communication systems are highly likely to be vulnerable to eavesdropping due to the broadcast nature of wireless communications \cite{PLS:NET:2015}, the importance of security has been gaining more attention~\cite{zhang2014iot, hassan2019current}.
Considering the complexity issues, physical layer security (PLS) has been considered as a promising solution for secure communications \cite{PLS:NET:2015, PLS:JSAC:2018}.
% shiu2011physical,, sun2017physical
%--------------------------------------------------------------
%% PLS
In \cite{PLS:JSAC:2018}, a key mechanism of PLS is to apply a secure channel coding technique to confidential information so that only the legitimate users are capable of decoding the information.
In \cite{PLS:BELL:1975, PLS:TIT:1978}, the maximal secret communication rate called the secrecy rate has characterized at which information can be transmitted securely and reliably over a wiretap channel.
In spite of the broad investigation on the PLS methods, it is inappropriate to directly apply the existing solutions when communication systems operate in the FBL regime due to the additional back-off factor from the FBL penalty in PLS~\cite{PLS_FBL:ISIT:2016, PLS_FBL:TIT:2019, feng2021secure}.
% \cite{PLS:IoTJ:2019, PLS_FBL:TWC:2020, PLS_FBL:MAG:2019}.
%--------------------------------------------------------------
%% Secrecy rate
% Comparing to the conventional regime, the information-theoretic analysis of secrecy rate with finite blocklength codes has two major differences \cite{PLS_FBL:ISIT:2016, PLS_FBL:TIT:2019, feng2021secure}: $(i)$ the amount of information leakage as well as the decoding error probability  are non-negligible, and  $(ii)$ the secrecy rate also has the back-off factors determined by error probability, information leakage, and blocklength.
In this regard, we investigate PLS for downlink FBL-based communications.
%--------------------------------------------------------------

%%%%%%%%%%%%%%%%%%% Prior Work %%%%%%%%%%%%%%%%%%%%%
% <PLS>
% - analysis
% - algorithm
% - various system

% <FBL> 
% - analysis
% - algorithm
% - various systems

% <FBL + PLS>
% - existing work
% - Limitations of prior work (FBL only / PLS only/ FBL+PLS)
% - Challenge
% - We put forth a novel algorithm XXXX for FBL + PLS.
%%%%%%%%%%%%%%%%%%%%%%%%%%%%%%%%%%%%%%%%%%%%%%%%%%%%

\subsection{Prior Work}
%--------------------------------------------------------------
%% PLS
PLS has been widely studied in wireless communication systems.
For the case of multiple users and a single eavesdropper in multiple-input single-output (MISO) wireless networks, suboptimal precoding solutions were proposed in previous works.
% Prior work: PLS for MISO
A secure precoding that maximizes a sum secrecy spectral efficiency (SE) with suboptimal scheduling was proposed in \cite{PLS:SPL:2013}.
In \cite{PLS:TSP:2018}, a secrecy rate optimization was proposed for maximizing the MISO network’s energy efficiency in terms of users’ quality-of-service (QoS).
% Prior work: PLS for MIMO
For a multiple-input multiple-output (MIMO) system, the secrecy SE with multiple eavesdroppers equipped with multiple antennas was investigated in \cite{PLS:TIT:2010}.
% Prior work: PLS for multi-user Multi-eve
Furthermore, in \cite{PLS:TWC:2015}, a secure precoding algorithm was proposed in MIMO broadcasting channels for a simultaneous wireless information and power transfer system.
For a multi-user multi-eavesdropper network, a secure transmission strategy was proposed in \cite{PLS:TIFS:2016}. 
% Prior work: PLS with AN
Specifically, in \cite{PLS:TIFS:2016}, an artificial noise (AN)-aided secure precoding was considered to maximize the secrecy throughput by addressing the leakage signal from non-colluding eavesdroppers.
Moreover, there exists prior work incorporating a scenario of colluding eavesdroppers~\cite{PLS:TVT:2021}.
% where they cooperate with each other to decode the information 
% in the MU-MIMO system.
% Prior work: PLS for Colluding Eves
% In \cite{PLS:TVT:2021}, the sum secrecy SE maximization precoding was proposed by deriving a generalized eigenvalue problem for the colluding scenario.
% and an optimization framework was further investigated for a imperfect channel state information at a transmitter (CSIT) of eavesdroppers.

%--------------------------------------------------------------
%% FBL
% Prior work: FBL rate
Based on the analytical finding of the maximal channel coding rate at the FBL \cite{FBL:TIT:2010}, there have been several works that analyzed accurate information-theoretic approximations of the achievable rates in the FBL regime \cite{FBL:ITW:2012, FBL:TIT:2014, FBL:TVT:2015}.
% Prior work: MIMO FBL rate
% To put an extension to the MIMO system, the coding rate with FBL was analyzed in \cite{FBL:TIT:2014}.
Leveraging the results in the FBL regime, in \cite{FBL:TVT:2019, FBL:TCOM:2020}, resource allocation   was investigated to maximize the sum rate and to assist  the short-packet communication subject to QoS constraints for URLLC users. 
In addition, the decoding error probability  also needs to be considered in advanced wireless communication techniques incorporating reliable and real-time services.
In this regard, the joint optimization of SE and error probability in the FBL regime is needed to cope with the stringent requirements of URLLC.
% Prior work: Joint Optimization: Precoding + Error Prob.
% In \cite{FBL:WCL:2017, choi:iotj:21}, the precoding algorithms maximizing the SE while minimizing the error probability were proposed.
The joint optimization problem in \cite{choi:iotj:21} was solved by using an alternation optimization framework between the sum SE maximization and error probability minimization.
% first reformulated into a single-objective optimization problem, and then decomposed into two phases based on a closed-form solution, which can be solved by deriving Karush-Kuhn-Tucker (KKT) conditions and a generalized power iteration (GPI) method.

%--------------------------------------------------------------
%% FBL + PLS
% Prior work: Secrecy rate in FBL
Recent studies have also analyzed PLS in the FBL regime \cite{PLS_FBL:ISIT:2016, PLS_FBL:TIT:2019}.
In \cite{PLS_FBL:TIT:2019}, the maximal secret communication rate over a wiretap channel in the FBL regime, i.e., the secrecy rate, was derived.
According to the fundamental results, it was observed that there exist tradeoffs among delay, reliability, and security.
In this regard, the wireless communication techniques need to be further investigated for improving the secrecy rate  in the FBL regime \cite{PLS_FBL:PROC:2021}.
% Prior work: Short-packet Communication in FBL
To find the optimal tradeoff, state-of-the-art precoding methods were proposed to improve the secrecy rate under the reliability and security constraints.
% Prior work: Single User Single Eve
For the case of a single user and eavesdropper, analytical frameworks of the  FBL-based  secure communications were investigated in \cite{FBL:TWC:2019}.
Specifically, in \cite{FBL:TWC:2019}, the performance of FBL-based communications in the presence of an eavesdropper was investigated, and the optimal blocklength in maximizing secrecy throughput was also analyzed by using the AN-aided maximum ratio transmission (MRT) precoding.
For full-duplex MIMO short-packet systems,  \cite{PLS_FBL:WCL:2021} proposed the secure precoding method based on zero-forcing (ZF) to maximize the sum secrecy rate.
From \cite{PLS_FBL:TWC:2021}, an outage probability considering both reliability and security was proposed according to the characteristics of  FBL-based secure communications.
In our previous work \cite{oh2022secure}, we proposed an efficient secure precoding algorithm for downlink communications with multiple users and a single eavesdropper.

% Limitations of Prior Works
Although the prior works investigated the secure communications in the FBL regime,  the single-eavesdropper scenario is mainly considered.
In addition, the prior work  provided theoretical analysis and made an effort of maximizing the achievable secrecy rate with the conventional linear precoders such as MRT and ZF.
In this regard, thorough optimization of transmission for  FBL-based secure communications under the reliability and security constraints for the general multi-user multi-eavesdropper  MIMO network is still missing.
% Challenge
It is, however, challenging to solve the optimization problem for such sophisticated networks.
% which consider the  multi-user multi-eavesdropper  FBL-based communication.
In addition, we note that solving the sum secrecy rate optimization problem is already  non-convex and difficult to solve  \cite{PLS:TWC:2016}, and  the back-off factors entangled with the secrecy rate make the problem more challenging \cite{PLS_FBL:TIT:2019}.
Besides, it is infeasible to find  direct solutions with the multi-objectives \cite{choi:iotj:21}.
% ; thereby we need to consider an alternating optimization approach.
% {\color{red}
% However, such limitations of establishing the optimization framework come from the intricate nature of the optimization problem.
% In other words, since our goal, i.e., solving the sum secrecy rate optimization problem, is NP-hard and non-convex \cite{PLS:TWC:2016}, it is challenging to develop an optimization framework of sophisticated FBL-based communication systems considering multiple users and eavesdroppers.
% }
%--------------------------------------------------------------
% Proposal
Overcoming the aforementioned challenges, we put forth a novel secure precoding method that brings the optimal tradeoff among rate, reliability, and security in the FBL regime.
%--------------------------------------------------------------

%%%%%%%%%%%%%%%%%%%%%%%% Contributions %%%%%%%%%%%%%%%%%%%%%%%%
% Brief system model / problem
% bullet list 
% 1) formulation / reformulation 
% 2) KKT - algorithm
% 3) extension to error/leakage min
% 4) simulation - performance/key findings
%%%%%%%%%%%%%%%%%%%%%%%%%%%%%%%%%%%%%%%%%%%%%%%%%%%%%%%%%%%%%%%%
\subsection{Contributions}
% Brief system model
In this paper, we consider the  FBL-based downlink secure communication systems where an access point (AP) with multiple antennas serves multiple users while multiple  eavesdroppers attempt to wiretap the user messages.
Our contributions are summarized as follows:
%--------------------------------------------------------------
% Bullet list 
\begin{itemize}
    
    \item 
    % Formulation
    Based on the characteristics of FBL-based communications, we adopt the secrecy rate as our key performance metric.
    % The achievable secrecy rate includes the back-off factor determined by blocklength, error probability, and information leakage.
    Using the secrecy rate, we formulate a sum secrecy rate maximization problem to jointly optimize $(i)$ a precoding matrix, and $(ii)$ error probability and information leakage rate.
    % Challenge
    However, there exist several challenges in solving the formulated optimization problem.
    First, the problem is inherently non-convex and thus, finding a global optimal solution is infeasible.
    Second, the formulated problem is a multi-objective optimization, and also each user has error probability and information leakage constraints determined by the system reliability requirements.
    Third, the secrecy rate is not tractable due to the back-off factors which are functions of optimization variables intertwined with each other.
    Finally, the sum secrecy rate is non-smooth since the secrecy rate under the presence of multiple eavesdroppers is determined by the maximum wiretap channel rate.

    \item 
    % Alternating approach
    To resolve aforementioned difficulties, we first decompose the problem into two phases adopting alternating optimization: the secrecy rate maximization precoding design and the maximum error probability and information leakage rate minimization.
    % Next, we divide the reformulated problem into alternating two phases as the error probability and information leakage minimization, and the sum secrecy rate maximization.
    % In this regard, we establish a secure FBL transmission design via the alternating optimization approach.
    % Phase I
    In the first phase, for given error probability and  leakage rate, we apply a smooth approximation to a non-smooth objective function.
    Then, we make the problem a more tractable form by deriving a lower bound of the secrecy rate.
    % We reformulate the problem as the product of Rayleigh quotients. 
    Due to non-convexity of the problem, we derive a first-order KKT condition which can be interpreted as a generalized eigenvalue problem and hence, the best local optimal precoder can be found by finding its principal eigenvector.
    It can be solved by a power iteration-based precoding method.
    % Phase II
    In the second phase, we reformulate the multi-objective optimization problem into a single-objective optimization problem by using a weighted-sum approach for a given precoder.
    % In the second phase, for the given precoder, we formulate the minimization problem in terms of reliability constraints.
    Since the problem is still not tractable due to the maximum function in the secrecy rate, we further consider the lower bound of the objective function and optimize the maximum error probability and information leakage rate  by solving KKT conditions.
    % Under several assumptions, we optimize a maximum error probability and information leakage rate by deriving the KKT conditions.

    \item
    % Imperfect CSIT of Eves
    We note that obtaining the instantaneous channel state information at a transmitter (CSIT) of eavesdroppers is challenging.
    Hence, we consider the case of the partial CSIT of wiretap channels which is regarded as a more practical scenario.
    Assuming that only the wiretap channel covariance is available at the AP, we reformulate the problem by deriving an approximated lower bound of the secrecy rate which can be handled by using the channel covariance.
    Then, we follow similar steps as for the perfect CSIT case and develop a joint precoder, error probability and information leakage rate optimization algorithm under the partial CSIT of wiretap channels.
    % with the channel covariance of wiretaps.

    \item
    % Simulation - performance/key findings
    Via simulations, we verify the rate, reliability, and security performances of the proposed joint optimization methods.
    % {\color{red} % 다시 쓰기
    % We note that several useful results in the FBL regime for secure FBL-based transmission are revealed.
    % }
    In particular, we demonstrate that the proposed algorithm achieves the highest secrecy rate  while keeping the lowest maximum error probability and information leakage rate in various scenarios compared to baseline methods.
    Therefore, it is concluded that the proposed algorithms are beneficial to the  FBL-based secure communications under the strict requirement of the reliability and security for future wireless applications.
\end{itemize}
%--------------------------------------------------------------

{\it Notation}: $\bf{A}$ is a matrix and $\bf{a}$ is a column vector. 
The superscripts $(\cdot)^{\sf T}$, $(\cdot)^{\sf H}$, and $(\cdot)^{-1}$ denote the transpose, Hermitian, and matrix inversion, respectively.
The blackboard bold symbols $\bbC$, $\bbR_{+}$, and $\bbN_{+}$ denote the complex, nonnegative real, and nonnegative integer domains, respectively.
${\bf{I}}_N$ is the identity matrix with size $N \times N$.
Assuming that ${\bf{A}}_1, \dots, {\bf{A}}_N \in \mathbb{C}^{K \times K}$, ${\bf{A}} = {\rm blkdiag}\left({\bf{A}}_1,\dots, {\bf{A}}_N \right) \in \bbC^{KN\times KN}$ is a block diagonal matrix. $\|\bf A\|$ represents L2 norm. 
We use ${\rm{tr}}(\cdot)$ for trace operator, ${\rm{vec}}(\cdot)$ for vectorization, and
${\cU}(a,b)$ is a uniform distribution with two boundaries $a$ and $b$.
We also follow MATLAB style notation.

% $\ln (\cdot)$ for the natural logarithm, i.e., $\log_e (\cdot)$

%%%%%%%%%%%%%%%%%%%%%%%%%%%%%%%%%%%%%%%%%%%%%%%%%%%%
\section{System Model} \label{sec:sys_model}
%%%%%%%%%%%%%%%%%%%%%%%%%%%%%%%%%%%%%%%%%%%%%%%%%%%%

% % FIGURE: System Model %%%%%%%%%%%%%%%
% \begin{figure}[!t]    
%     {\centerline{\resizebox{0.5\columnwidth}{!}{\includegraphics{Figures/System_model.jpg}}}
%     \caption{A downlink communication network in the FBL regime with $K$ users and $M$ eavesdroppers.
%     A short-size block consists of preamble (PA), header (H), and data.}
%     \label{fig:System_model}}
% \end{figure}
% % %%%%%%%%%%%%%%%%%%%%%%%%%%%%%%%%%%%%

% As shown in Fig.~\ref{fig:System_model},
We consider a downlink network in which the AP equipped with $N$ antennas serves $K$ single-antenna users. 
The network includes $M$ single-antenna eavesdroppers, and they attempt to overhear legitimate user messages from the AP.
In addition, we assume the FBL channel coding, i.e., the coding length is $L \ll \infty$.
We denote an user set and an eavesdropper set as $\CMcal{K}=\left\{1,\cdots,K\right\}$ and $\CMcal{M}=\left\{1,\cdots,M\right\}$, respectively. 
The data symbol for user $k$,  $s_k$, is drawn from a Gaussian distribution with a zero mean and variance of $\bbE [|s_k|^2] = P$, $\forall k \in \CMcal{K}$.

The AP broadcasts the data symbols $s_k, \forall k \in \CMcal{K}$ to each legitimate user through a linear precoder $\bF = [{\bf f}_1 , \ldots , {\bf f}_K] \in \bbC^{N \times K}$,
where ${\bf f}_{k} \in \bbC^{N}$ indicates a precoding vector for $s_k$, $\forall k \in \CMcal{K}$.
A transmitted signal vector ${\bx}\in \bbC^{N}$ is given by
\begin{align}
    {\bf{x}} = \sum_{k=1}^{K}{\bf{f}}_k s_k = {\bF\bs},
\end{align}
where $\bs = [s_1, \dots, s_K]^{\sf T} \in \bbC^{K}$.
After transmission, a received signal of  user $k$ is given by
\begin{align}
    y_k = {\sqrt{\gamma_k} \bh^{\sf H}_{k}{\bf f}_k s_k} + {\sum_{\ell \ne k,\ell = 1}^{K} \sqrt{\gamma_k}{\bf{h}}_{k}^{\sf H}{\bf f}_{\ell}s_{\ell} + n_k},
\end{align}
where $\gamma_k$ and $\bh_k \in \bbC^{N}$ are the large-scale channel fading term and small-scale channel fading vector between the AP and   user $k$, respectively, and $n_k$ is an additive white Gaussian noise (AWGN) at user $k$ with a zero mean and  variance of $\sigma^2$.
Similarly, the large-scale channel fading term and small-scale channel fading vector between the AP and eavesdroppers are represented as $\gamma_{m}^{\sf e}$ and $\bg_m \in \bbC^{N}$, respectively.
Then, the received signal at eavesdropper $m$ is given by
\begin{align}
    y^{\sf e}_{m} = \sum_{\ell = 1}^{K}\sqrt{\gamma_{m}^{\sf e}}{\bf{g}}_{m}^{\sf H}{\bf f}_{\ell}s_{\ell} + n^{\sf e}_{m},
\end{align}
where $n^{\sf e}_{m}$ is the AWGN noise at the eavesdropper $m$ with a zero mean and  variance of $\sigma^2_{\sf e}$.
We assume the perfect CSIT for both the legitimate users and eavesdroppers.
We then extend our method to a partial CSIT case.

%%%%%%%%%%%%%%%%%%%%%%%%%%%%%%%%%%%%%%%%%%%%%%%%%%%%%%%%%%
\section{Problem Formulation} \label{sec:Prob_form}
%%%%%%%%%%%%%%%%%%%%%%%%%%%%%%%%%%%%%%%%%%%%%%%%%%%%%%%%%%

In this section, we introduce performance metrics that incorporate the effect of FBL in the considered communication system.
The achievable rate of user $k$ is
\begin{align} 
    \label{eq:userSE}
    R_k = \log_2 \left(1 + \rho_k\right),
    % \gamma_k = \frac{\gamma_k |{\bf{h}}_{k}^{\sf H} {\bf{f}}_{k}|^2}{ \sum_{\ell \ne k, \ell = 1}^{K} \gamma_{k} |{\bf{h}}_{k}^{\sf H} {\bf{f}}_{\ell}|^2  + \sigma^2/P}.
\end{align}
where $\rho_k$ is the signal-to-interference-plus-noise ratio (SINR) of user $k$ defined as
\begin{align}
    \rho_k = \frac{\gamma_k |{\bf{h}}_{k}^{\sf H} {\bf{f}}_{k}|^2}{ \sum_{\ell \ne k, \ell = 1}^{K} \gamma_{k} |{\bf{h}}_{k}^{\sf H} {\bf{f}}_{\ell}|^2  + \sigma^2/P}.
\end{align}
The achievable rate of eavesdropper $m$ for $s_k$ is
\begin{align}
    \label{eq:eveSE}
    R_{m,k}^{\sf e} = \log_2 \left(1 + \rho_{m,k}^{\sf e}\right),
    % \gamma^{\sf e}_{m,k} = \frac{\gamma^{\sf e}_{m} |{\bf{g}}^{\sf H}_{m} {\bf{f}}_{k}|^2}{\sum_{\ell \ne k, \ell = 1}^{K} \gamma^{\sf e}_{m} |{\bf{g}}^{\sf H}_{m} {\bf{f}}_{\ell}|^2  + \sigma^2_{\sf e}/P}.
\end{align}
where $\rho_{m,k}^{\sf e}$ is the SINR of eavesdropper $m$ for $s_k$ defined as
\begin{align}
    \rho_{m,k}^{\sf e} = \frac{\gamma^{\sf e}_{m} |{\bf{g}}^{\sf H}_{m} {\bf{f}}_{k}|^2}{\sum_{\ell \ne k, \ell = 1}^{K} \gamma^{\sf e}_{m} |{\bf{g}}^{\sf H}_{m} {\bf{f}}_{\ell}|^2  + \sigma^2_{\sf e}/P}.
\end{align}

% We assume that each eavesdropper does not collude so that the information leakage amount due to wiretapping is determined by the maximum rate among the wiretap rates.
According to \cite{PLS_FBL:ISIT:2016, PLS_FBL:TIT:2019}, the secrecy rate which measures the maximum transmission rate of the confidential information that any eavesdropper cannot decode in the FBL regime is given as
\begin{align}
    \label{eq:SecrecyRate}
    R_{k}^{\sf sec}(\bF, \epsilon_k , \delta_{m,k}; L) 
    = R_k - \sqrt{\frac{\CMcal{V}_{k}}{L}}Q^{-1}(\epsilon_k) - \max_{m \in \CMcal{M}}\left\{R^{\sf e}_{m,k} + \sqrt{\frac{\CMcal{V}^{\sf e}_{m,k}}{L}}Q^{-1}\left(\delta_{m,k}\right)\right\},
\end{align}
where $\CMcal{V}^{\sf e}_{m,k}$ and $\CMcal{V}_{k}$ are channel dispersion factors that depend on the stochastic variations of the legitimate and the wiretap channels, respectively \cite{FBL:TIT:2010, PLS_FBL:TIT:2019}, $Q^{-1}(\cdot)$ denotes an inverse Q-function defined as  $Q(x)=\frac{1}{\sqrt{2\pi}}	\int_{x}^{\infty} e^{\frac{t^2}{2}}\, dt$, $\epsilon_k$ is the decoding error probability of user $k$, and $\delta_{m,k}$ is the secrecy constraint on the information leakage of $s_k$ from  eavesdropper $m$ \cite{PLS_FBL:TIT:2019}.
Note that if the blocklength $L$ goes to infinity, then the secrecy rate in \eqref{eq:SecrecyRate} is sufficiently close to the classic secrecy SE.
Thus, back-off factors in \eqref{eq:SecrecyRate}, i.e.,
$\sqrt{\frac{\CMcal{V}_{k}}{L}}Q^{-1}(\epsilon_k)$ and $\sqrt{\frac{\CMcal{V}^{\sf e}_{m,k}}{L}}Q^{-1}\left(\delta_{m,k}\right)$, play as a drawback which reduces the secrecy rate in the FBL regime.
% From \cite{FBL:TIT:2010}, since the dispersion is directly related to its indices in the set, the maximum function contains both the partial back-off factor as well as the wiretap rate.
% Since we introduce a maximum operator, the worst case wiretapping scenario is considered.
% In this regard, the back-off factor can be regarded as a penalty of having finite blocklength.
% In addition, the both expressions in \eqref{eq:userSE} and \eqref{eq:eveSE} which are entangled with $\epsilon_k$ and $\delta_{m,k}$ incur a penalty term on the achievable secrecy rate with the FBL.
% Approximation with i.i.d. Gaussian codebook
Considering an interference channel where  transmitters uses  an independent and identically distributed (i.i.d.) Gaussian codebook and  receivers employ nearest-neighbor decoding, the channel dispersions factors become \cite{scarlett2016dispersion, schiessl2019delay}
\begin{align}
    \label{eq:dispersion_user}
    &\CMcal{V}_{k} = \CMcal{V}^{\sf i.i.d.}(\rho_k) = \frac{2\rho_k}{1+\rho_k}\left(\log_{2} e\right)^2,
    \\
    \label{eq:dispersion_eve}
    &\CMcal{V}^{\sf e}_{m,k} = \CMcal{V}^{\sf i.i.d.}(\rho^{\sf e}_{m,k})  = \frac{2\rho^{\sf e}_{m,k}}{1+\rho^{\sf e}_{m,k}}\left(\log_{2} e\right)^2.
\end{align}

Now, we let the predetermined maximum error probability and information leakage constraints as $\hat {\epsilon}_k$ and $\hat{\delta}_{m,k}, \forall k\in \CMcal{K}, \forall m \in \CMcal{M}$, respectively.
Defining ${\boldsymbol \epsilon} = [\epsilon_1, \cdots, \epsilon_K]^{\sf T} \in \bbR^{K}_{+}$ and ${\boldsymbol \Delta} = [{\boldsymbol \delta}_{1}, \cdots, {\boldsymbol \delta}_{K}] \in \bbR^{M \times K}_{+}$, where ${\boldsymbol \delta}_{k} = [{\delta}_{1,k}, \cdots, {\delta}_{M,k}]^{\sf T} \in \bbR^{M}_{+}, \forall k \in \CMcal{K}$, we formulate a joint optimization problem to maximize the sum secrecy rate and to minimize the maximum error probability and information leakage rate as
\begin{align} 
    \label{eq:obj_func}
    &\mathop{{\text{maximize}}}_{\bF, {\boldsymbol \epsilon},{\boldsymbol \Delta}} \;\; \sum_{k = 1}^{K}\;{R}_{k}^{\sf sec}(\bF, {\boldsymbol \epsilon},{\boldsymbol \Delta}; L),
    \\
    \label{eq:err_min}
    &\mathop{{\text{minimize}}}_{{\boldsymbol \epsilon}} \;\; \max \{\epsilon_1, \cdots, \epsilon_K\},
    \\
    \label{eq:del_min}
    &\mathop{{\text{minimize}}}_{{\boldsymbol \Delta}} \;\; \max\{\delta_{1,1},\cdots,\delta_{M,K}\},
    \\
    \label{eq:power_const} 
    &{\text{subject to}} \;\; {\rm {tr}}\left(\bF \bF^{\sf H}\right) \leq 1,
    \\
    & \qquad \qquad \;\; \epsilon_k \leq \hat{\epsilon}_k,\; \forall k \in \CMcal{K},
    \\
    & \qquad \qquad \;\; \delta_{m,k} \leq \hat{\delta}_{m,k},\; \forall m \in \CMcal{M}, \forall k \in \CMcal{K},
\end{align}
where \eqref{eq:err_min} and \eqref{eq:del_min} are the error probability and information leakage rate constraints, and \eqref{eq:power_const} is a transmit power constraint at the AP, respectively.
The main challenges in solving the optimization problem are: $1)$ the problem is the multi-objective optimization, $2)$ the objective function in \eqref{eq:obj_func} is not tractable due to the maximum function which is non-smooth, $3)$ the problem is inherently non-convex, and $4)$ the error probability and information leakage need to be considered in the constraints and they are intertwined with the secrecy rate as parameters of the back-off factors.
In this regard, it is infeasible to find the global optimal solution.

%%%%%%%%%%%%%%%%%%%%%%%%%%%%%%%%%%%%%%%%%%%%%%%%%%%%
\section{Proposed Joint Optimization Method} 
\label{sec:Proposed}
%%%%%%%%%%%%%%%%%%%%%%%%%%%%%%%%%%%%%%%%%%%%%%%%%%%%
In this section, we develop a novel optimization method for solving the problem in \eqref{eq:obj_func} by employing an alternating optimization approach.
We first begin with obtaining the secure precoder that maximizes the sum secrecy rate for given error probabilities and information leakage rates, and then solve the minimization problem in terms of the maximum error probability and information leakage rate for the obtained precoder.

%%%%%%%%%%%%%%%%%%%%%%%%%%%%%%%%%%%%%%%%%%%%%%%%%%%%%%%%%%
\subsection{Phase I: Sum Secrecy Rate Maximization}
\label{subsec:Phase1}
%%%%%%%%%%%%%%%%%%%%%%%%%%%%%%%%%%%%%%%%%%%%%%%%%%%%%%%%%%
In this phase, we find the optimal precoder that maximizes the sum secrecy rate while fixing $\epsilon_k$ and $\delta_{m,k}, \forall k\in \CMcal{K}, \forall m \in \CMcal{M}$.
Since the objective function in \eqref{eq:obj_func} is not smooth, it is necessary to make \eqref{eq:SecrecyRate} smooth to find a more tractable form.
To this end, we first adopt a LogSumExp technique \cite{shen2010dual} to approximate the maximum function with a parameter $\alpha$ as
\begin{equation}\label{eq:maxLogSum}
    \max_{i=1,\cdots,N}\{x_i\} \approx \frac{1}{\alpha}\ln \left(\sum_{i=1}^N \exp(x_i \alpha) \right),
\end{equation}
where the approximation becomes tight as $\alpha \rightarrow \infty$. 
Applying \eqref{eq:maxLogSum} to \eqref{eq:SecrecyRate}, we have
\begin{align}
    &\max_{m \in \mathcal{M}}\left\{R^{\sf e}_{m,k} + \sqrt{\frac{\CMcal{V}^{\sf e}_{m,k}}{L}} Q^{-1}(\delta_{m,k}) \log_{2} e \right\} 
    \\
    &\approx \; \frac{1}{\alpha} \ln \left[\sum_{m=1}^{M} \exp\left({\alpha R^{\sf e}_{m,k}} + \alpha \sqrt{\frac{2\rho^{\sf e}_{m,k}}{L(1 + \rho^{\sf e}_{m,k})}} Q^{-1}\left(\delta_{m,k}\right) \log_{2} e \right)\right]
    % \\ 
    % =&\; \frac{1}{\alpha} \ln \left[\sum_{m=1}^{M}\;\exp\left({\frac{\alpha}{\ln 2} \ln(1+\rho^{\sf e}_{m,k})+ \frac{\alpha}{\ln 2} \sqrt{\frac{2\rho^{\sf e}_{m,k}}{L(1+\rho^{\sf e}_{m,k})}}\; Q^{-1}(\delta_{m,k})}\right)\right]
    \\
    &=\; \frac{1}{\alpha} \ln \left[ \sum_{m=1}^{M} \left\{ {(1+\rho^{\sf e}_{m,k})}^{\frac{\alpha}{\ln 2}}\cdot\exp{\left(\frac{\alpha}{\ln 2} \sqrt{\frac{2\rho^{\sf e}_{m,k}}{L(1+\rho^{\sf e}_{m,k})}}\; Q^{-1}\left(\delta_{m,k}\right) \right)} \right\} \right]
    \\
    \label{eq:LogSumExP}
    % =&\; \frac{1}{\alpha} \log \left\{ \sum_{m=1}^{M}\;\exp\left({\alpha \log_{2}(1+\rho^{\sf e}_{m,k})}\right) \right\}
    % \overset{(a)}{=}&\; \frac{1}{\alpha} \log \left( \sum_{m=1}^{M}\;\exp\left({\alpha \frac{\ln(1+\rho^{\sf e}_{m,k})}{\ln {2}}+\alpha \sqrt{\frac{2\rho^{\sf e}_{m,k}}{L(1+\rho^{\sf e}_{m,k})}}\log_2 (e)\; Q^{-1} }\right) \right)\\
    % =&\; \frac{1}{\alpha} \log \left( \sum_{m=1}^{M}\; {(1+\rho^{\sf e}_{m,k})}^{\frac{\alpha}{\ln 2}}\cdot\exp{\left(\alpha \sqrt{\frac{2\rho^{\sf e}_{m,k}}{L(1+\rho^{\sf e}_{m,k})}}\log_2 (e)\; Q^{-1} \right)}\right)\\
    &=\; \Tilde{R}^{\sf e}_{k}.
\end{align}
% Lemma - Lower bound
Now, we introduce the following lemma \cite{choi:iotj:21}:
\begin{lemma}
    \label{lem:lowerbound}
    For any given $x$, $\Tilde{x}>0$, an upper bound of $\sqrt{2x/(1+x)}$ is obtained as
    \begin{align}
        \sqrt{\frac{2x}{1+x}} \leq q(\Tilde{x})\ln (1+x) + r(\Tilde{x}),
    \end{align}
    where $q(\Tilde{x}) = \frac{1}{\sqrt{2\Tilde{x}(1+\Tilde{x})}}$ and $r(\Tilde{x}) = \sqrt{\frac{2\Tilde{x}}{1+\Tilde{x}}} - q(\Tilde{x})\ln (1+\Tilde{x})$.
\end{lemma}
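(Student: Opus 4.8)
The plan is to recognize the claimed inequality as the tangent-line bound of a concave function, anchored at the point determined by $\Tilde{x}$. First I would apply the change of variable $t = \ln(1+x)$, so that $1+x = e^{t}$ and $x/(1+x) = 1 - e^{-t}$, and define
\begin{align}
    \phi(t) = \sqrt{\frac{2x}{1+x}} = \sqrt{2\left(1 - e^{-t}\right)}, \qquad t > 0.
\end{align}
Since the right-hand side of the lemma is affine in $\ln(1+x)$, i.e.\ equal to $q(\Tilde{x})\,t + r(\Tilde{x})$, the entire statement reduces to the one-variable inequality $\phi(t) \leq q(\Tilde{x})\,t + r(\Tilde{x})$.

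Next I would show that $\phi$ is strictly concave on $t > 0$. Differentiating gives $\phi'(t) = e^{-t}/\sqrt{2(1 - e^{-t})}$, and a second differentiation yields
\begin{align}
    \phi''(t) = -\frac{e^{-t}}{\sqrt{2(1 - e^{-t})}} - \frac{e^{-2t}}{\left(2(1 - e^{-t})\right)^{3/2}},
\end{align}
which is strictly negative for every $t > 0$ because both terms are negative. Concavity then supplies the tangent-line bound $\phi(t) \leq \phi(t_0) + \phi'(t_0)(t - t_0)$ for any anchor $t_0 > 0$ and all $t > 0$.

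Finally I would take the anchor to be $t_0 = \ln(1+\Tilde{x})$ and verify that the resulting tangent line is exactly the stated bound. Using $e^{-t_0} = 1/(1+\Tilde{x})$, I obtain $\phi(t_0) = \sqrt{2\Tilde{x}/(1+\Tilde{x})}$ and
\begin{align}
    \phi'(t_0) = \frac{1/(1+\Tilde{x})}{\sqrt{2\Tilde{x}/(1+\Tilde{x})}} = \frac{1}{\sqrt{2\Tilde{x}(1+\Tilde{x})}} = q(\Tilde{x}).
\end{align}
Hence the tangent line equals $q(\Tilde{x})\,t + \left[\phi(t_0) - q(\Tilde{x})\,t_0\right] = q(\Tilde{x})\ln(1+x) + r(\Tilde{x})$, where the bracketed intercept matches the definition of $r(\Tilde{x})$ exactly; substituting $t = \ln(1+x)$ back recovers the claim.

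I do not anticipate a substantive obstacle here: the whole content is the concavity of $\phi$. The only steps requiring care are keeping the signs straight through $\phi''$ and matching the tangent intercept with the defined $r(\Tilde{x})$, both of which are routine. Should the second-derivative computation feel awkward, an equivalent route is to define the gap $G(x) = q(\Tilde{x})\ln(1+x) + r(\Tilde{x}) - \sqrt{2x/(1+x)}$ and check $G(\Tilde{x}) = 0$, $G'(\Tilde{x}) = 0$, and $G'' > 0$; but the concavity argument is cleaner and exposes the geometric meaning of the bound as a first-order over-estimate at the anchor $\Tilde{x}$.
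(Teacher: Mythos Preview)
Your proposal is correct: the substitution $t=\ln(1+x)$ reduces the claim to a tangent-line inequality for the concave function $\phi(t)=\sqrt{2(1-e^{-t})}$, and your derivative computations and identification of $q(\Tilde{x})$, $r(\Tilde{x})$ with the slope and intercept at $t_0=\ln(1+\Tilde{x})$ are all accurate. The paper does not supply its own proof but defers to Lemma~2 of \cite{choi:iotj:21}; the concavity-plus-tangent argument you give is the standard route for this type of bound and is almost certainly what the cited reference does, so there is nothing to contrast.
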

\begin{proof}
Refer to the proof of Lemma 2 in \cite{choi:iotj:21}.
\end{proof}
Based on Lemma~\ref{lem:lowerbound} with \eqref{eq:LogSumExP}, we can obtain the lower bound of the approximated secrecy rate.
% \begin{corollary} 
%     \label{cor:lowerbound}
    For given $\tilde{\rho}_k$ and $\tilde{\rho}^{\sf e}_{m,k}$, the lower bound of \eqref{eq:SecrecyRate} is obtained as
    \begin{align}
        {R}^{\sf sec}_{k} \overset{(a)}{\approx}&\; R_k -  \sqrt{\frac{2\rho_k}{L(1 + \rho_k)}} Q^{-1}(\epsilon_k)\log_{2}e - \Tilde{R}^{\sf e}_{k} 
        \\
        \overset{(b)}{\geq}&\; \log_2 (1+\rho_k) - \frac{Q^{-1} (\epsilon_k)}{\sqrt{L}}q(\tilde{\rho}_{k})\ln (1+\rho_k)\log_2 e - \psi_k
        \nonumber \\
        & -\frac{1}{\alpha}\ln \left\{\sum_{m=1}^{M} {(1+\rho^{\sf e}_{m,k})}^{\frac{\alpha}{\ln 2}}\cdot\exp{\left(\ln (1+\rho^{\sf e}_{m,k})^{\frac{\alpha Q^{-1}(\delta_{m,k})}{\sqrt{L}\ln 2}q(\tilde{\rho}^{\sf e}_{m,k})} + \alpha \psi_{m,k}^{\sf e}\right)}\right\}
        % \\
        % =&\; \log_2 (1+\rho_k) - \log_2 (1+\rho_k)^{\frac{Q^{-1} (\epsilon_k)}{\sqrt{L}}q(\tilde{\rho}_{k})} - \psi_k
        % \\ \nonumber
        % & - \frac{1}{\alpha} \ln \left(\sum_{m=1}^{M} {\color{black}\beta_{m,k}} {(1+\rho^{\sf e}_{m,k})}^{\frac{\alpha}{\ln 2}}\cdot (1+\rho^{\sf e}_{m,k})^{\frac{\alpha Q^{-1}}{\sqrt{L}\ln 2}q(\tilde{\rho}^{\sf e}_{m,k})}\right)
        \\
        =&\; \log_2 (1+\rho_k)^{\omega_k} - \psi_k - \frac{1}{\alpha} \ln \left\{\sum_{m=1}^{M} {\color{black}\beta_{m,k}} (1 + \rho^{\sf e}_{m,k})^{\omega^{\sf e}_{m,k}}\right\} 
        \\
        % =&\; \log_2 (1+\rho_k)^{\omega_k} - \frac{1}{\alpha}\ln \left(\sum_{m=1}^{M}(1+\rho_{m,k}^{e})\right)^{\omega^{\sf e}_{m,k}}
        % \\ 
        =&\; {R}^{\sf sec,\sf lb}_{k}, \label{eq:lb_Rmk}
    \end{align}
    where $(a)$ comes from \eqref{eq:LogSumExP}, $(b)$ follows from Lemma~\ref{lem:lowerbound}, and
    \begin{align}
        &\psi_k = \frac{Q^{-1}(\epsilon_k) \log_2 e}{\sqrt{L}}r(\tilde{\rho}_k),
        \; \psi_{m,k}^{\sf e} = \frac{Q^{-1}(\delta_{m,k}) \log_2 e}{\sqrt{L}}r(\tilde{\rho}_{m,k}^{\sf e}), \;\beta_{m,k} = \exp{\left(\alpha \psi^{\sf e}_{m,k} \right)},
        \\
        &\omega_k =\; 1 - \frac{Q^{-1}(\epsilon_k)}{\sqrt{L}}q(\tilde{\rho}_{k}),
    \text{ and } \omega_{m,k}^{\sf e} =\; \frac{\alpha}{\ln{2}}\left(1 + \frac{Q^{-1}\left(\delta_{m,k}\right)}{\sqrt{L}} q(\tilde{\rho}^{\sf e}_{m,k})\right).
    \end{align}
% \begin{proof}
%     It can be derived by applying Lemma~\ref{lem:lowerbound} to \eqref{eq:SecrecyRate} with the approximated result \eqref{eq:LogSumExP}.
%     % reorganizing the equation.
% \end{proof}
% \end{corollary}
Since we pursue to solve \eqref{eq:obj_func} for given $\boldsymbol{\epsilon}$ and $\boldsymbol{\Delta}$ with the lower bound in \eqref{eq:lb_Rmk}, our problem is transformed to the single-objective maximization problem as
\begin{align} 
    \label{eq:obj_func_LB}
    \mathop{{\text{maximize}}}_{\bF}& \;\; \sum_{k = 1}^{K}\;{R}^{\sf sec,\sf lb}_{k}(\bF),
    \\
    \label{eq:power_const2} 
    {\text{subject to}} & \;\;{\rm {tr}}\left(\bF \bF^{\sf H}\right) \leq 1.
\end{align}

Next, to further obtain a compact rate expression with respect to the precoder, we vectorize the precoding matrix as
\begin{align} 
    \label{eq:precodingvector}
    \bar{\mathbf{f}} = {\rm{vec}}(\bF) = \left[\mathbf{f}^\top_{1},\mathbf{f}^\top_{2},\ldots,\mathbf{f}^\top_{K}\right]^\top \in \mathbb{C}^{NK}.
\end{align}
We set $\|\bar{\bf f}\|^2 = 1$, i.e., transmission with the maximum power, because the secrecy rate increases with the transmit power.
Then, we can finally reformulate the problem in  \eqref{eq:obj_func_LB} as the product of Rayleigh quotients:
\begin{align} 
    \label{eq:Object_GPI}
    \mathop{{\text{maximize}}}_{\bar{\mathbf{f}}}\; \sum_{k=1}^{K}\left[\log_2 \left(\frac{\bar{\mathbf{f}}^{\sf H}\mathbf{A}_{k}\bar{\mathbf{f}}}{\bar{\mathbf{f}}^{\sf H}\mathbf{B}_{k}\bar{\mathbf{f}}}\right)^{\omega_k} - \ln \left\{\sum_{m=1}^{M} {\color{black} \beta_{m,k}} \left(\frac{\bar{\mathbf{f}}^{\sf H}\mathbf{C}_{m}\bar{\mathbf{f}}}{\bar{\mathbf{f}}^{\sf H}\mathbf{D}_{m,k}\bar{\mathbf{f}}}\right)^{\omega_{m,k}^{\sf e}}\right\}^{ \frac{1}{\alpha}}\right]\!,
    % \mathop{{\text{maximize}}}_{\bar{\mathbf{f}}}&\; \log_2 \prod_{k=1}^{K}\left[ \left(\frac{\bar{\mathbf{f}}^{\sf H}\mathbf{A}_{k}\bar{\mathbf{f}}}{\bar{\mathbf{f}}^{\sf H}\mathbf{B}_{k}\bar{\mathbf{f}}}\right)^{\omega_k} - \ln \left\{\sum_{m=1}^{M} {\color{black} \beta_{m,k}} \left(\frac{\bar{\mathbf{f}}^{\sf H}\mathbf{C}_{m}\bar{\mathbf{f}}}{\bar{\mathbf{f}}^{\sf H}\mathbf{D}_{m,k}\bar{\mathbf{f}}}\right)^{\omega_{m,k}^{\sf e}}\right\}^{ \frac{1}{\alpha}}\right]\!,
\end{align}
where
\begin{align}
    \label{eq:A_k}
    &\mathbf{A}_k = \mathrm{blkdiag}\left(\gamma_k \mathbf{h}_{k} \mathbf{h}^{\sf H}_{k},\cdots, \gamma_k\mathbf{h}_{k} \mathbf{h}^{\sf H}_{k}\right) + \mathbf{I}_{NK}\frac{\sigma^2}{P} \in \mathbb{C}^{NK\times NK},
    \\
    \label{eq:B_k}
    &\mathbf{B}_k = \mathbf{A}_k - \mathrm{blkdiag}(\mathbf{0},\cdots, \gamma_k \mathbf{h}_{k}\mathbf{h}^{\sf H}_{k},\cdots, \mathbf{0}) \in \mathbb{C}^{NK\times NK},
    \\
    \label{eq:C_k}
    &\mathbf{C}_{m} = \mathrm{blkdiag}\left(\gamma^{\sf e}_{m} \bg_{m} \bg_{m}^{\sf H},\cdots, \gamma^{\sf e}_{m} \bg_{m} \bg_{m}^{\sf H}\right) + \mathbf{I}_{NK}\frac{\sigma^2_{\sf e}}{P} \in \mathbb{C}^{NK\times NK},
    \\
    \label{eq:D_mk}
    &\mathbf{D}_{m,k} = \mathbf{C}_{m} - \mathrm{blkdiag}(\mathbf{0},\cdots, \gamma^{\sf e}_{m} \bg_{m} \bg_{m}^{\sf H},\cdots, \mathbf{0}) \in \mathbb{C}^{NK\times NK}.
\end{align}
The second terms in both \eqref{eq:B_k} and \eqref{eq:D_mk} have nonzero blocks which are located at the $k$th diagonal block.
We note that the product of Rayleigh quotient forms in \eqref{eq:Object_GPI} is derived under the assumption $\|\bar{{\bf f}}\|^2=1$, and the problem in \eqref{eq:Object_GPI} is invariant up to the scaling of $\bar{\bf f}$. 
Accordingly, the power constraint in \eqref{eq:power_const2} is removed in the reformulated problem.

Now, we focus on identifying the local points of the problem in \eqref{eq:Object_GPI}. 
For simplicity, we define the objective function in \eqref{eq:Object_GPI} as
\begin{align} 
    \label{eq:log_obj}
    \cL_1(\bar{\bf f}) =&\; \log_2\prod_{k=1}^{K} \left[\left(\frac{\bar{\mathbf{f}}^{\sf H}\mathbf{A}_{k}\bar{\mathbf{f}}}{\bar{\mathbf{f}}^{\sf H}\mathbf{B}_{k}\bar{\mathbf{f}}}\right)^{\omega_k}\!\left\{\sum_{m=1}^{M} {\color{black} \beta_{m,k}} \left(\frac{\bar{\mathbf{f}}^{\sf H}\mathbf{C}_{m}\bar{\mathbf{f}}}{\bar{\mathbf{f}}^{\sf H}\mathbf{D}_{m,k}\bar{\mathbf{f}}}\right)^{\omega_{m,k}^{\sf e}}\right\}^{-\frac{\ln 2}{\alpha}}\right]
    \\
    \label{eq:lamb}
    =&\; \log_2 \lambda(\bar{\bf f}).
\end{align}
% We note that because of the logarithm's property, maximizing the objective function $\cL_1 (\bar{\bf f})$ is equivalent to maximizing $\lambda (\bar{\bf f})$.
% Now we pursue to maximize $\lambda (\bar{\bf f})$, it is, however, still non-convex optimization problem.
% To overcome this difficulty, we first begin with finding stationary points by using the following Lemma~\ref{lem:NEP}
Then, we derive Lemma~\ref{lem:NEP} to find the condition of stationary points of \eqref{eq:log_obj}.
\begin{lemma}
    \label{lem:NEP}
    The first-order KKT condition of the problem \eqref{eq:Object_GPI} is satisfied if
    \begin{align}
        \label{eq:general_eigen}
        \mathbf{B}^{-1}_{\sf KKT}(\bar{\mathbf{f}})\mathbf{A}_{\sf KKT}(\bar{\mathbf{f}})\bar{\mathbf{f}} = \lambda(\bar{\mathbf{f}})\bar{\mathbf{f}},
    \end{align}
where 
    \begin{align}
        \label{eq:A_KKT}
        &\mathbf{A}_{\sf KKT}(\bar{\mathbf{f}}) \!=\! \sum_{k=1}^{K}\left[\frac{\omega_k}{\ln 2}\left(\frac{\mathbf{A}_k}{\bar{\mathbf{f}}^{\sf H}\mathbf{A}_{k}\bar{\mathbf{f}}}\right) \!+\! \frac{1}{\alpha}\left\{\sum_{m=1}^{M}\left(\frac{\omega^{\sf e}_{m,k} {\color{black} \beta_{m,k}} \left(\frac{\bar{\mathbf{f}}^{\sf H}\mathbf{C}_{m}\bar{\mathbf{f}}}{\bar{\mathbf{f}}^{\sf H}\mathbf{D}_{m,k}\bar{\mathbf{f}}}\right)^{\omega_{m,k}^{\sf e}} \frac{\mathbf{D}_{m,k}}{\bar{\mathbf{f}}^{\sf H}\mathbf{D}_{m,k}\bar{\mathbf{f}}}}{\sum_{\ell =1}^{M} {\color{black} \beta_{m,k}}\left(\frac{\bar{\mathbf{f}}^{\sf H}\mathbf{C}_{\ell}\bar{\mathbf{f}}}{\bar{\mathbf{f}}^{\sf H}\mathbf{D}_{\ell,k}\bar{\mathbf{f}}} \right)^{\omega_{\ell,k}^{\sf e}}}\right)\right\}\right]\lambda_{\sf num}(\bar{\mathbf{f}}), \\
        \label{eq:B_KKT}
        &\mathbf{B}_{\sf KKT}(\bar{\mathbf{f}}) \!=\! \sum_{k=1}^{K}\left[\frac{\omega_k}{\ln 2}\left(\frac{\mathbf{B}_k}{\bar{\mathbf{f}}^{\sf H}\mathbf{B}_{k}\bar{\mathbf{f}}}\right) \!+\! \frac{1}{\alpha}\left\{\sum_{m=1}^{M}\left(\frac{\omega^{\sf e}_{m,k} {\color{black} \beta_{m,k}} \left(\frac{\bar{\mathbf{f}}^{\sf H}\mathbf{C}_{m}\bar{\mathbf{f}}}{\bar{\mathbf{f}}^{\sf H}\mathbf{D}_{m,k}\bar{\mathbf{f}}}\right)^{\omega_{m,k}^{\sf e} }\frac{\mathbf{C}_{m}}{\bar{\mathbf{f}}^{\sf H}\mathbf{C}_{m}\bar{\mathbf{f}}}}{\sum_{\ell=1}^{M} {\color{black} \beta_{m,k}}
        \left(\frac{\bar{\mathbf{f}}^{\sf H}\mathbf{C}_{\ell}\bar{\mathbf{f}}}{\bar{\mathbf{f}}^{\sf H}\mathbf{D}_{\ell,k}\bar{\mathbf{f}}} \right)^{\omega_{\ell,k}^{\sf e}}}\right)\right\}\right]\lambda_{\sf den}(\bar{\mathbf{f}}),
        \\
        % \label{eq:lamb}
        % &\lambda(\bar{\mathbf{f}}) = \lambda_{\sf num}(\bar{\mathbf{f}})/\lambda_{\sf den}(\bar{\mathbf{f}}),\\
        \label{eq:lamb_num}
        &\lambda_{\sf num}(\bar{\mathbf{f}}) = \prod_{k=1}^{K} \left(\frac{\bar{\mathbf{f}}^{\sf H}\mathbf{A}_{k}\bar{\mathbf{f}}}{\bar{\mathbf{f}}^{\sf H}\mathbf{B}_{k}\bar{\mathbf{f}}}\right)^{\omega_k},\;
        % \label{eq:lamb_den}
        \lambda_{\sf den}(\bar{\mathbf{f}}) = \prod_{k=1}^{K} \left\{\sum_{m=1}^{M} {\color{black} \beta_{m,k}} \left(\frac{\bar{\mathbf{f}}^{\sf H}\mathbf{C}_{m}\bar{\mathbf{f}}}{\bar{\mathbf{f}}^{\sf H}\mathbf{D}_{m,k}\bar{\mathbf{f}}}\right)^{\omega_{m,k}^{\sf e}}\right\}^{\frac{\ln 2}{\alpha }}.
    \end{align}
\end{lemma}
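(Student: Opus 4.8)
The plan is to exploit the fact that the objective of \eqref{eq:Object_GPI} is \emph{exactly} $\cL_1(\bar{\mathbf{f}})=\log_2\lambda(\bar{\mathbf{f}})$ of \eqref{eq:log_obj}--\eqref{eq:lamb}, so that, since $\log_2(\cdot)$ is strictly increasing, its stationary points coincide with those of $\lambda(\bar{\mathbf{f}})=\lambda_{\sf num}(\bar{\mathbf{f}})/\lambda_{\sf den}(\bar{\mathbf{f}})$. The crucial structural observation is that $\lambda$ is a product of ratios of Hermitian quadratic forms in $\bar{\mathbf{f}}$, hence invariant under $\bar{\mathbf{f}}\mapsto c\,\bar{\mathbf{f}}$ for any $c\neq 0$, i.e.\ degree-zero homogeneous. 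Forming the Lagrangian $\cL_1(\bar{\mathbf{f}})-\mu(\|\bar{\mathbf{f}}\|^2-1)$ for the constraint \eqref{eq:power_const2} and using Euler's homogeneous-function identity $\bar{\mathbf{f}}^{\sf H}\,\frac{\partial \cL_1}{\partial \bar{\mathbf{f}}^*}=0$ together with phase invariance, the stationarity condition $\frac{\partial \cL_1}{\partial \bar{\mathbf{f}}^*}=\mu\,\bar{\mathbf{f}}$ forces $\mu\|\bar{\mathbf{f}}\|^2=0$, so $\mu=0$. Thus the first-order KKT condition reduces to the unconstrained identity $\frac{\partial \cL_1}{\partial \bar{\mathbf{f}}^*}=\mathbf{0}$, and the whole task becomes rewriting this gradient equation in the claimed generalized-eigenvalue form.

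Next I would compute $\frac{\partial \cL_1}{\partial \bar{\mathbf{f}}^*}$ by Wirtinger calculus on $\cL_1=\log_2\lambda_{\sf num}-\log_2\lambda_{\sf den}$, using $\frac{\partial}{\partial \bar{\mathbf{f}}^*}(\bar{\mathbf{f}}^{\sf H}\mathbf{M}\bar{\mathbf{f}})=\mathbf{M}\bar{\mathbf{f}}$ for Hermitian $\mathbf{M}$ and the chain rule. The numerator term yields $\frac{1}{\ln 2}\sum_{k}\omega_k\big(\frac{\mathbf{A}_k\bar{\mathbf{f}}}{\bar{\mathbf{f}}^{\sf H}\mathbf{A}_k\bar{\mathbf{f}}}-\frac{\mathbf{B}_k\bar{\mathbf{f}}}{\bar{\mathbf{f}}^{\sf H}\mathbf{B}_k\bar{\mathbf{f}}}\big)$, while differentiating each log-sum-exp block $T_k=\sum_m \beta_{m,k}\big(\frac{\bar{\mathbf{f}}^{\sf H}\mathbf{C}_m\bar{\mathbf{f}}}{\bar{\mathbf{f}}^{\sf H}\mathbf{D}_{m,k}\bar{\mathbf{f}}}\big)^{\omega^{\sf e}_{m,k}}$ produces the softmax-type weights $\beta_{m,k}(\cdot)^{\omega^{\sf e}_{m,k}}/T_k$ multiplying $\omega^{\sf e}_{m,k}\big(\frac{\mathbf{C}_m\bar{\mathbf{f}}}{\bar{\mathbf{f}}^{\sf H}\mathbf{C}_m\bar{\mathbf{f}}}-\frac{\mathbf{D}_{m,k}\bar{\mathbf{f}}}{\bar{\mathbf{f}}^{\sf H}\mathbf{D}_{m,k}\bar{\mathbf{f}}}\big)$, the whole contribution scaled by $\frac{1}{\alpha}$.

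Setting the gradient to zero, I would then collect the matrices acting with a positive sign, namely $\{\mathbf{A}_k\}$ together with $\{\mathbf{D}_{m,k}\}$, into one $\bar{\mathbf{f}}$-dependent operator, and those with a negative sign, $\{\mathbf{B}_k\}$ together with $\{\mathbf{C}_m\}$, into another, obtaining an identity $\tilde{\mathbf{A}}(\bar{\mathbf{f}})\bar{\mathbf{f}}=\tilde{\mathbf{B}}(\bar{\mathbf{f}})\bar{\mathbf{f}}$. Multiplying $\tilde{\mathbf{A}}$ by $\lambda_{\sf num}(\bar{\mathbf{f}})$ and $\tilde{\mathbf{B}}$ by $\lambda_{\sf den}(\bar{\mathbf{f}})$ reproduces exactly $\mathbf{A}_{\sf KKT}$ and $\mathbf{B}_{\sf KKT}$ of \eqref{eq:A_KKT}--\eqref{eq:B_KKT}; since $\lambda=\lambda_{\sf num}/\lambda_{\sf den}$, the stationarity identity is equivalent to $\mathbf{A}_{\sf KKT}(\bar{\mathbf{f}})\bar{\mathbf{f}}=\lambda(\bar{\mathbf{f}})\,\mathbf{B}_{\sf KKT}(\bar{\mathbf{f}})\bar{\mathbf{f}}$, and left-multiplying by $\mathbf{B}_{\sf KKT}^{-1}$ gives \eqref{eq:general_eigen}. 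I would emphasize that the role of the $\lambda_{\sf num},\lambda_{\sf den}$ rescaling is precisely to make the eigenvalue equal the objective value $\lambda$, which is what later licenses the power-iteration interpretation (largest $\lambda$ $\leftrightarrow$ best local optimum).

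The step I expect to be the main obstacle is the bookkeeping in the gradient of the nested $\log$-of-sum-of-powers-of-Rayleigh-quotient term: the chain rule must be applied through three layers (the outer $\ln$, the exponentiated ratio, and the quadratic forms), and the softmax weights $\beta_{m,k}(\cdot)^{\omega^{\sf e}_{m,k}}/T_k$ must stay correctly attached to the $\mathbf{C}_m$ and $\mathbf{D}_{m,k}$ contributions when regrouping into $\mathbf{A}_{\sf KKT}$ and $\mathbf{B}_{\sf KKT}$. Two lighter points also warrant a remark: the vanishing of the Lagrange multiplier, handled by the homogeneity argument above, and the well-posedness of \eqref{eq:general_eigen}, which requires $\mathbf{B}_{\sf KKT}(\bar{\mathbf{f}})$ to be invertible. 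The latter holds because each $\mathbf{B}_k$ and $\mathbf{C}_m$ is positive definite (they retain the $\frac{\sigma^2}{P}\mathbf{I}_{NK}$ and $\frac{\sigma_{\sf e}^2}{P}\mathbf{I}_{NK}$ terms), and under the mild condition $\omega_k>0$ all combining coefficients and the scalar $\lambda_{\sf den}$ are positive, so $\mathbf{B}_{\sf KKT}(\bar{\mathbf{f}})\succ\mathbf{0}$.
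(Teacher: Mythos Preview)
Your proposal is correct and follows essentially the same route as the paper: compute the Wirtinger gradient of $\cL_1(\bar{\mathbf{f}})$, split the Rayleigh-quotient contributions into ``positive'' ($\mathbf{A}_k,\mathbf{D}_{m,k}$) and ``negative'' ($\mathbf{B}_k,\mathbf{C}_m$) matrix groups, and rescale by $\lambda_{\sf num}$ and $\lambda_{\sf den}$ to obtain the generalized eigenvalue form. Two places where you are actually more careful than the paper: you justify dropping the power constraint via Euler's homogeneity identity (the paper simply removes the constraint by scale invariance before stating the lemma), and your invertibility argument for $\mathbf{B}_{\sf KKT}$ via positive definiteness is the right one, whereas the paper only invokes Hermitianity, which by itself does not guarantee invertibility.
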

\begin{proof}
See Appendix \ref{pf:KKT_condition}. 
\end{proof}

% We analyze the derived condition in \eqref{eq:general_eigen}. 
We note that the first-order KKT condition in \eqref{eq:general_eigen} can be interpreted as a generalized eigenvalue problem $\mathbf{B}^{-1}_{\sf KKT}(\bar{\mathbf{f}})\mathbf{A}_{\sf KKT}(\bar{\mathbf{f}})\bar{\mathbf{f}} = \lambda(\bar{\mathbf{f}})\bar{\mathbf{f}}$.
Here, $\lambda(\bar{\bf f})$ is  as an eigenvalue of ${\bf{B}}_{\sf KKT}^{-1}(\bar {\bf{f}}) {\bf{A}}_{\sf KKT}(\bar {\bf{f}})$ with $\bar{\bf f}$ as a corresponding eigenvector.
As a result, maximizing the objective function $\cL_1 (\bar{\bf f})$ is equivalent to maximizing $\lambda(\bar{\bf f})$.
Therefore, it is desirable to find the principal eigenvalue of \eqref{eq:general_eigen} to maximize \eqref{eq:lamb}, which is equivalent to finding the best local optimal solution of \eqref{eq:Object_GPI}.

%%%%%%%%%%%%%%%%%%%%%%%%%%%%%%%%%%%%%%%%%%%%%%%%%%%%%%%%%%%%
\begin{algorithm} [t]
\caption{Sum Secrecy Rate Maximization} 
\label{alg:algorithm_1} 
{\bf{initialize}}: $\bar {\bf{f}}^{(0)}$, $t = 1$.
\\
\While {$\left\|\bar {\bf{f}}^{(t)} - \bar {\bf{f}}^{(t-1)} \right\| > \varepsilon$ \& $t \leq t_{\max} $}{
Build $ {\bf{A}}_{\sf KKT} (\bar {\bf{f}}^{(t-1)})$ and ${\bf{B}}_{\sf KKT} (\bar {\bf{f}}^{(t-1)})$ according to \eqref{eq:A_KKT} and \eqref{eq:B_KKT} for given $\epsilon_k$ and $\delta_{m,k}$.
\\
Compute $\bar {\bf{f}}^{(t)} = {\bf{B}}^{-1}_{\sf KKT} (\bar {\bf{f}}^{(t-1)}) {\bf{A}}_{\sf KKT} (\bar {\bf{f}}^{(t-1)}) \bar {\bf{f}}^{(t-1)}$. 
\\
Normalize $\bar {\bf{f}}^{(t)} = \bar {\bf{f}}^{(t)}/\left\| \bar {\bf{f}}^{(t)}\right\|$.
\\
 $t \leftarrow t+1$.}
%\State Update {$R_{{\rm e}, m} $} as in \eqref{eq:wiretap_se} with $\bar {\bf{f}}^{(t)}$ for $m \in \mathcal{M}$
%\State Update {$w_m $} as in \eqref{eq:weight}
$\bar {\bf{f}}^\star \leftarrow \bar {\bf{f}}^{(t)}$
\\
%\State {\bf{output}}: $\lfloor \left(b_n^{\star}\right)^+ \rfloor$ for $n \in \{1,...,N\}$
%\For {$i = 1:K$}{
%%\State {${\bf{f}}_{i} = \bar {\bf{f}}^{(t)} ((i-1)N:iN)$}
%\If{$\left\| {\bf{f}}_i \right\|_2^2 \le \tau$}
% {${\bf{f}}_i = {\bf{0}}$}
%}
\Return{\ }{$\bar{{\bf f}}^{\star} = \left[{\bf{f}}_1^{\sf T}, {\bf{f}}_2^{\sf T}, \dots, {\bf{f}}_K^{\sf T} \right]^{\sf T} $}.
%\end{algorithmic}
\end{algorithm}
%%%%%%%%%%%%%%%%%%%%%%%%%%%%%%%%%%%%%%%%%%%%%%%%%%%%%%%%%%%%
% red
Based on \eqref{eq:general_eigen}, we propose the sum secrecy rate maximization precoding algorithm by adopting the generalized power iteration (GPI) method  \cite{choi:twc:20}. 
As described in Algorithm~\ref{alg:algorithm_1}, we initialize  $\bar \bff^{(0)}$ and update $\bar \bff^{(t)}$ at each iteration; with the given $L$, $\boldsymbol{\epsilon}$, and $\boldsymbol{\Delta}$, the algorithm builds  ${\bf{A}}_{\sf KKT} (\bar {\bf{f}}^{(t-1)})$ and ${\bf{B}}_{\sf KKT} (\bar {\bf{f}}^{(t-1)})$ according to \eqref{eq:A_KKT} and \eqref{eq:B_KKT}. 
Then, the algorithm updates $\bar \bff^{(t)}$  by computing $\bar {\bf{f}}^{(t)} = {\bf{B}}^{-1} _{\sf KKT} (\bar {\bf{f}}^{(t-1)}){\bf{A}}_{\sf KKT} (\bar {\bf{f}}^{(t-1)}) \bar {\bf{f}}^{(t-1)}$ and normalizing as $\bar {\bf{f}}^{(t)} = \bar {\bf{f}}^{(t)}/ \left\| \bar {\bf{f}}^{(t)} \right\|$.
We repeat these steps until either $\bar {\bf{f}}^{(t)}$ converges to a tolerance level $\varepsilon$ or the algorithm reaches  $t_{\rm max}$.

The complexity of Algorithm~\ref{alg:algorithm_1} is dominated by the inversion in ${\bf{B}}_{\sf KKT}^{-1} (\bar {\bf{f}})$.
Since ${\bf{B}}_{\sf KKT} (\bar {\bf{f}})$ is a block-diagonal and symmetric matrix, 
we implement the inversion by exploiting a divide-and-conquer method so that we only need a computational complexity order of $\CMcal{O}\left(\frac{1}{3}K N^3\right)$ \cite{choi:twc:20} instead of $\CMcal{O}\left(K^3 N^3\right)$.
In this regard, the total complexity is $\CMcal{O} \left(\frac{1}{3}TKN^3\right)$, where $T$ is the number of iterations.
We note that the complexity of Algorithm~\ref{alg:algorithm_1} is same as a representative low-complexity framework in sum rate maximization (not even secrecy rate), namely, the weighted minimum mean square error method \cite{chris:twc:08}. 
In \cite{mei2017artificial}, a AN-aided transmission method was proposed based on a sequence of semi-definite programs (SDP) needs a computational complexity order of $\CMcal{O}\left(N^{6.5}\right)$ for a single confidential message.
In \cite{shi2020artificial}, the SDP-based algorithm was also proposed  for a single user and multiple eavesdroppers with a complexity order of $\CMcal{O} \left(N^{8.5}\right)$.
In addition, a jamming noise-aided precoding algorithm was developed 
% by applying a rank relaxation method 
for multiple users and eavesdroppers, which needs a  complexity order of $\CMcal{O} \left((N+K)^3\right)$ \cite{nguyen2016joint}.
Thus, we emphasize that the complexity of Algorithm~\ref{alg:algorithm_1} is low compared to the existing algorithms.
% and a hybrid precoding method solving secrecy rate maximization problem based on an efficient lower bound in a closed-form expression needs a computational complexity order of $\CMcal{O}\left(N^{7}\right)$ \cite{xia2020hybrid} with a single user and eavesdropper.
% ----------------------------------------------------------------------------------
% Remark: Complexity
\begin{remark} 
[Secure Precoding in the FBL Regime]\normalfont
Since Algorithm~\ref{alg:algorithm_1} maximizes the secrecy rate in the FBL regime under given $\boldsymbol{\epsilon}$ and $\boldsymbol{\Delta}$ with low complexity, Algorithm~\ref{alg:algorithm_1} is considered to be a computationally  efficient secure precoding algorithm in the FBL-based communications.
\end{remark} 
%----------------------------------------------------------------------------------
% a weighted minimum mean squared error algorithm based on SOCP \cite{christensen2008weighted} with computational complexity order $\cO \left(T(KN)^{3.5}\right)$, 

%%%%%%%%%%%%%%%%%%%%%%%%%%%%%%%%%%%%%%%%%%%%%%%%%%%%%%%%%%%%%%%%
\subsection{Phase II: Maximum Error Probability and Information Leakage Rate Minimization}
\label{subsec:Phase2}
%%%%%%%%%%%%%%%%%%%%%%%%%%%%%%%%%%%%%%%%%%%%%%%%%%%%%%%%%%%%%%%%
We find the optimal ${\boldsymbol \epsilon}$ and ${\boldsymbol \Delta}$ for fixed $\bF$.
To this end, we transform the multi-objective problem to a single-objective form and  derive a solution
for the transformed problem.
% To do this, we represent the sum secrecy rate as
% $
% {R}^{\sf sec}_{\Sigma} = \sum_{k=1}^{K} {R}_{k}^{\sf sec}(\bF, \epsilon_k , \delta_{m,k}; L)
% $.
Adopting the weighted-sum approach \cite{FBL:WCL:2017}, the multi-objective problem in \eqref{eq:obj_func} for given $\bF$ is converted to
\begin{align}
    \label{eq:multi_obj}
    &\mathop{{\text{maximize}}}_{{\boldsymbol \epsilon}, {\boldsymbol \Delta}} \;\; \frac{w}{R_{\infty}} \sum_{k=1}^{K} {R}_{k}^{\sf sec}(\boldsymbol \epsilon , \boldsymbol \Delta) \!+\! (1-w)\left(\frac{\hat{\epsilon}_{\sf max} - {\rm max}\{{\boldsymbol \epsilon}\}}{\hat{\epsilon}_{\sf max}} + \frac{\hat{\delta}_{\sf max} - {\rm max}\{{\boldsymbol \Delta}\}}{\hat{\delta}_{\sf max}}\right),
    \\
    &{\text{subject to}} \;\; \epsilon_k \leq \hat{\epsilon}_k,\; \forall k \in \CMcal{K},
    \\
    & \qquad \qquad \;\; \delta_{m,k} \leq \hat{\delta}_{m,k},\; \forall m \in \CMcal{M}, \forall k \in \CMcal{K},
\end{align}
where $R_{\infty}$ indicates a normalization constant that can be obtained as the sum secrecy rate in the infinite blocklength regime computed with a state-of-the-art secure precoder, $\hat{\epsilon}_{\sf max} = {\rm max}\{\hat{\epsilon}_1 ,\cdots, \hat{\epsilon}_K\}$, $\hat{\delta}_{\sf max} = {\rm max}\{\hat{\delta}_{1,1} ,\cdots, \hat{\delta}_{M,K}\}$, and the weight is $w \in [0, 1]$.
% Since the back-off factor has the maximum function, it is not tractable.
To solve \eqref{eq:multi_obj}, we need to handle the maximum function in $R_k^{\sf sec}(\epsilon_k,\delta_{m,k})$.
We consider the upper bound by adding the wiretap rates as $\max_{m \in \CMcal{M}}\left\{R^{\sf e}_{m,k} + \sqrt{\frac{\CMcal{V}_{m,k}^{\sf e}}{L}}Q^{-1}\left(\delta_{m,k}\right)\right\} \leq \sum_{m = 1}^{M} \left( R^{\sf e}_{m,k} +  \sqrt{\frac{\CMcal{V}^{\sf e}_{m,k}}{L}}Q^{-1}\left(\delta_{m,k}\right) \right)$.
We note that although the bound  is not tight and the gap increases with the number of eavesdroppers, the proposed method based on the bound will show high performance for many eavesdroppers in Section~\ref{subsec:Phase1}.
% Now, the terms that are not a function of either $\epsilon_k$ or $\delta_{m,k}$ in \eqref{eq:multi_obj} can be removed.
Subsequently, letting $\tau = {\rm max}\{{\boldsymbol \epsilon}\}$ and $\xi = {\rm max}\{{\boldsymbol \Delta}\}$ and removing the terms that are not a function of either $\epsilon_k$ or $\delta_{m,k}$, the problem in \eqref{eq:multi_obj} is further transformed to the minimization problem as 
\begin{align}
    \label{eq:Error_obj}
    &\mathop{{\text{minimize}}}_{{\boldsymbol \epsilon}, {\boldsymbol \Delta}} \;\; \frac{w}{R_{\infty}} \sum_{k=1}^{K} \left[\sqrt{\frac{\CMcal{V}_k}{L}}Q^{-1}(\epsilon_k)
    + \sum_{m = 1}^{M} \left(\sqrt{\frac{\CMcal{V}^{\sf e}_{m,k}}{L}}Q^{-1}\left(\delta_{m,k}\right) \right)\right]
    +(1-w)\left(\frac{\tau}{\hat{\epsilon}_{\sf max}} + \frac{\xi}{\hat{\delta}_{\sf max}}\right),
    \\
    &{\text{subject to}} 
    \label{eq:constr_ep_1}
    \;\; \epsilon_k \leq \tau,
    \\
    \label{eq:constr_ep_2}
    &\qquad \qquad \;\; 0 \leq \epsilon_k \leq \hat{\epsilon}_{k},
    \\
    \label{eq:constr_ep_3}
    & \qquad \qquad \;\; 0 \leq \tau \leq \hat{\epsilon}_{\sf max},
    \\ 
    \label{eq:constr_del_1}
    & \qquad \qquad \;\; \delta_{m,k} \leq \xi,
    \\
    \label{eq:constr_del_2}
    &\qquad \qquad \;\; 0 \leq \delta_{m,k} \leq \hat{\delta}_{m,k},
    \\
    \label{eq:constr_del_3}
    & \qquad \qquad \;\; 0 \leq \xi \leq \hat{\delta}_{\sf max},
\end{align}
where \eqref{eq:constr_ep_2} and \eqref{eq:constr_del_2} are the error probability and information leakage  rate constraints, and
\eqref{eq:constr_ep_3} and \eqref{eq:constr_del_3} indicate the maximum error probability assumption and information leakage assumptions.
We note that the first term in \eqref{eq:Error_obj} is monotonically increasing while the second term in \eqref{eq:Error_obj} is monotonically decreasing with decreasing error probability and information leakage.
% Because the back-off factors is a decreasing function with $\epsilon_k$ and $\delta_{m,k}$, $\forall k \in  \CMcal{K},\; \forall  m \in \CMcal{M}$.
Therefore, the optimal tradeoff exists among the back-off factors, error probability, and information leakage rate which is found by solving the KKT conditions \cite{Boyd:Convex:2004} of \eqref{eq:Error_obj}.
\begin{lemma}
    \label{lem:ErrorMin}
    The optimal upper decoding error probability and information leakage rate for given $\ell, j(k) \in \bbN_{+}$ are derived as
    \begin{align}
        \label{eq:tau_star}
        &\tau^{\star} = Q\left(\sqrt{2 \ln \left(\frac{\sqrt{L}(1-w)R_{\infty}}{\hat{\epsilon}_{\sf max}w\sqrt{2\pi}\sum_{k=\ell}^{K}\sqrt{\CMcal{V}_k}}\right)}\right),
        \\
        \label{eq:xi_star}
        &\xi^{\star} = Q\left(\sqrt{2 \ln \left(\frac{\sqrt{L}(1-w)R_{\infty}}{\hat{\delta}_{\sf max}w\sqrt{2\pi}\sum_{m=j(k)}^{M}\sum_{k=1}^{K}\sqrt{\CMcal{V}_{m,k}^{\sf e}}}\right)}\right).
    \end{align}
    % where $R_{\infty} = \sum_{k} R_k - \max_{m \in \CMcal{M}} \left\{R^{\sf e}_{m,k}\right\}$ is obtained by our GPI-based precoder shown in Algorithm~\ref{alg:algorithm_1} with infinite blocklength.
    Then, without loss of generality, we can assume that $\hat{\epsilon}_{\ell-1} < \tau < \hat{\epsilon}_{\ell}$ for some $\ell$ and $\hat{\delta}_{j(k)-1,k} < \xi < \hat{\delta}_{j(k),k}$ for some $j(k)$. 
    Then, the optimal solution of the problem in \eqref{eq:Error_obj} is
    \begin{align}
        &{\boldsymbol{\epsilon}}^{\star} = [\hat{\epsilon}_1, \cdots, \hat{\epsilon}_{\ell -1},\tau^{\star},\cdots,\tau^{\star}]^{\sf T} \in \bbR_{+}^{K},
        \\
        &{\boldsymbol{\Delta}}^{\star} = [{\boldsymbol{\delta}}_{1}^{\star}, \cdots, {\boldsymbol{\delta}}_{K}^{\star}] \in \bbR_{+}^{M \times K},
    \end{align}
    where 
    % each individual information leakage vector is optimized as
    \begin{align}
        {\boldsymbol{\delta}}_{k}^{\star} = [\hat{\delta}_{1,k}, \cdots, \hat{\delta}_{j(k) -1,k},\xi^{\star},\cdots,\xi^{\star}]^{\sf T} \in \bbR_{+}^{M},\; \forall k \in \CMcal{K}.
    \end{align}
    If $\hat{\epsilon}_K < \tau^{\star}$, the optimal error probability becomes
    \begin{align}
        {\boldsymbol{\epsilon}}^{\star} = [\hat{\epsilon}_1, \cdots, \hat{\epsilon}_K]^{\sf T}.
    \end{align}
    Similarly, if $\hat{\delta}_{M,k} < \xi^{\star}, \forall k \in \CMcal{K}$, the optimal information leakage rate becomes
    \begin{align}
        {\boldsymbol{\delta}}^{\star}_k = [\hat{\delta}_{1,k}, \cdots, \hat{\delta}_{M,k}]^{\sf T}, \; \forall k \in \CMcal{K}.
    \end{align}
\end{lemma}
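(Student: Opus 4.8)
The plan is to treat \eqref{eq:Error_obj} as a convex program and characterize its minimizer through the KKT conditions, exploiting the fact that the objective and all constraints separate into an $\epsilon$--$\tau$ block and a $\delta$--$\xi$ block that can be optimized independently. First I would form the Lagrangian, attaching multipliers $\mu_k \ge 0$ to $\epsilon_k \le \tau$, $\nu_k \ge 0$ to $\epsilon_k \le \hat{\epsilon}_k$, and $\lambda_k \ge 0$ to $\epsilon_k \ge 0$ (with analogous multipliers for the $\tau$-bounds and for the entire $\delta$-block). Since the precoder, and hence each $\rho_k$, is frozen in Phase~II, every $\CMcal{V}_k$ is a constant, so the only nonlinearity enters through $Q^{-1}(\cdot)$. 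The single computational fact I need is
\[
\frac{d}{dx}Q^{-1}(x) = -\sqrt{2\pi}\,\exp\!\left(\tfrac{1}{2}\big(Q^{-1}(x)\big)^2\right),
\]
obtained by differentiating $Q\big(Q^{-1}(x)\big)=x$ and using $Q'(x)=-\tfrac{1}{\sqrt{2\pi}}e^{-x^2/2}$.

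The key structural step is to argue that no $\epsilon_k$ can sit strictly inside its box. Stationarity in $\epsilon_k$ reads $\frac{w}{R_\infty}\sqrt{\CMcal{V}_k/L}\,\frac{d}{d\epsilon_k}Q^{-1}(\epsilon_k) + \mu_k + \nu_k - \lambda_k = 0$; because the displayed derivative is strictly negative, an interior point ($\mu_k=\nu_k=\lambda_k=0$) is impossible, so one of the upper-bound constraints must be active. As the objective is strictly decreasing in each $\epsilon_k$, the minimizer pushes $\epsilon_k$ to its largest feasible value, giving $\epsilon_k^\star = \min\{\tau,\hat{\epsilon}_k\}$. Ordering the thresholds and writing $\hat{\epsilon}_{\ell-1} < \tau < \hat{\epsilon}_\ell$ then yields exactly $\boldsymbol{\epsilon}^\star = [\hat{\epsilon}_1,\dots,\hat{\epsilon}_{\ell-1},\tau^\star,\dots,\tau^\star]^{\sf T}$, since the indices $k\ge\ell$ are precisely those for which $\epsilon_k=\tau$ is active ($\nu_k=0$) and $k<\ell$ are those for which $\epsilon_k=\hat{\epsilon}_k$ is active ($\mu_k=0$).

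To pin down $\tau^\star$ I would use the two remaining conditions. For $k\ge\ell$ the $\epsilon_k$-stationarity gives $\mu_k = \frac{w}{R_\infty}\sqrt{\CMcal{V}_k/L}\,\sqrt{2\pi}\,e^{(Q^{-1}(\tau))^2/2}$, and interior stationarity in $\tau$ (with $0<\tau<\hat{\epsilon}_{\sf max}$) gives $(1-w)/\hat{\epsilon}_{\sf max} = \sum_{k\ge\ell}\mu_k$, the lower-index multipliers vanishing by complementary slackness. Substituting and solving the resulting scalar equation for $\tau$ (take logs, then invert $Q$) produces the closed form \eqref{eq:tau_star}. The boundary case is handled by noting that when the interior root satisfies $\hat{\epsilon}_K<\tau^\star$, the stationary $\tau$ lies outside $[0,\hat{\epsilon}_{\sf max}]$, so the $\tau$-constraint binds, $\min\{\tau,\hat{\epsilon}_k\}=\hat{\epsilon}_k$ for every $k$, and $\boldsymbol{\epsilon}^\star=[\hat{\epsilon}_1,\dots,\hat{\epsilon}_K]^{\sf T}$. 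The $\delta$--$\xi$ block is identical after relabeling the single index $k$ by the pair $(m,k)$: the same argument forces $\delta_{m,k}^\star=\min\{\xi,\hat{\delta}_{m,k}\}$, and summing the $\delta_{m,k}$-stationarity over the active pairs $\{(m,k):\hat{\delta}_{m,k}\ge\xi\}$, enumerated by the per-user threshold $j(k)$, gives \eqref{eq:xi_star} together with the stated $\boldsymbol{\Delta}^\star$.

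The main obstacle I anticipate is not the calculus but the combinatorial active-set bookkeeping: the answer is implicit because the threshold index $\ell$ (resp.\ $j(k)$) appearing inside $\sum_{k=\ell}^K\sqrt{\CMcal{V}_k}$ itself depends on $\tau^\star$. I would therefore present \eqref{eq:tau_star}--\eqref{eq:xi_star} as a self-consistency (fixed-point) condition and justify selecting the unique consistent $\ell$ by convexity: since $Q^{-1}$ is convex on $(0,\tfrac12)$ — its second derivative equals $2\pi e^{y^2}y$, which is positive for $y=Q^{-1}(x)>0$, the regime of interest for small error probabilities — the program is convex over linear constraints, so any KKT point is the unique global minimizer and the ordering assumption is automatically consistent.
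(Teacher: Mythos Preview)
Your proposal is correct and follows essentially the same approach as the paper: form the Lagrangian, invoke convexity of $Q^{-1}$ on $(0,\tfrac12)$, use the derivative identity $\frac{d}{dx}Q^{-1}(x)=-\sqrt{2\pi}\,e^{(Q^{-1}(x))^2/2}$, argue from stationarity and complementary slackness that each $\epsilon_k^\star$ (resp.\ $\delta_{m,k}^\star$) equals either $\tau^\star$ or $\hat\epsilon_k$ (resp.\ $\xi^\star$ or $\hat\delta_{m,k}$), and then combine the $\epsilon_k$- and $\tau$-stationarity conditions over the active indices to solve for $\tau^\star$ and $\xi^\star$. Your added discussion of the active-set self-consistency and the explicit formulation $\epsilon_k^\star=\min\{\tau,\hat\epsilon_k\}$ are helpful clarifications but not a departure from the paper's argument.
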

\begin{proof}
    See Appendix~\ref{pf:ErrorMin}.
\end{proof}

% --------------------------------------------------------------------------------
\begin{remark}
    [Optimality of Solutions]\normalfont
    % Recall that we apply \eqref{eq:UB_EVErate} to \eqref{eq:multi_obj} and
    The derived solutions in Lemma~\ref{lem:ErrorMin} are the optimal error probability solution and suboptimal information leakage rate solution  for the problem in \eqref{eq:multi_obj}.
    % by solving the optimization problem in .
    For a special case $m=1$, both the derived solutions are optimal.
    % it is taken into account the optimal solution of the original problem.
    % In addition, it is the optimal scheme for optimizing only error probability.
\end{remark}
% --------------------------------------------------------------------------------

% %%%%%%%%%%%%%%%%%%%%%%%%%%%%%%%%%%%%%%%%%%%%%%
\subsection{Joint Secure Precoding Algorithm}
\label{subsec:proposed_algorithm}
% %%%%%%%%%%%%%%%%%%%%%%%%%%%%%%%%%%%%%%%%%%%%%%

%%%%%%%%%%%%%%%%%%%%%%%%%%%%%%%%%%%%%%%%%%%%%%%%%%%%
\begin{algorithm} [t]
\caption{Joint Optimization based on Alternating Approach} 
\label{alg:algorithm_2} 
{\bf{initialize}}: $\bar {\bf{f}}^{(0)},{\boldsymbol{\epsilon}}^{(0)}, {\boldsymbol{\delta}}^{(0)}_{k}, \forall k \in \CMcal{K}$, and  $t = 1$.
% \\
% Set the iteration count
\\
\While {$increment\; of\;  \eqref{eq:multi_obj} > \varepsilon^{\rm out}$ \&  $t\leq t_{\rm max}^{\rm out}$}
{
% ${\boldsymbol \Delta}^{(t-1)} = [{\boldsymbol \delta}_{1}^{(t-1)}, \cdots, {\boldsymbol \delta}_{K}^{(t-1)}]$.
% \\
$\bar {\bf{f}}^{(t)} \leftarrow \text{Algorithm~\ref{alg:algorithm_1}}\left( {\boldsymbol{\epsilon}}^{(t-1)}, {\boldsymbol{\Delta}}^{(t-1)}\right)$.
\\
Compute $R_{\infty}$ by using Algorithm~\ref{alg:algorithm_1} with $L = \infty$.
\\
Find $\tau^{\star}$ and $\xi^{\star}$ according to \eqref{eq:tau_star} and \eqref{eq:xi_star} for $\bar{{\bf{f}}}^{(t)}$.
\\
Set ${\boldsymbol{\epsilon}}^{(t)} \!=\! [\hat{\epsilon}_1, \cdots, \hat{\epsilon}_{\ell -1},\tau^{\star},\cdots,\tau^{\star}]^{\sf T}$ and ${\boldsymbol{\delta}}_{k}^{(t)} \!=\! [\hat{\delta}_{1,k}, \cdots, \hat{\delta}_{j(k) -1,k},\xi^{\star},\cdots,\xi^{\star}]^{\sf T}$ $\forall k$.
% $ \forall k \in \CMcal{K}$
% , and $ {\boldsymbol \Delta}^{(t)} = [{\boldsymbol \delta}_{1}^{(t)}, \cdots, {\boldsymbol \delta}_{K}^{(t)}],.
\\
$t \leftarrow t+1$.
}
$\bar {\bf{f}}^\star \leftarrow \bar {\bf{f}}^{(t)}$, ${\boldsymbol{\epsilon}}^{\star} \leftarrow {\boldsymbol{\epsilon}}^{(t)}$, and ${\boldsymbol{\Delta}}^{\star} \leftarrow {\boldsymbol{\Delta}}^{(t)}$.
\\
\Return{\ }{$\bar{{\bf f}}^{\star} = \left[{\bf{f}}_1^{\sf T}, {\bf{f}}_2^{\sf T}, \dots, {\bf{f}}_K^{\sf T} \right]^{\sf T} $}, ${\boldsymbol{\epsilon}}^{\star}$, and ${\boldsymbol{\Delta}}^{\star}$.
%\end{algorithmic}
\end{algorithm}
%%%%%%%%%%%%%%%%%%%%%%%%%%%%%%%%%%%%%%%%%%%%%%%%%%%%%%%%%%%%

% {\color{red}}
From the results in Section~\ref{subsec:Phase1} and Section~\ref{subsec:Phase2}, we finally design our proposed algorithm, which is described in Algorithm~\ref{alg:algorithm_2}.
First, Algorithm~\ref{alg:algorithm_2} initializes the precoding vector $\bar \bff^{(0)}$, ${\boldsymbol{\epsilon}}^{(0)}$, and ${\boldsymbol{\delta}}^{(0)}_{k}, \forall k \in \CMcal{K}$.
To find the best local optimal precoding vector $\bar{\bf f}^{(t)}$, Algorithm~\ref{alg:algorithm_1} is used for given ${\boldsymbol{\epsilon}}^{(t-1)}$ and ${\boldsymbol{\delta}}^{(t-1)}_{k}$.
Then, $R_{\infty}$ is computed by finding $\bar{\bf f}$ from Algorithm~\ref{alg:algorithm_1} in the infinite blocklength regime.
The optimal $\tau^{\star}$ and $\xi^{\star}$ are computed based on \eqref{eq:tau_star} and \eqref{eq:xi_star}, and we set ${\boldsymbol{\epsilon}}^{(t)}$ and ${\boldsymbol{\Delta}}^{(t)}$ accordingly.
We repeat these steps until either the objective function
in \eqref{eq:multi_obj} increases smaller than $\varepsilon^{\rm out}$ compared to the previous iteration, where $\varepsilon^{\rm out}>0$ denotes a tolerance threshold for the outer loop, or the algorithm reaches a maximum iteration count $t_{\rm max}^{\rm out}$.
Since the solutions in \eqref{eq:tau_star} and \eqref{eq:xi_star} are closed form, the complexity order of Algorithm~\ref{alg:algorithm_2} is  $\CMcal{O} \left(\frac{1}{3}T_{\rm tot} KN^3\right)$, where $T_{\rm tot}$ is the number of  total iterations of Algorithm~\ref{alg:algorithm_1}.

%%%%%%%%%%%%%%%%%%%%%%%%%% Imperfect CSIT %%%%%%%%%%%%%%%%%%%%%
\section{Extension to Partial CSIT of Wiretap Channels}
\label{sec:imperfect_csi}
%%%%%%%%%%%%%%%%%%%%%%%%%%%%%%%%%%%%%%%%%%%%%%%%%%%%%%%%%%%%%%%

Since it is hard to obtain the perfect channel state information at the transmitter (CSIT) of eavesdroppers, we consider the case of the partial CSIT of wiretap channels in which only the long-term  channel statistics of eavesdroppers, i.e., the channel covariance $\bR_{m}^{\sf e}$, is available.
Since the instantaneous CSI of eavesdroppers is not known, it is infeasible to consider the instantaneous wiretap rate.
Accordingly, we consider the ergodic wiretap rate to exploit the partial CSIT.
Then, instead of the secrecy rate in \eqref{eq:SecrecyRate}, we have 
% By using the ergodic wiretap secrecy rate in \eqref{eq:SecrecyRate}, we have
\begin{align}
    \label{eq:UB_Imperfect}
      R_{k}^{\sf sec,\sf p} = R_k - \sqrt{\frac{\CMcal{V}_{k}}{L}}Q^{-1}(\epsilon_k) - \max_{m \in \CMcal{M}}\left\{\mathbb{E}_{\bg}\left[R^{\sf e}_{m,k} + \sqrt{\frac{\CMcal{V}^{\sf e}_{m,k}}{L}}Q^{-1}\left(\delta_{m,k}\right)\right]\right\},
\end{align}
where $\mathbb{E}_{\bg}[\cdot]$ indicates the expectation with respect to wiretap channels.
Then, we can rewrite \eqref{eq:UB_Imperfect} by using the following proposition.
\begin{proposition}
The approximated upper bound of the ergodic wiretap secrecy rate in \eqref{eq:UB_Imperfect} is obtained as
\begin{align}
    \label{eq:UB_Imperfect_2}
    %  R^{\sf sec}_k \geq&\;  R_k - \sqrt{\frac{\CMcal{V}_{k}}{L}}Q^{-1}(\epsilon_k) - \max_{m \in \CMcal{M}} \left\{\bar R_{m, k}^{\sf e} + \sqrt{\frac{\bar{\CMcal{V}}^{\sf e}_{m,k}}{L}}Q^{-1}\left(\delta_{m,k}\right)\right\}
     \mathbb{E}_{\bg}\left[R_{m, k}^{\sf e} + \sqrt{\frac{{\CMcal{V}}^{\sf e}_{m,k}}{L}}Q^{-1}\left(\delta_{m,k}\right)\right]  \lessapprox  \bar R_{m, k}^{\sf e} + \sqrt{\frac{\bar{\CMcal{V}}^{\sf e}_{m,k}}{L}}Q^{-1}\left(\delta_{m,k}\right),
\end{align}
where
\begin{align}
    &\bar R_{m, k}^{\sf e} = \log_2 \left( 1 + \frac{\gamma^{\sf e}_{m}\bff^{\sf H}_{k} {\bR}^{\sf e}_{m} \bff_{k}}{\sum^{K}_{\ell \neq k} \gamma^{\sf e}_{m}{\bff}^{\sf H}_{\ell} {\bR}^{\sf e}_{m} {\bff}_{\ell} + {\sigma}^{2}_{\sf e}/P } \right),
    % &\bar R_{m, k}^{\sf e} = \log_2 \left(1 + \bar{\rho}^{\sf e}_{m,k}\right),
    \\
    &\bar{\CMcal{V}}^{\sf e}_{m,k} = \sqrt{\frac{2 \gamma^{\sf e}_{m}\bff^{\sf H}_{k} {\bR}^{\sf e}_{m} \bff_{k}}{\sum^{K}_{k = 1} \gamma^{\sf e}_{m}{\bff}^{\sf H}_{\ell} {\bR}^{\sf e}_{m} {\bff}_{\ell} + {\sigma}^{2}_{\sf e}/P}} \log_2 e.
    % \\
    % & \bar{\CMcal{V}}^{\sf e}_{m,k} = \frac{2}{\frac{1}{\bar{\rho}^{\sf e}_{m,k}} + 1}(\log_2 e)^2,
\end{align}
% where
% \begin{align}
%     \bar{\rho}^{\sf e}_{m,k} = \frac{\gamma^{\sf e}_{m}\bff^{\sf H}_{k} {\bR}^{\sf e}_{m} \bff_{k}}{\sum^{K}_{\ell \neq k} \gamma^{\sf e}_{m}{\bff}^{\sf H}_{\ell} {\bR}^{\sf e}_{m} {\bff}_{\ell} + {\sigma}^{2}_{\sf e}/P}.
% \end{align}
\end{proposition}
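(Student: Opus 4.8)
The plan is to push the expectation through the two concave maps that carry the eavesdropper SINR $\rho^{\sf e}_{m,k}$ into the rate and into the channel dispersion, and then to collapse everything to a single ratio-of-expectations approximation. First I would split the left-hand side by linearity of expectation,
\begin{align*}
&\mathbb{E}_{\bg}\!\left[R^{\sf e}_{m,k} + \sqrt{\tfrac{\CMcal{V}^{\sf e}_{m,k}}{L}}\,Q^{-1}(\delta_{m,k})\right] \\
&\quad= \mathbb{E}_{\bg}\!\left[R^{\sf e}_{m,k}\right] + \frac{Q^{-1}(\delta_{m,k})}{\sqrt{L}}\,\mathbb{E}_{\bg}\!\left[\sqrt{\CMcal{V}^{\sf e}_{m,k}}\right],
\end{align*}
and treat the two terms separately. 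Here I would use that the leakage constraint satisfies $\delta_{m,k}\le 1/2$, so $Q^{-1}(\delta_{m,k})\ge 0$ and an upper bound on the dispersion term propagates with the correct sign.

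For the rate term I would regard $\rho^{\sf e}_{m,k}$ as a scalar nonnegative random variable in $\bg_m$ and apply Jensen's inequality to the concave map $x\mapsto\log_2(1+x)$, giving $\mathbb{E}_{\bg}[R^{\sf e}_{m,k}]\le \log_2(1+\mathbb{E}_{\bg}[\rho^{\sf e}_{m,k}])$. For the dispersion term I would read off from \eqref{eq:dispersion_eve} that $\sqrt{\CMcal{V}^{\sf e}_{m,k}}=\phi(\rho^{\sf e}_{m,k})\log_2 e$ with $\phi(x)=\sqrt{2x/(1+x)}$, and verify that $\phi$ is concave on $x\ge 0$: writing $\phi(x)=\sqrt{2}\,\sqrt{1-1/(1+x)}$ exhibits it as the concave increasing square root composed with the concave increasing map $x\mapsto 1-1/(1+x)$, so $\phi''<0$. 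A second application of Jensen then yields $\mathbb{E}_{\bg}[\sqrt{\CMcal{V}^{\sf e}_{m,k}}]\le \phi(\mathbb{E}_{\bg}[\rho^{\sf e}_{m,k}])\log_2 e$. Both steps are rigorous upper bounds and reduce the whole problem to the single quantity $\mathbb{E}_{\bg}[\rho^{\sf e}_{m,k}]$.

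To evaluate $\mathbb{E}_{\bg}[\rho^{\sf e}_{m,k}]$ I would use $\mathbb{E}_{\bg}[|\bg^{\sf H}_m \bff_\ell|^2]=\bff^{\sf H}_\ell \bR^{\sf e}_m \bff_\ell$, which follows from $\bR^{\sf e}_m=\mathbb{E}_{\bg}[\bg_m\bg^{\sf H}_m]$, together with the standard approximation that replaces the expectation of the SINR ratio by the ratio of the expectations of its numerator and denominator,
\begin{align*}
\mathbb{E}_{\bg}[\rho^{\sf e}_{m,k}]\approx \frac{\gamma^{\sf e}_m \bff^{\sf H}_k \bR^{\sf e}_m \bff_k}{\sum_{\ell\neq k}\gamma^{\sf e}_m \bff^{\sf H}_\ell \bR^{\sf e}_m \bff_\ell + \sigma^2_{\sf e}/P}=:\bar\rho^{\sf e}_{m,k}.
\end{align*}
Substituting $\bar\rho^{\sf e}_{m,k}$ into the log bound gives exactly $\bar R^{\sf e}_{m,k}=\log_2(1+\bar\rho^{\sf e}_{m,k})$, while substituting it into the dispersion bound and using $\bar\rho^{\sf e}_{m,k}/(1+\bar\rho^{\sf e}_{m,k})=\gamma^{\sf e}_m \bff^{\sf H}_k \bR^{\sf e}_m \bff_k / (\sum_{\ell=1}^{K}\gamma^{\sf e}_m \bff^{\sf H}_\ell \bR^{\sf e}_m \bff_\ell + \sigma^2_{\sf e}/P)$ collapses the interference-plus-noise denominator to the total received power, reproducing the stated surrogate $\bar{\CMcal{V}}^{\sf e}_{m,k}=\phi(\bar\rho^{\sf e}_{m,k})\log_2 e$ and hence the claimed right-hand side.

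I expect the only genuinely nonrigorous link, and the reason the statement carries $\lessapprox$ rather than $\le$, to be the ratio approximation $\mathbb{E}_{\bg}[\rho^{\sf e}_{m,k}]\approx\bar\rho^{\sf e}_{m,k}$: the numerator and denominator are correlated (both scale with $\|\bg_m\|$), so interchanging expectation and division is exact only asymptotically, as the interference-plus-noise term concentrates about its mean (for instance for large $N$ or many interfering streams). The two Jensen steps are exact inequalities, so the overall direction of the approximate bound is inherited from them, and the argument is complete.
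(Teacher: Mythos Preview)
Your proof is correct and follows essentially the same strategy as the paper: Jensen's inequality on the concave rate and dispersion maps, followed by a ratio-of-expectations approximation for $\mathbb{E}_{\bg}[\rho^{\sf e}_{m,k}]$. The only differences are organizational --- the paper splits the dispersion bound into two successive Jensen steps (first $\sqrt{\cdot}$, then $x\mapsto 2x/(1+x)$) rather than one on the composite $\phi$, and for the rate term it invokes the approximation lemma of \cite{zhang2014power} directly rather than separating out the Jensen step as you do.
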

\begin{proof}
The ergodic wiretap rate in \eqref{eq:eveSE} is approximated as
\begin{align} 
    \mathbb{E}_{\bg}[R_{m,k}^{\sf e} ] =&\;\mathbb{E}_{\bg}\left[\log_2\left(1 \!+\! \frac{\gamma^{\sf e}_{m}|\bg_{m}^{\sf H} {\bf{f}}_k|^2}{\sum_{\ell \neq k}^{K}\gamma^{\sf e}_{m}|\bg_{m}^{\sf H} {\bf{f}}_\ell|^2 + \sigma_{\sf e}^2/P}\right)\right]
    % \overset{(a)}{\approx}& \log_2\left(1 + \frac{\gamma^{\sf e}_{m}\mathbb{E}_{\bg}[|\bg_{m}^{\sf H} {\bf{f}}_k|^2]}{\gamma^{\sf e}_{m}\mathbb{E}_{\bg}[\sum_{\ell\neq k}^{K}|\bg_{m}^{\sf H} {\bf{f}}_\ell|^2] + \sigma_{\sf e}^2/P}\right)
    % \\
    \overset{(a)}{\approx} \log_2 \left( 1\! +\! \frac{\gamma^{\sf e}_{m}\bff^{\sf H}_{k} {\bR}^{\sf e}_{m} \bff_{k}}{\sum^{K}_{\ell \neq k} \gamma^{\sf e}_{m}{\bff}^{\sf H}_{\ell} {\bR}^{\sf e}_{m} {\bff}_{\ell} + {\sigma}^{2}_{\sf e}/P } \right)
    \\
    =&\;\bar{R}_{m, k}^{\sf e},
\end{align}
where $(a)$ comes from Lemma 1 in \cite{zhang2014power}. 
We also have
\begin{align}
     \mathbb{E}\left[\sqrt{{\CMcal{V}}^{\sf e}_{m,k}}\right] \overset{(b)}{\leq}&\;
     \sqrt{\mathbb{E}\left[{\CMcal{V}}^{\sf e}_{m,k}\right]}
    %   \sqrt{\mathbb{E}_{\bg}\left[\frac{2}{\frac{1}{\rho^{\sf e}_{m,k}} + 1}\right]}\log_2 
     \overset{(c)}{\leq} \sqrt{\frac{2}{\frac{1}{\mathbb{E}\left[\rho^{\sf e}_{m,k}\right]} + 1}}\log_2 e
    %  \overset{(d)}{\approx}&\; \sqrt{\frac{2}{\frac{1}{\frac{\gamma^{\sf e}_{m}\bff^{\sf H}_{k} {\bR}^{\sf e}_{m} \bff_{k}}{\sum^{K}_{\ell \neq k} \gamma^{\sf e}_{m}{\bff}^{\sf H}_{\ell} {\bR}^{\sf e}_{m} {\bff}_{\ell} + {\sigma}^{2}_{\sf e}/P}} + 1}}\log_2 e
    %  \\
     \overset{(d)}{\approx} \sqrt{\frac{2 \gamma^{\sf e}_{m}\bff^{\sf H}_{k} {\bR}^{\sf e}_{m} \bff_{k}}{\sum^{K}_{k = 1} \gamma^{\sf e}_{m}{\bff}^{\sf H}_{\ell} {\bR}^{\sf e}_{m} {\bff}_{\ell} + {\sigma}^{2}_{\sf e}/P}} \log_2 e
     \\
     =&\; \sqrt{\bar{\CMcal{V}}^{\sf e}_{m,k}},
    %  \mathbb{E}_{\bg}\left[\sqrt{\frac{2}{\frac{1}{\rho^{\sf e}_{m,k}} + 1}}\right]
\end{align}
where $(b)$ and $(c)$ follow from Jensen's inequality,
% because $\sqrt{\CMcal{V}^{\sf e}_{m,k}}$ is concave in $[0,\infty)$, where $\CMcal{V}^{\sf e}_{m,k} \geq 0,\; \forall m \in \CMcal{M},\; \forall k \in \CMcal{M}$, and the channel dispersion factor in \eqref{eq:dispersion_eve} is concave with respect to $\bg_m, \forall m \in \CMcal{M}$, 
and  $(d)$ comes from the first-order  Taylor series approximation based on statistical linearization \cite{dalir2019maximizing}. 
\end{proof}

Applying \eqref{eq:UB_Imperfect_2} to \eqref{eq:UB_Imperfect}, we have the lower bound of secrecy rate approximated as
\begin{align}
     {R}^{\sf sec, p}_k \gtrapprox&\; R_k - \sqrt{\frac{\CMcal{V}_{k}}{L}}Q^{-1}(\epsilon_k) - \max_{m \in \CMcal{M}}\left\{\bar{R}^{\sf e}_{m,k} + \sqrt{\frac{\bar{\CMcal{V}}^{\sf e}_{m,k}}{L}}Q^{-1}\left(\delta_{m,k}\right)\right\}
     \\
     \label{eq:SecrecyRate_Imperfect}
     =&\; \bar{R}^{\sf sec,p}_k. 
\end{align}
Replacing ${R}_k^{\sf sec}$ in \eqref{eq:obj_func} with $\bar{R}_k^{\sf sec, p}$, we can also develop the joint secure precoding algorithm with minimizing the maximum error probability and information leakage rate.
In particular, as 
in Section \ref{subsec:Phase1}, we formulate the sum secrecy rate maximization problem with $\bar{R}_k^{\sf sec, p}$ as
\begin{align} 
    \label{eq:obj_func_imperfect2}
    &\mathop{{\text{maximize}}}_{\bF} \;\; \sum_{k = 1}^{K}\bar{R}_{k}^{\sf sec,p}(\bF),
    \\
    % \label{eq:err_min}
    % &\mathop{{\text{minimize}}}_{{\boldsymbol \epsilon}} \;\; \max \{\epsilon_1, \cdots, \epsilon_K\},
    % \\
    % \label{eq:del_min}
    % &\mathop{{\text{minimize}}}_{{\boldsymbol \Delta}} \;\; \max\{\delta_{1,1},\cdots,\delta_{M,K}\},
    % \\
    % \label{eq:power_const} 
    &{\text{subject to}} \;\; {\rm {tr}}\left(\bF \bF^{\sf H}\right) \leq 1.
    % \\
    % & \qquad \qquad \;\; \epsilon_k \leq \hat{\epsilon}_k,\; \forall k \in \CMcal{K},
    % \\
    % & \qquad \qquad \;\; \delta_{m,k} \leq \hat{\delta}_{m,k},\; \forall m \in \CMcal{M}, \forall k \in \CMcal{K},
\end{align}
We note that the formulated problem in \eqref{eq:obj_func_imperfect2} consists of the eavesdropper's  channel covariance matrix $\bR_m^{\sf e}$.
As in Section \ref{subsec:Phase1}, we also derive the first-order KKT condition for \eqref{eq:obj_func_imperfect2} as 
\begin{align}
    \label{eq:NEP_bar}
    \bar{\mathbf{B}}^{-1}_{\sf KKT}(\bar{\mathbf{f}})\bar{\mathbf{A}}_{\sf KKT}(\bar{\mathbf{f}})\bar{\mathbf{f}} = \bar{\lambda}(\bar{\mathbf{f}})\bar{\mathbf{f}},
\end{align}
where
\begin{align}
    \label{eq:A_KKT_bar}
    &\bar{\bA}_{\sf KKT}(\bar{\mathbf{f}}) = \sum_{k=1}^{K}\left[\frac{\omega_k}{\ln 2}\left(\frac{\mathbf{A}_k}{\bar{\mathbf{f}}^{\sf H}\mathbf{A}_{k}\bar{\mathbf{f}}}\right) + \frac{1}{\alpha}\left\{\sum_{m=1}^{M}\left(\frac{\omega^{\sf e}_{m,k} {\color{black} \beta_{m,k}} \left(\frac{\bar{\mathbf{f}}^{\sf H}\bar{\bC}_{m}\bar{\mathbf{f}}}{\bar{\mathbf{f}}^{\sf H}\bar{\bD}_{m,k}\bar{\mathbf{f}}}\right)^{\omega_{m,k}^{\sf e} }\frac{\bar{\bD}_{m,k}}{\bar{\mathbf{f}}^{\sf H}\bar{\bD}_{m,k}\bar{\mathbf{f}}}}{\sum_{\ell =1}^{M} {\color{black} \beta_{m,k}} \left(\frac{\bar{\mathbf{f}}^{\sf H}\bar{\bC}_{\ell}\bar{\mathbf{f}}}{\bar{\mathbf{f}}^{\sf H}\bar{\bD}_{\ell,k}\bar{\mathbf{f}}} \right)^{\omega_{\ell,k}^{\sf e}}}\right)\!\right\}\right]\!\bar{\lambda}_{\sf num}(\bar{\mathbf{f}}), 
    \\
    \label{eq:B_KKT_bar}
    &\bar{\bB}_{\sf KKT}(\bar{\mathbf{f}}) = \sum_{k=1}^{K}\left[\frac{\omega_k}{\ln 2}\left(\frac{\mathbf{B}_k}{\bar{\mathbf{f}}^{\sf H}\mathbf{B}_{k}\bar{\mathbf{f}}}\right) + \frac{1}{\alpha}\!\left\{\sum_{m=1}^{M}\left(\frac{\omega^{\sf e}_{m,k}{\color{black}\beta_{m,k}} \left(\frac{\bar{\mathbf{f}}^{\sf H}\bar{\bC}_{m}\bar{\mathbf{f}}}{\bar{\mathbf{f}}^{\sf H}\bar{\bD}_{m,k}\bar{\mathbf{f}}}\right)^{\omega_{m,k}^{\sf e} }\frac{\bar{\bC}_{m}}{\bar{\mathbf{f}}^{\sf H}\bar{\bC}_{m}\bar{\mathbf{f}}}}{\sum_{\ell=1}^{M}{\color{black} \beta_{m,k}}\left(\frac{\bar{\mathbf{f}}^{\sf H}\bar{\bC}_{\ell}\bar{\mathbf{f}}}{\bar{\mathbf{f}}^{\sf H}\bar{\bD}_{\ell,k}\bar{\mathbf{f}}} \right)^{\omega_{\ell,k}^{\sf e}}}\right)\!\right\}\right]\!\bar{\lambda}_{\sf den}(\bar{\mathbf{f}}),
    \\
    \label{eq:C_bar}
    &\bar{\mathbf{C}}_{m} =\;\mathrm{blkdiag}\left(\gamma^{\sf e}_{m} \mathbf{R}_{m}^{\sf e},\cdots,\gamma^{\sf e}_{m} \mathbf{R}_{m}^{\sf e}\right) + \mathbf{I}_{NK}\frac{\sigma^2_{\sf e}}{P} \in \mathbb{C}^{NK\times NK},
    \\
    \label{eq:D_bar}
    &\bar{\bD}_{m,k} =\;\bar{\mathbf{C}}_{m} - \mathrm{blkdiag}(0,\cdots,  {\gamma^{\sf e}_{m} \bR_{m}^{\sf e}},\cdots,0) \in \mathbb{C}^{NK\times NK},
    \\
    \label{eq:lamb_bar}
    &\bar{\lambda}(\bar{\mathbf{f}}) = \bar{\lambda}_{\sf num}(\bar{\mathbf{f}})/\bar{\lambda}_{\sf den}(\bar{\mathbf{f}}),
    \\
    % \label{eq:lamb_num_bar}
    &\bar{\lambda}_{\sf num}(\bar{\mathbf{f}}) = \prod_{k=1}^{K} \left(\frac{\bar{\mathbf{f}}^{\sf H}\mathbf{A}_{k}\bar{\mathbf{f}}}{\bar{\mathbf{f}}^{\sf H}\mathbf{B}_{k}\bar{\mathbf{f}}}\right)^{\omega_k},
    % \label{eq:lamb_den_bar}
    \; \bar{\lambda}_{\sf den}(\bar{\mathbf{f}}) = \prod_{k=1}^{K} \left\{\sum_{m=1}^{M}{\color{black} \beta_{m,k}} \left(\frac{\bar{\mathbf{f}}^{\sf H}{\bar{\bC}}_{m}\bar{\mathbf{f}}}{\bar{\mathbf{f}}^{\sf H}\bar{\bD}_{m,k}\bar{\mathbf{f}}}\right)^{\omega_{m,k}^{\sf e}}\right\}^{\frac{\ln 2}{\alpha }}.
\end{align}
Recall that $\mathbf{A}_k$ and $\mathbf{B}_k$ are derived in \eqref{eq:A_k} and \eqref{eq:B_k}, respectively.
Replacing $\mathbf{A}_{\sf KKT}$ and $\mathbf{B}_{\sf KKT}$ with $\bar{\mathbf{A}}_{\sf KKT}$ and $\bar{\mathbf{B}}_{\sf KKT}$ in Algorithm~\ref{alg:algorithm_1}, we can design the secure precoding algorithm by leveraging the partial CSIT of the eavesdroppers.
The maximum error probability and information leakage rate can also be optimized by following the same steps in Section \ref{subsec:Phase2} based on $\bar{R}^{\sf sec, \sf p}_k$ as
% Consequently, the optimal upper error probability and information leakage rate are
\begin{align}
    &\bar{\tau}^{\star} = Q\left(\sqrt{2 \ln \left(\frac{\sqrt{L}(1-w)\bar{R}_{\infty}}{\hat{\epsilon}_{\sf max}w\sqrt{2\pi}\sum_{k=\ell}^{K}\sqrt{{\CMcal{V}}_k}}\right)}\right),
    \\
    &\bar{\xi}^{\star} = Q\left(\sqrt{2 \ln \left(\frac{\sqrt{L}(1-w)\bar{R}_{\infty}}{\hat{\delta}_{\sf max}w\sqrt{2\pi}\sum_{m=j(k)}^{M}\sum_{k=1}^{K}\sqrt{\bar{\CMcal{V}}_{m,k}^{\sf e}}}\right)}\right),
\end{align}
where $\bar{R}_{\infty}$ is the normalization constant based on $\bar{R}^{\sf sec, \sf p}_k$ for $L= \infty$.
Then, the joint secure precoding algorithm is proposed with Algorithm~\ref{alg:algorithm_2} by replacing $\bA_k$, $\bB_k$, $\tau^{\star}$, and $\xi^{\star}$ with $\bar{\bA}_k$, $\bar{\bB}_k$, $\bar{\tau}^{\star}$, and $\bar{\xi}^{\star}$, respectively.

%%%%%%%%%%%%%%%%%%% SECTION %%%%%%%%%%%%%%%%%%%%%%%%%%
\section{Numerical Results}
%%%%%%%%%%%%%%%%%%%%%%%%%%%%%%%%%%%%%%%%%%%%%%%%%%%%%%
In this section, we evaluate the performance of our proposed algorithms along with other benchmarks and 
deliver key insights.
Considered algorithms are:
(1) Algorithm 1, (2) Algorithm 1 with the channel covariance of wiretaps (Cov) in Section \ref{sec:imperfect_csi}, (3) Algorithm 2, (4) Algorithm 2 (Cov), (5) the FBL-based SE maximization algorithm (FBL-SE-MAX)  \cite{choi:iotj:21}, (6) regularized ZF (RZF), (7) RZF including the eavesdroppers (RZF-EVE), (8) ZF, (9) ZF-EVE, and (10) MRT.
Specifically, RZF-EVE and ZF-EVE algorithms exploit the wiretap channels
by constructing an effective channel matrix with the $N-K$ strongest wiretap channels.
% as ${\cH} = [\bh_1 , \ldots , \bh_K , \underbrace{\bg_1, \ldots, \bg_{\ell}}_{N-K}] \in \bbC^{N \times \left(K+(N-K)\right)}$, when the number of AP antennas is larger than that of the users, i.e., $N>K$. 
% Otherwise, it is indeed the same with the RZF and ZF based on the original channel matrix $\bH$.

% \subsection{Simulation Environments}
%% One-Ring Model
To generate the channel vectors $\bh_k$ and $\bg_m$, we adopt a one-ring model \cite{adhikary2013joint} based on its spatial covariance matrices $\bR_k = \bbE[\bh_k \bh_{k}^{\sf H}]$ and $\bR^{\sf e}_{m} = \bbE[\bg_m \bg_m^{\sf H}]$.
% The AP is equipped with uniform circular array with radius $\phi D$ where $\phi$ denotes the wavelength and $D = \frac{0.5}{\sqrt{(1-\cos(2\pi /N)^2 + \sin^2 (2\pi /N))}}$.
% We consider the angular spreads of the users and the eavesdroppers to be $\pi /12$ and assume that the angles of departures (AoDs) are drawn from the uniform distribution $\theta_k \sim \cU (0, 2\pi]$.
In addition, we consider that the geometric location of each eavesdropper is correlated with a particular legitimate user. 
Accordingly, the channel angle of departure (AoD) of eavesdropper $m$ follows $\theta^{\sf e}_m \sim \theta_k + \cU (-\Delta \pi, \Delta \pi)$ for randomly selected user $k$ with a scalar weight $0< \Delta < 1 $, where $\theta_k $ represents the  AoD of user $k$.
We generate $\hat{\epsilon}_k$ and $\hat{\delta}_{m,k}$ uniformly spaced from $10^{-6}$ to $2 \times 10^{-6}$ for all users and eavesdroppers.
For simplicity, we consider $\hat{\delta}_{m,k} = \hat{\delta}_{m,k'}, \forall m \in \CMcal{M}$.
We set the scalar weight, coding length, optimization weight, iteration thresholds, and maximum iteration counts as $\Delta = 0.1$, $L=200$, $w=0.01$, $\varepsilon = \varepsilon^{\rm out} = 0.01$, and $t_{\rm max} = 15$ and $t_{\rm max}^{\rm out} = 5$ unless mentioned otherwise.
% The values of the noise variance are set to be $\sigma^2 = \sigma_{\sf e}^{2} = 1$. 
% Using the Karhunen-Loeve model in \cite{adhikary2013joint}, the channel $\bh_k$ is given as
% \begin{align}
%     \bh_k = \bU_k \mathbf{\Lambda}_{k}^{\frac{1}{2}}\bu_k,
% \end{align}
% where $\mathbf{\Lambda}_{k} \in \bbC^{r_k \times r_k}$ is a diagonal matrix of the non-zero eigenvalues of $\bR_k$ with rank $r_k$, $\bU_k \in \bbC^{N \times r_k}$ is a matrix of the eigenvectors of $\bR_k$ that correspond to the eigenvalues in $\mathbf{\Lambda}_k$, and $\bu_k \in \bbC^{r_k}$ is a channel vector which follows $\mathcal{CN}(\mathbf{0},\bI_{r_k})$. 
% Similarly, the wiretap channel $\bg_m$ is given as
% \begin{align}
%     \bg_m = \bQ_m \mathbf{\Sigma}_{m}^{\frac{1}{2}}\bq_m,
% \end{align}
% where $\mathbf{\Sigma}_{m} \in \bbC^{r^{\sf e} \times r^{\sf e}}$ is a diagonal matrix of the non-zero eigenvalues of $\bR^{\sf e}_m$ with rank $r^{\sf e}_m$, $\bQ_m \in \bbC^{N \times r^{\sf e}}$ is a matrix of the eigenvectors of $\bR^{\sf e}_m$ that correspond to the eigenvalues in $\mathbf{\Sigma}_m$, and $\bq_m \in \bbC^{r^{\sf e}_m}$ is a channel vector which follows $\mathcal{CN}(\mathbf{0},\bI_{r^{\sf e}_m})$. 
% We assume a block fading model, where $\bh_k$ and $\bg_m$ are invariant within one transmission block so that they change independently across the transmission blocks.

% Pathloss model
In addition, to generate the large-scale channel fading terms $\gamma_k$ and $\gamma_{m,k}^{\sf e}$, we adopt the ITU-R indoor pathloss model \cite{PATH:ITU:2013} which considers a non-line-of-sight pathloss environment by setting bandwidth, carrier frequency, distance power-loss coefficient, and noise figure to be 10 MHz, 5.2 GHz, 31 (equivalent to the pathloss exponent of $3.1$), and 5 dB, respectively.
We assume the noise power spectral density of legitimate users and
eavesdroppers is the same as $-174$ dBm/Hz. 
The users are randomly generated around the AP with the maximum distance of $50\;m$ and minimum distance of $5\;m$ from the AP.
The eavesdroppers are randomly distributed around random users with the maximum distance of $5\;m$ from the users.

\subsection{Secure Precoding for Given Error and Leakage in FBL Regime}
% ------------------------------------------------------------
% FIGURE %%%%%%%%%%%%%%%%%%%%%%%%%%%%%
% Rate vs. SNR
\begin{figure}[!t]
    \centering
    $\begin{array}{c c}
    {\resizebox{0.48\columnwidth}{!}
    {\includegraphics{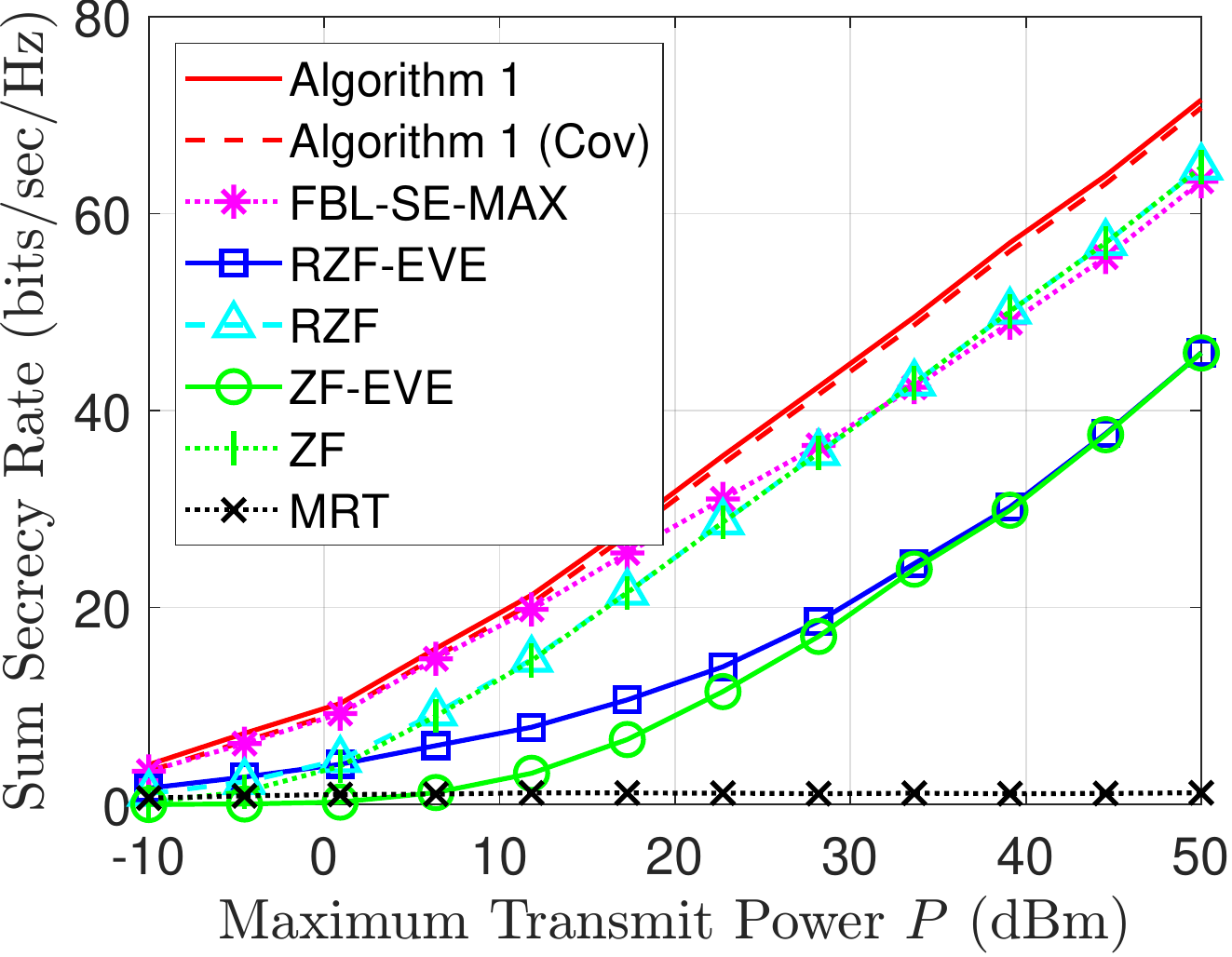}}
    }&
    {\resizebox{0.48\columnwidth}{!}
    {\includegraphics{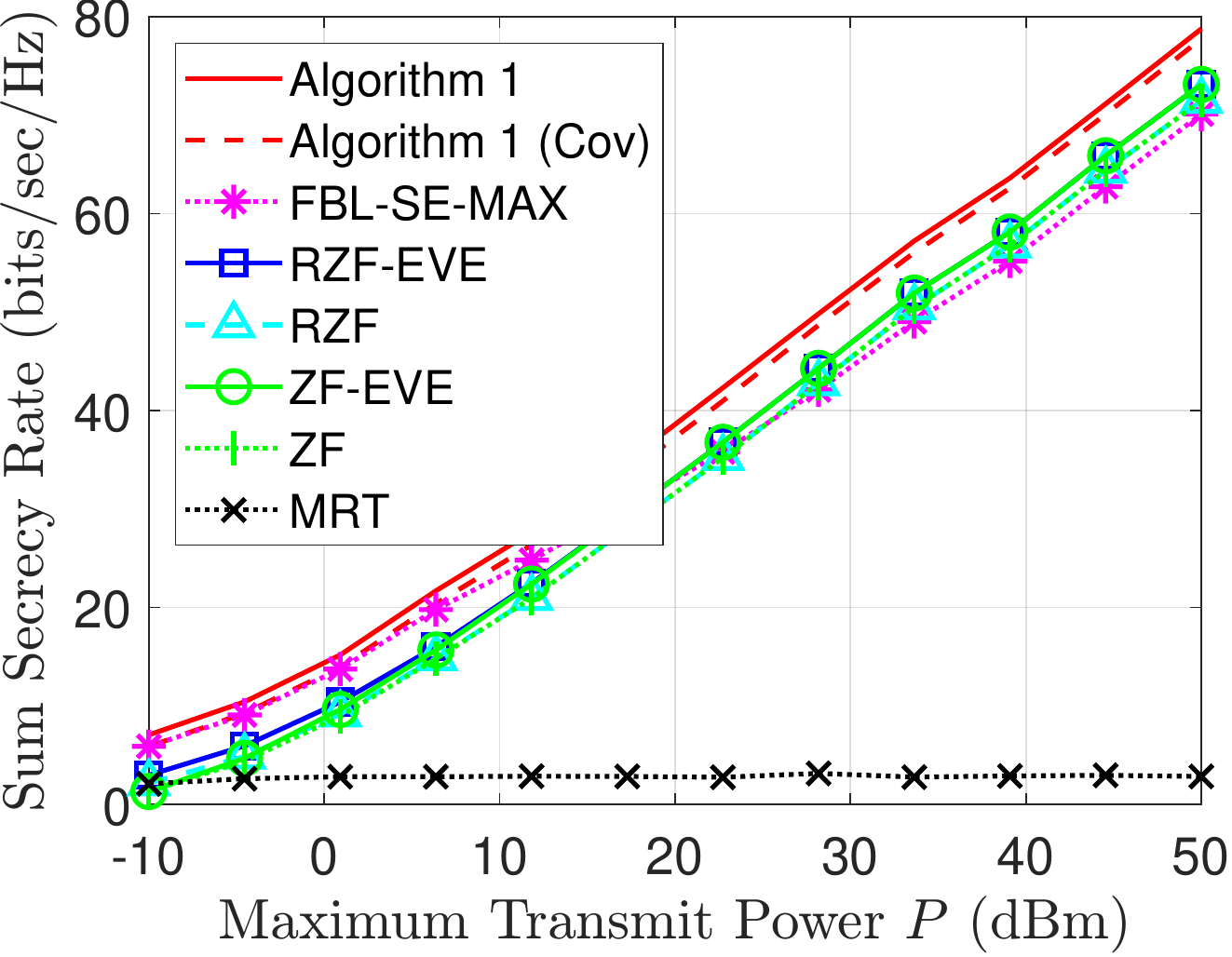}}
    }\\ \mbox{\small (a) $N = 8$ } & \mbox{\small (b) $N = 16$ }
    \end{array}$
      \vspace{-0.5em}
    \caption{The sum secrecy rate vs. maximum transmit power with (a) $N = 8$ and (b) $N = 16$ AP antennas for $K = 4$ users, and $M = 4$ eavesdroppers.}
    \label{fig:RatevsSNR}
      \vspace{-1em}
\end{figure}
% %%%%%%%%%%%%%%%%%%%%%%%%%%%%%%%%%%%%
We first investigate the precoders that maximize the sum secrecy rate for given $\epsilon_k$ and $\delta_{m,k}$ as $\hat{\epsilon}_k$ and $\hat{\delta}_{m,k}$, respectively.
% Algorithm 1: SNR vs Rate
In Fig.~\ref{fig:RatevsSNR}, we evaluate the sum secrecy rate of the simulated algorithms with respect to the transmit power $P$ for (a) $N=8$ and (b) $N=16$ with  $K=4$ and $M=4$.
Fig.~\ref{fig:RatevsSNR} shows that the proposed algorithms achieve the highest sum secrecy rate  compared to the baseline algorithms
% for both the cases of (a) and (b).
because the proposed algorithms consider the effect of back-off factors due to the FBL as well as the wiretap channels.
Algorithm~\ref{alg:algorithm_1} (Cov), in particular, provides higher performance than the benchmarks without the perfect wiretap CSIT.
We observe that the secure precoders such as the RZF-EVE and ZF-EVE waste their extra dimension in nullifying the leakage from the wiretap channels rather than increasing legitimate users' channel gains, showing poor performance in Fig.~\ref{fig:RatevsSNR}(a).
However, as shown in Fig.~\ref{fig:RatevsSNR}(b), when the AP has the enough antennas, the ZF-based precoders outperform the SE maximization precoder.
% ------------------------------------------------------------

% ------------------------------------------------------------
% FIGURE %%%%%%%%%%%%%%%%%%%%%%%%%%%%%
% Rate vs. L
\begin{figure}[!t]    
    {\centerline{\resizebox{0.45\columnwidth}{!}{\includegraphics{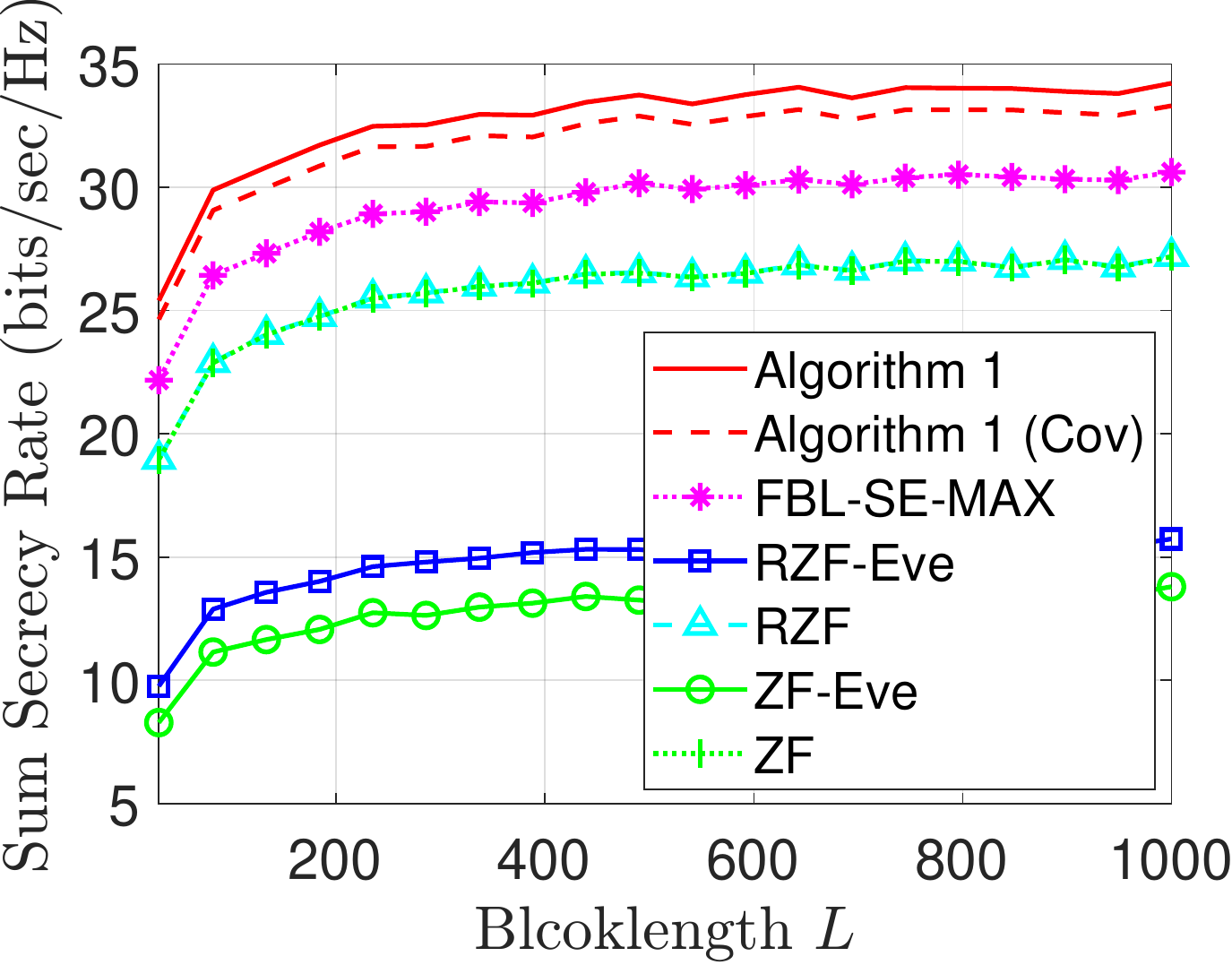}}}
      \vspace{-0.5em}
    \caption{The sum secrecy rate vs. the blocklength $L$ for $N=8$ AP antennas, $K=4$ users, $M=4$ eavesdroppers, and $P = 20$ dBm transmit power.}
    \vspace{-1em}
    \label{fig:RatevsL}}
\end{figure}
% %%%%%%%%%%%%%%%%%%%%%%%%%%%%%%%%%%%%
% Algorithm 1: L vs Rate
In Fig.~\ref{fig:RatevsL}, we evaluate the secrecy rate performance versus the
blocklength $L$ for $N=8$, $K=4$, $M=4$, and $P = 20$ dBm.
We note that all algorithms reveal an increasing trend of the sum secrecy rate with $L$ as theoretically shown in \eqref{eq:SecrecyRate}.
Fig.~\ref{fig:RatevsL} shows that the proposed algorithms maintain the highest secrecy rate performance regardless of $L$.
Accordingly, the proposed algorithm is effective in the general blocklength.
Thus, the proposed secure precoder is preferable in supporting the conventional as well as FBL-based communication systems.
% ------------------------------------------------------------

% ------------------------------------------------------------
% % FIGURE %%%%%%%%%%%%%%%%%%%%%%%%%%%%%
% Proposed: SNR vs Rate
\begin{figure}
    \centering
    $\begin{array}{c c}
    {\resizebox{0.48\columnwidth}{!}
    {\includegraphics{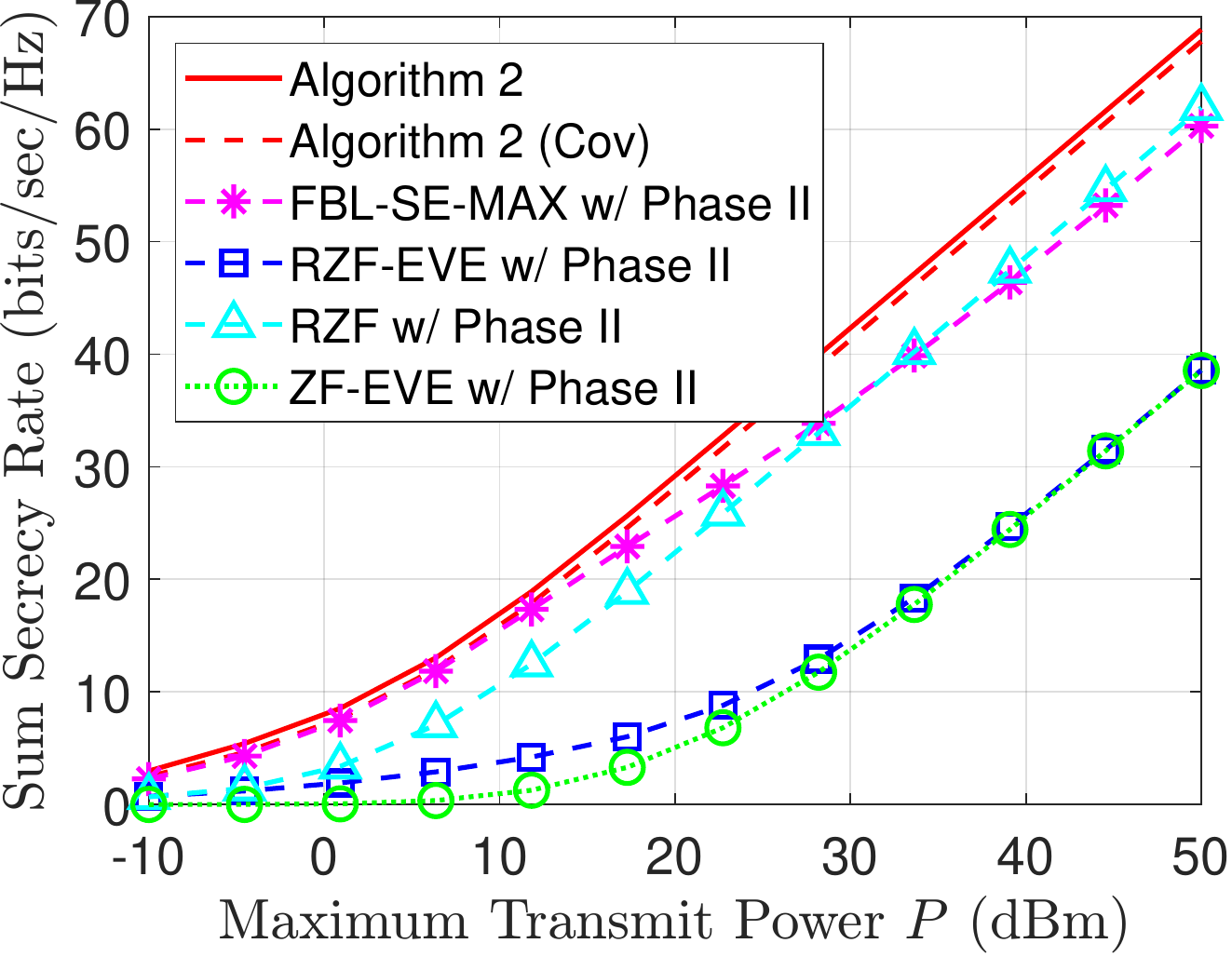}}
    }
    &
    {\resizebox{0.52\columnwidth}{!}
    {\includegraphics{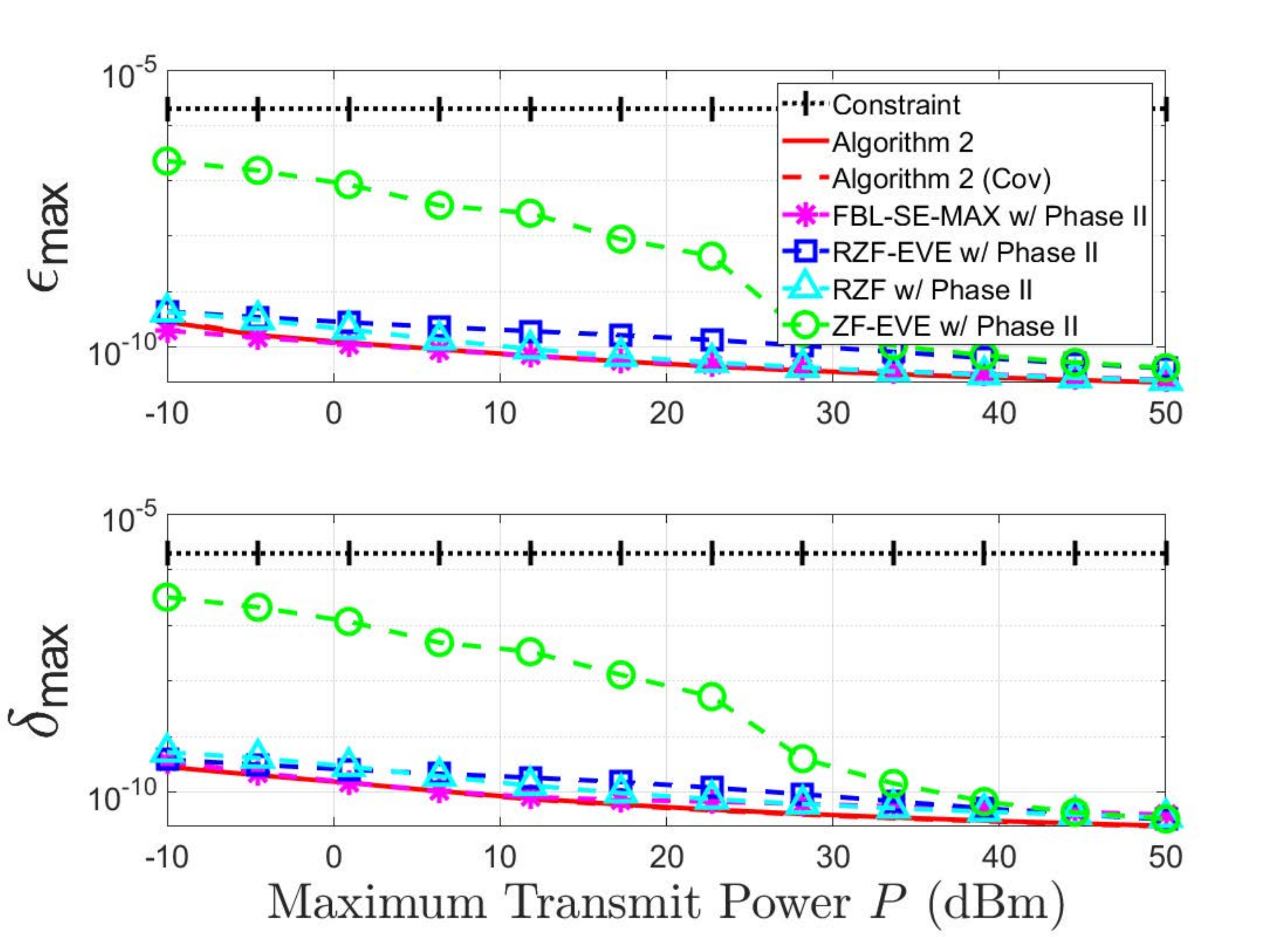}}
    }\\ \mbox{\small (a) Sum secrecy rate} & \mbox{\small \!(b) \!Maximum \!error \!probability \!and \!information \!leakage}
    \end{array}$
      \vspace{-0.5em}
    \caption{(a) The sum secrecy rate and (b) maximum error probability and information leakage rate vs. the maximum transmit power $P$ for $N= 8$ AP antennas, $K=4$ users, and $M=8$ eavesdroppers.}
      \vspace{-1em}
    \label{fig:ErrorMin}
\end{figure}
% %%%%%%%%%%%%%%%%%%%%%%%%%%%%%%%%%%%%
% ------------------------------------------------------------
% Proposed w.r.t. SNR
% ------------------------------------------------------------
\subsection{Joint Secure Precoding with Error and Leakage Minimization in FBL Regime}
Now, we evaluate the proposed method in Algorithm~\ref{alg:algorithm_2}.
For comparison, the baseline algorithms utilize the second  phase  (Phase II in Section~\ref{subsec:Phase2}) of the proposed optimization framework for minimizing the maximum error probability and information leakage rate in Section \ref{subsec:Phase2}.
Fig.~\ref{fig:ErrorMin} shows that the performance of the sum secrecy rate, the maximum error probability, and information leakage rate with respect to the transmit power $P$ for $N=8$, $K=4$, and $M=8$.
The benchmark algorithms with the proposed minimization method provide the significant reduction in the maximum error probability and information leakage rate.
We note that the proposed algorithms achieve the highest secrecy rate performance while maintaining the lowest maximum error probability and information leakage rate for the most signal-to-noise ratio (SNR).
FBL-SE-MAX with Phase II attains the lowest maximum error probability in most SNR regimes.
The security rate of FBL-SE-MAX, however, reveals an increasing gap from the proposed algorithm due to ignoring the wiretap channels.
% Moreover, it is the near optimal solution to completely nullify the wiretap channels in terms of minimizing the maximum information leakage rate.
% Accordingly, ZF-EVE should be superior in minimizing the maximum information leakage rate, when the system has enough dimensions to nullify wiretap channels.
% This insight is revealed in the extended simulation for the effective degree-of-freedom (DoF).
% ------------------------------------------------------------

% % FIGURE %%%%%%%%%%%%%%%%%%%%%%%%%%%%%
% Proposed: N vs. Rate
\begin{figure}
    \centering
    $\begin{array}{c c}
    {\resizebox{0.46\columnwidth}{!}
    {\includegraphics{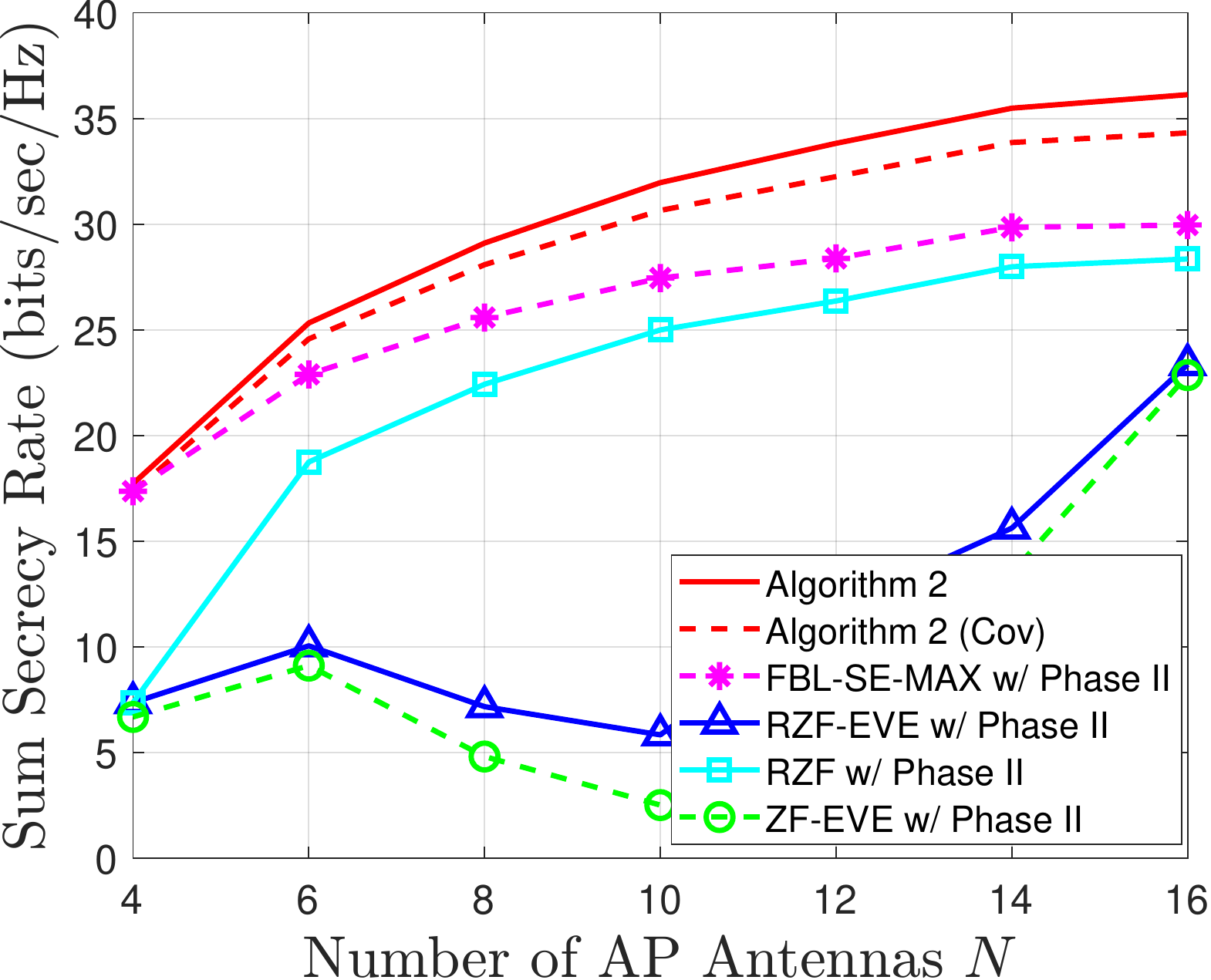}}
    }
    &
    {\resizebox{0.52\columnwidth}{!}
    {\includegraphics{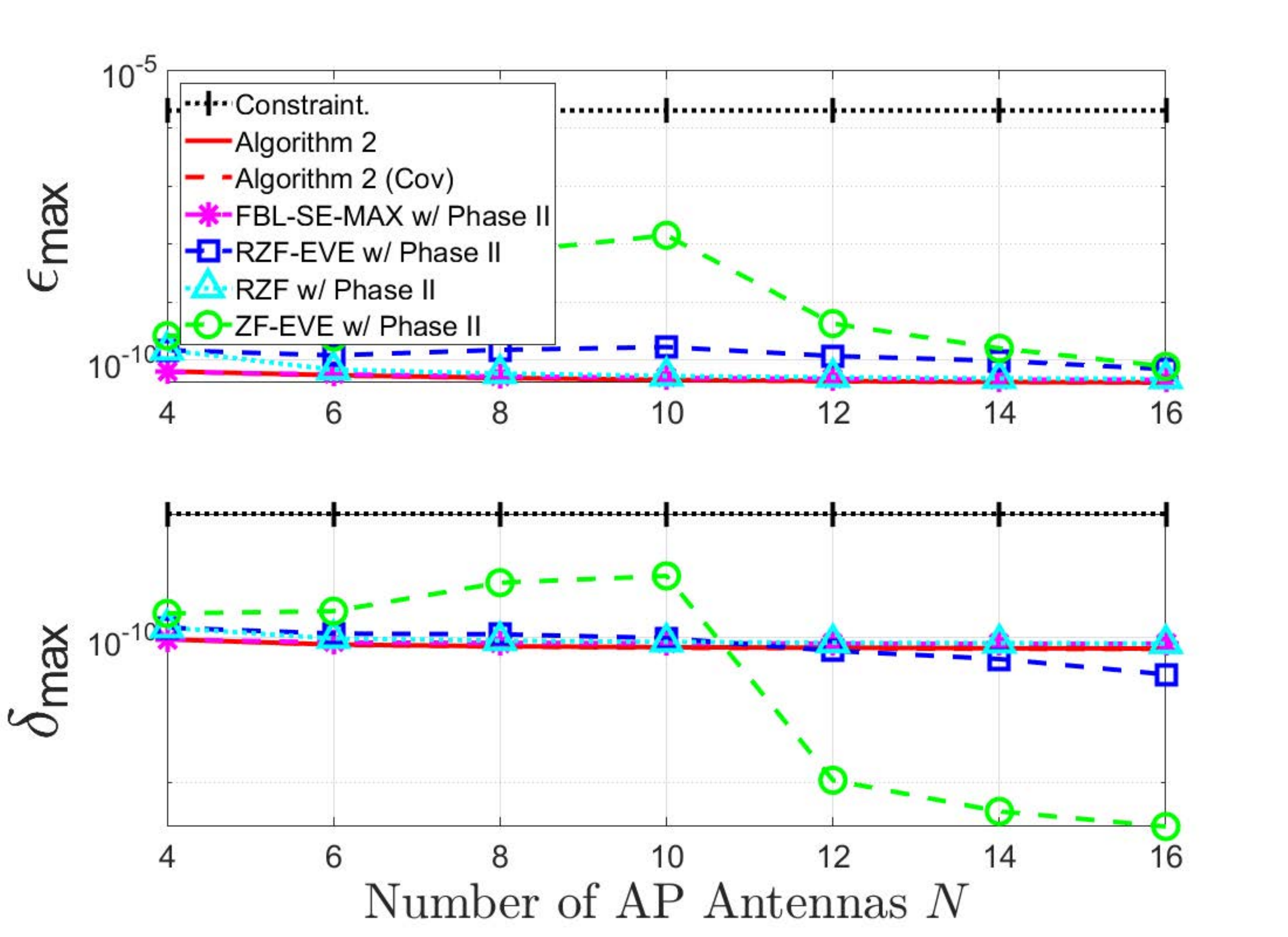}}
    }\\ \mbox{\small (a) Sum secrecy rate} & \mbox{\small \!(b) \!Maximum \!error \!probability \!and \!information \!leakage}
    \end{array}$
      \vspace{-0.5 em}
    \caption{(a) The sum secrecy rate and (b) maximum error probability and information leakage vs. the number AP antennas $N$ for $P  = 20$ dBm transmit power, $K=4$ users, and $M=8$ eavesdroppers.}
    \label{fig:ErrorMin_N}
      \vspace{-1em}
\end{figure}
% %%%%%%%%%%%%%%%%%%%%%%%%%%%%%%%%%%%%
% ------------------------------------------------------------
% Proposed w.r.t. N
% ------------------------------------------------------------
To provide the numerical analysis in terms of the number of AP antennas $N$, we show the performance of the sum secrecy rate, the maximum error probability, and information leakage rate with respect to the number of AP antennas for $P = 20$ dBm, $K=4$, and $M=8$ in Fig.~\ref{fig:ErrorMin_N}.
% In this simulation, we consider the $P = 20$ dBm, $K=4$, and $M=8$.
It is shown that the proposed algorithms also achieve the highest secrecy rate performance regardless of  $N$.
As observed in Fig.~\ref{fig:ErrorMin_N}(a),  RZF-EVE and ZF-EVE reveal an unique trend of the sum secrecy rate as  $N$ increases.
Since there exists the optimal tradeoff between increasing the legitimate channel gain and nullifying the wiretap channels, the rates do not monotonically increase with $N$ for RZF-EVE and ZF-EVE.
For instance, when $N$ is not large enough, increasing channel gain should be more suitable to achieve the higher secrecy rate  than nullifying the wiretap channels with extra degree-of-freedom (DoF).
Thus,  RZF-EVE and ZF-EVE undergo poor performance in terms of the secrecy rate for small $N$, and they  enhance the secrecy rate with lager $N$.
% In particular, the information leakage is defined as statistical dependency of the legitimate channel and wiretap channel measured by the variation \cite{PLS_FBL:TIT:2019}.
Regarding the information leakage rate,  ZF-EVE  that  utilizes its extra DoF to fully nullify the wiretap channels is a near optimal solution with the enough DoF as shown in Fig.~\ref{fig:ErrorMin_N}(b).
% we note that the ZF-EVE is shown as the optimal solution with $N \geq 12$, and it also corresponds to our intuition.
Considering the entire performance, however, the proposed algorithms show the most balanced performance.
% ------------------------------------------------------------

% ------------------------------------------------------------
% FIGURE %%%%%%%%%%%%%%%%%%%%%%%%%%%%%
% Proposed: K vs. Rate
\begin{figure}
    \centering
    $\begin{array}{c c}
    {\resizebox{0.48\columnwidth}{!}
    {\includegraphics{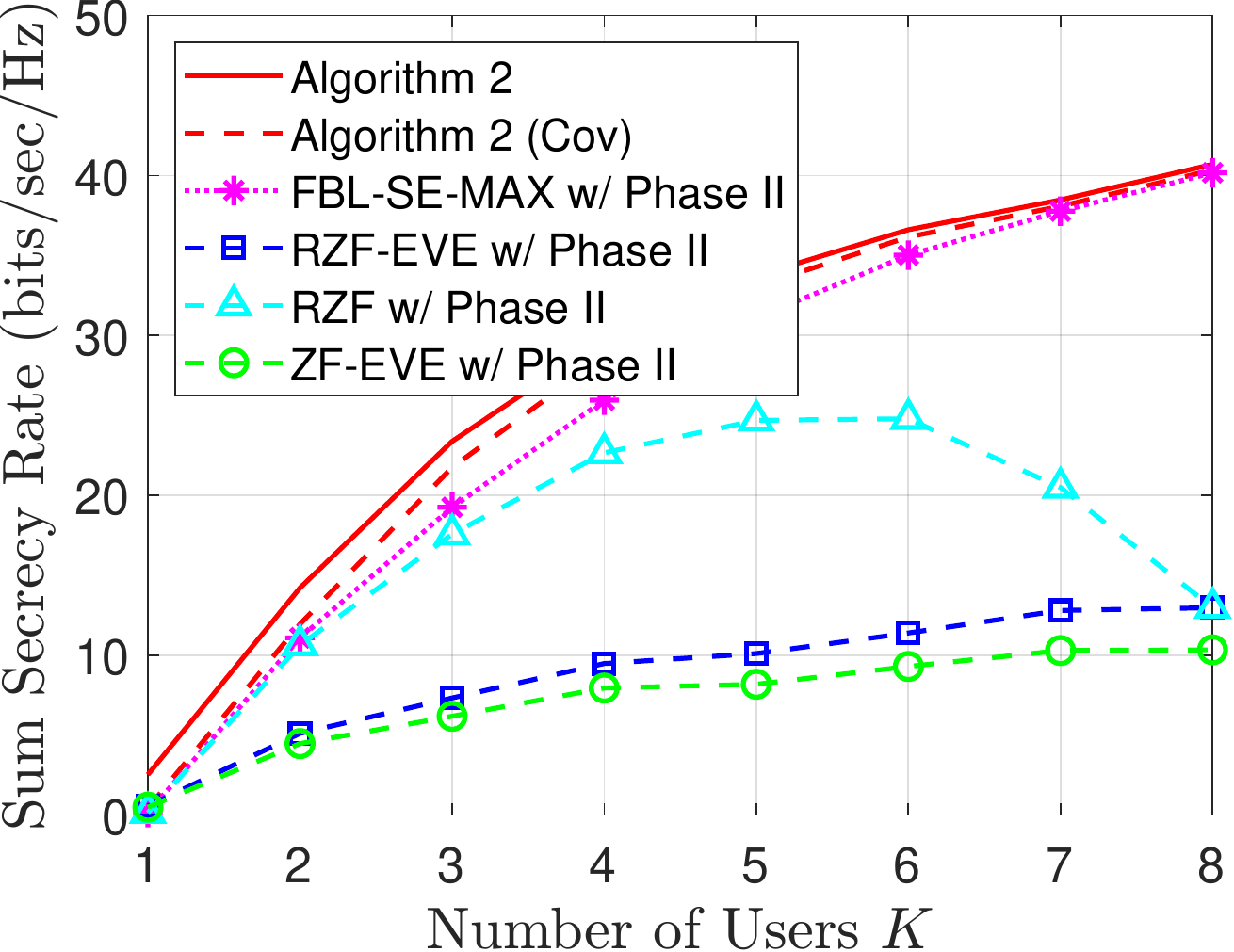}}
    }
    &
    {\resizebox{0.52\columnwidth}{!}
    {\includegraphics{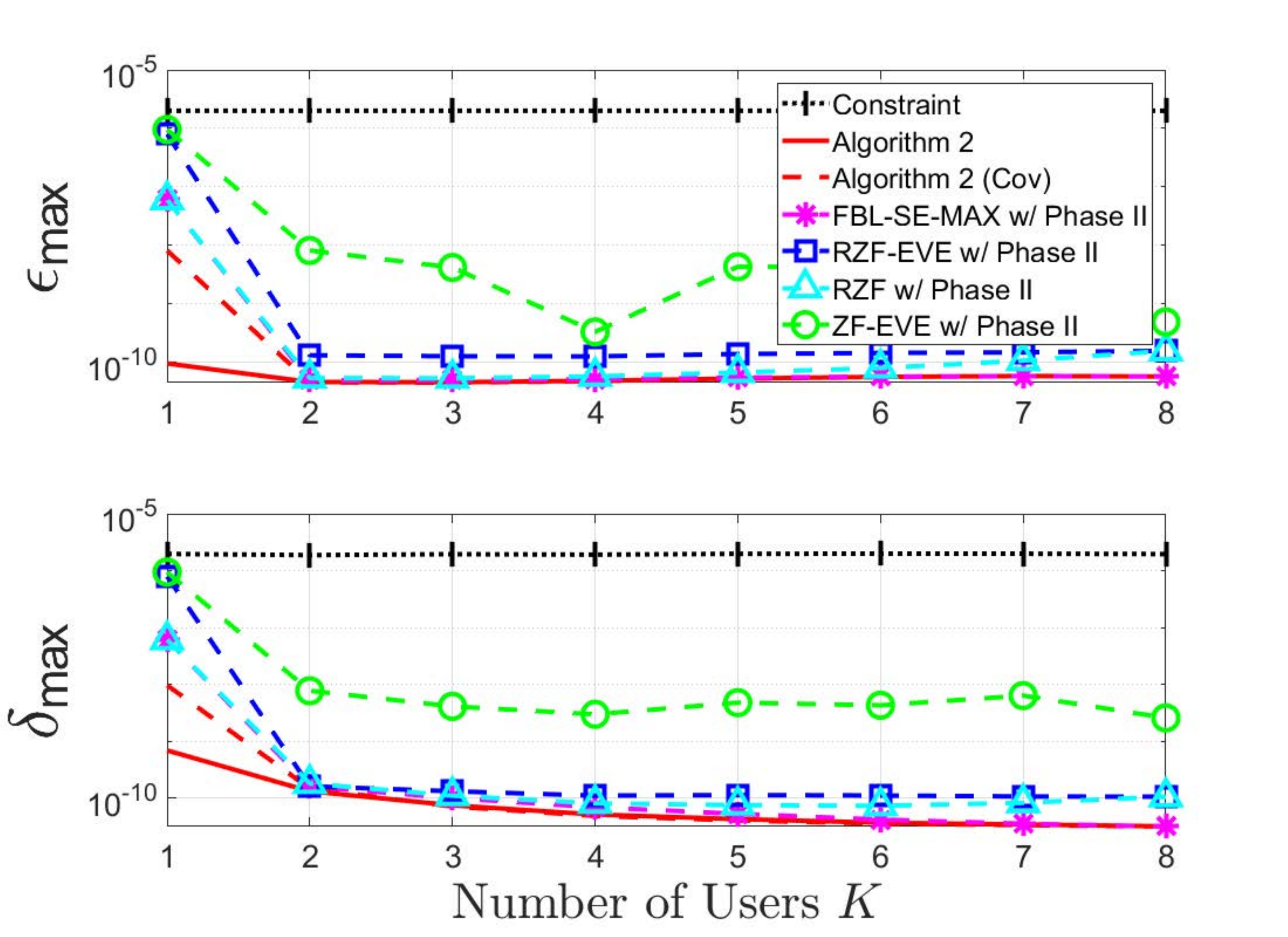}}
    }\\ \mbox{\small (a) Sum secrecy rate} & \mbox{\small \!(b) \!Maximum \!error \!probability \!and \!information \!leakage}
    \end{array}$
      \vspace{-0.5em}
    \caption{(a) The sum secrecy rate  and (b) maximum error and information leakage vs. the number of users $K$ for $P = 20$ dBm transmit power, $N= 8$ AP antennas, and $M=8$ eavesdroppers.}
    \label{fig:RatevsK}
      \vspace{-1em}
\end{figure}
% %%%%%%%%%%%%%%%%%%%%%%%%%%%%%%%%%%%%
% ------------------------------------------------------------
% Proposed w.r.t. K
% ------------------------------------------------------------
We further evaluate the proposed algorithms with respect to the number of legitimate users $K$ for $N=8$, $M=8$, and $P=20$ dBm.
Overall, Fig.~\ref{fig:RatevsK} shows that the algorithms reveal similar ordering in the sum secrecy rate as in Fig.~\ref{fig:ErrorMin}.
However, the gaps between the proposed algorithms and FBL-SE-MAX becomes narrower as $K$ increases.
We conjecture that when DoF margin is small compared to the number of users, focusing on maximizing the sum rate provides a better tradeoff than allocating resources on mitigating information leakage.
% Since the sum secrecy rate is eventually dominated by users' SE in the case that is much lager number of users exist than the eavesdroppers, FBL-SE-MAX, which maximizes the sum users' SE, reveals similar secrecy rate performance with a many-user scenario.
% In particular, the RZF reveals the similar performance trend in the low-medium $K$ regime and then, it suffers from maximizing the sum secrecy rate with the many-user scenario.
% We consider that when the DoF margin is small compared to the number of users, focusing on maximizing the sum rate provides a better tradeoff than allocating resources on mitigating information leakage.
% Since the ranks of the users' and eavesdroppers' channel are chosen $\min \{K, N\}$ and $\min \{M, N\}$, the number of served users of RZF exceeds the channel rank, whereas the proposed and FBL-SE-MAX algorithms based on GPI does not exceed the channel rank.
% Therefore, ZF-based algorithms are highly likely to result a lack of the channel rank and it leads to a situation that is easy to eavesdropping in serving many users.
Fig.~\ref{fig:RatevsK}(a) also reveals that there exists the optimal number of users for RZF in terms of the secrecy rate,
% Accordingly, the sum secrecy rate of RZF significantly decreases beyond the optimal point and converges to RZF-EVE,
which is not the case for the proposed algorithms.
In Fig.~\ref{fig:RatevsK}(b), the proposed algorithm provides the lowest maximum error probability and information leakage rate regardless of $K$.
Therefore, the proposed algorithm provides improvement in rate, reliability, and security  from the conventional precoding algorithms for the considered system environment.
% ------------------------------------------------------------

% ------------------------------------------------------------
% FIGURE %%%%%%%%%%%%%%%%%%%%%%%%%%%%%
% Proposed: Rate vs. M
\begin{figure}
    \centering
    $\begin{array}{c c}
    {\resizebox{0.48\columnwidth}{!}
    {\includegraphics{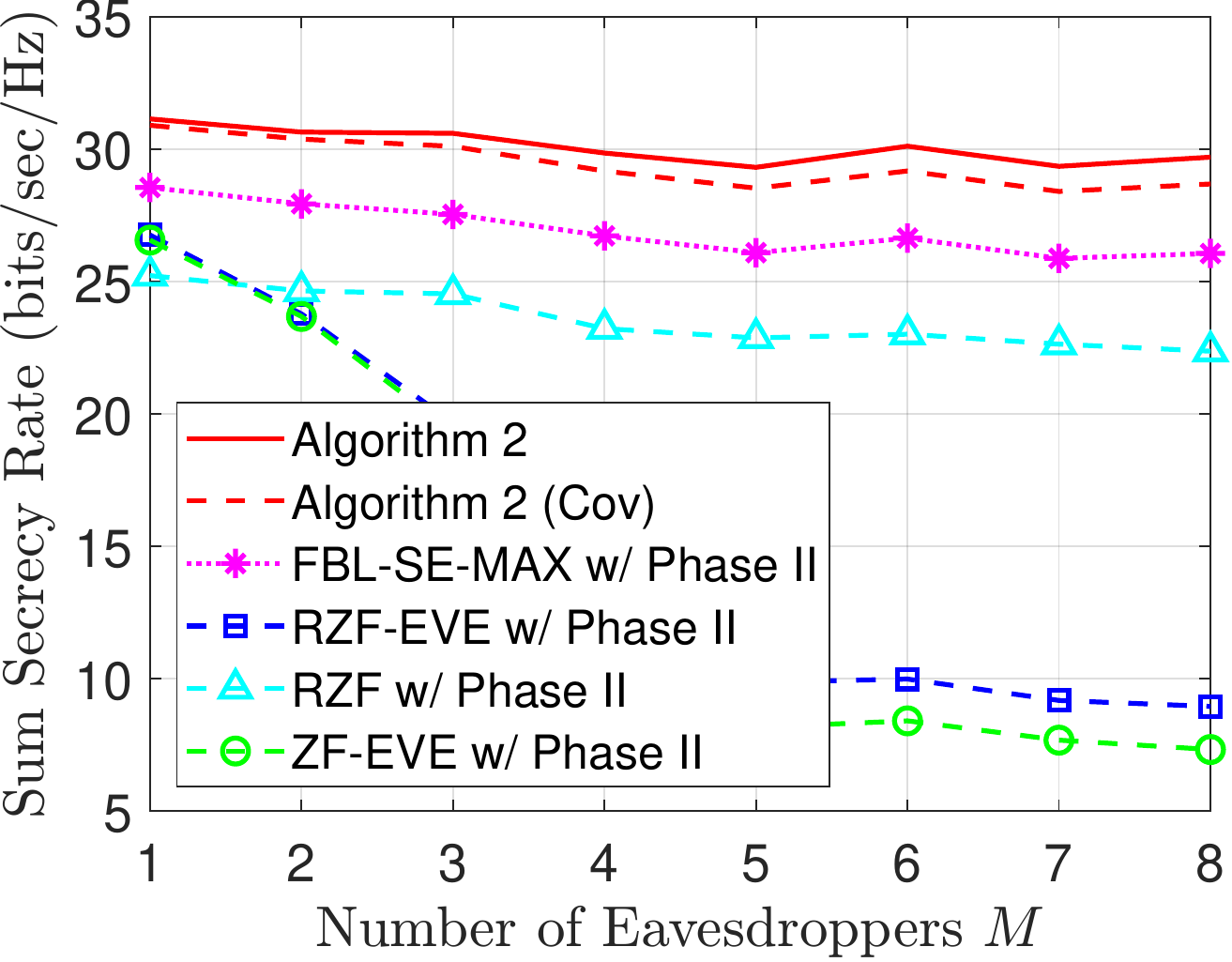}}
    }
    &
    {\resizebox{0.52\columnwidth}{!}
    {\includegraphics{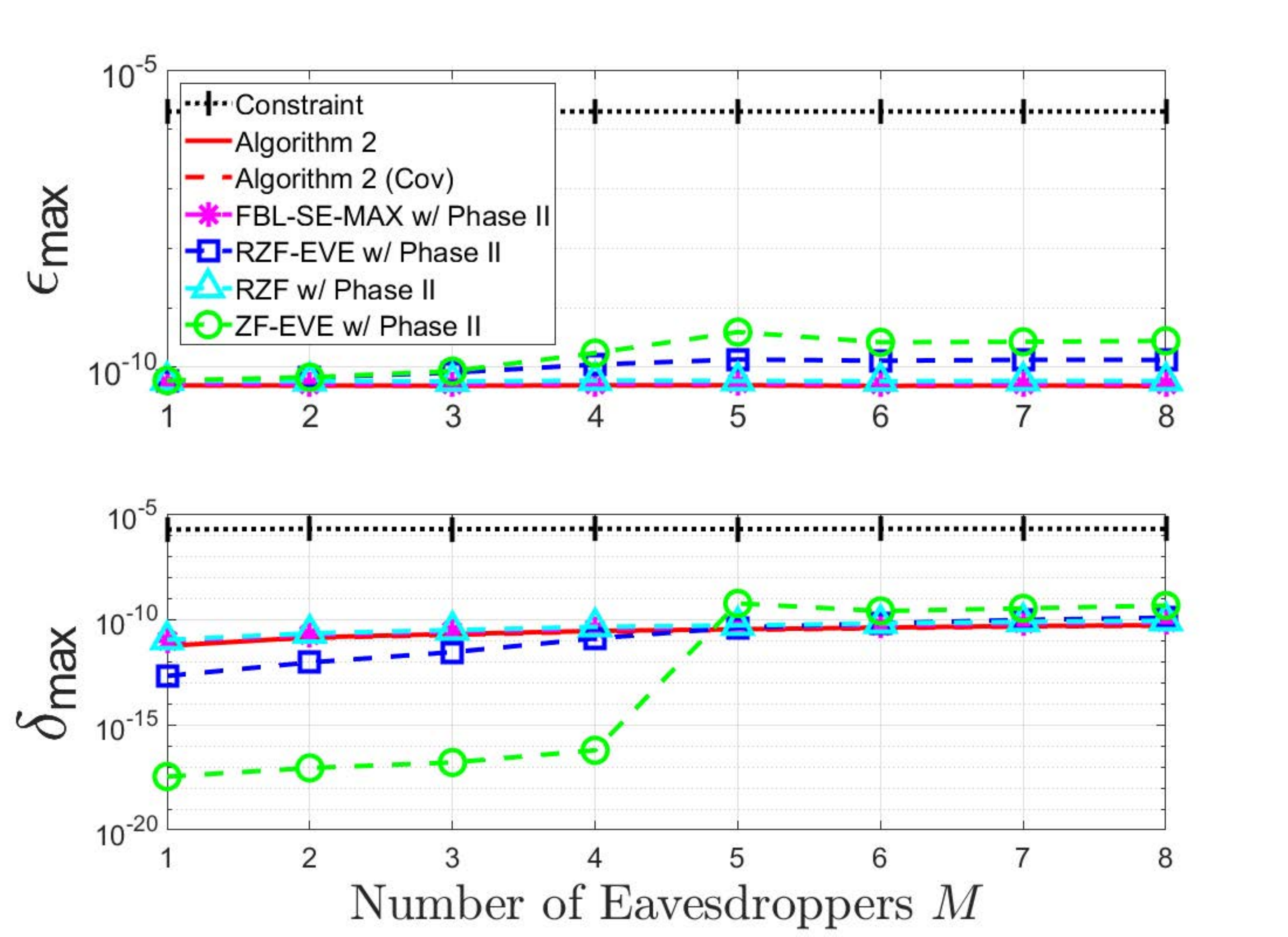}}
    }\\ \mbox{\small (a) Sum secrecy rate} & \mbox{\small \!(b) \!Maximum \!error \!probability \!and \!information \!leakage}
    \end{array}$
      \vspace{-0.5em}
    \caption{(a) The sum secrecy rate and (b) maximum error and information leakage vs. the number of eavesdroppers $M$ for $P = 20$ dBm transmit power, $N= 8$  antennas, and $K=4$ users.}
    \label{fig:RatevsM}
      \vspace{-1em}
\end{figure}
% %%%%%%%%%%%%%%%%%%%%%%%%%%%%%%%%%%%%
% ------------------------------------------------------------
% Proposed w.r.t. M
% ------------------------------------------------------------
In Fig.~\ref{fig:RatevsM}, we assess the sum secrecy rate in terms of the number of eavesdroppers $M$ for $P=20$ dBm, $N=8$, and $K=4$.
Fig.~\ref{fig:RatevsM}(a) shows that the proposed algorithms still achieve the highest secrecy rates.
As $M$ increases, the relative gaps between the proposed and the other algorithms become larger, which demonstrate the effective secure precoding performance of the proposed algorithms.
We note that for RZF-EVE and ZF-EVE, decreasing the spatial DoF margin leads to the poor performance of the secrecy rate.
% as also observed in 
As shown in Fig.~\ref{fig:RatevsM}(b), the proposed algorithms offer good and robust performance in minimizing the maximum error probability and information leakage rate over $M$.
% In this regard, as the number of eavesdroppers increases, the performance gap between the proposed algorithm and baseline algorithms increases.
Although ZF-EVE shows the lowest maximum information leakage rate for $M \leq 4$ by fully nullifying the wiretap channels, the achieved secrecy rate is significantly lower and the leakage rate becomes higher than the proposed algorithms for $M>4$.
In this regard, the proposed algorithm is considered to be a potential PLS solution that is robust to the number of eavesdroppers.
% ------------------------------------------------------------

% ------------------------------------------------------------
% FIGURE %%%%%%%%%%%%%%%%%%%%%%%%%%%%%
% Proposed: Rate vs. Ep vs. Del
\begin{figure}[!t]
    {\centerline{\resizebox{0.65\columnwidth}{!}{\includegraphics{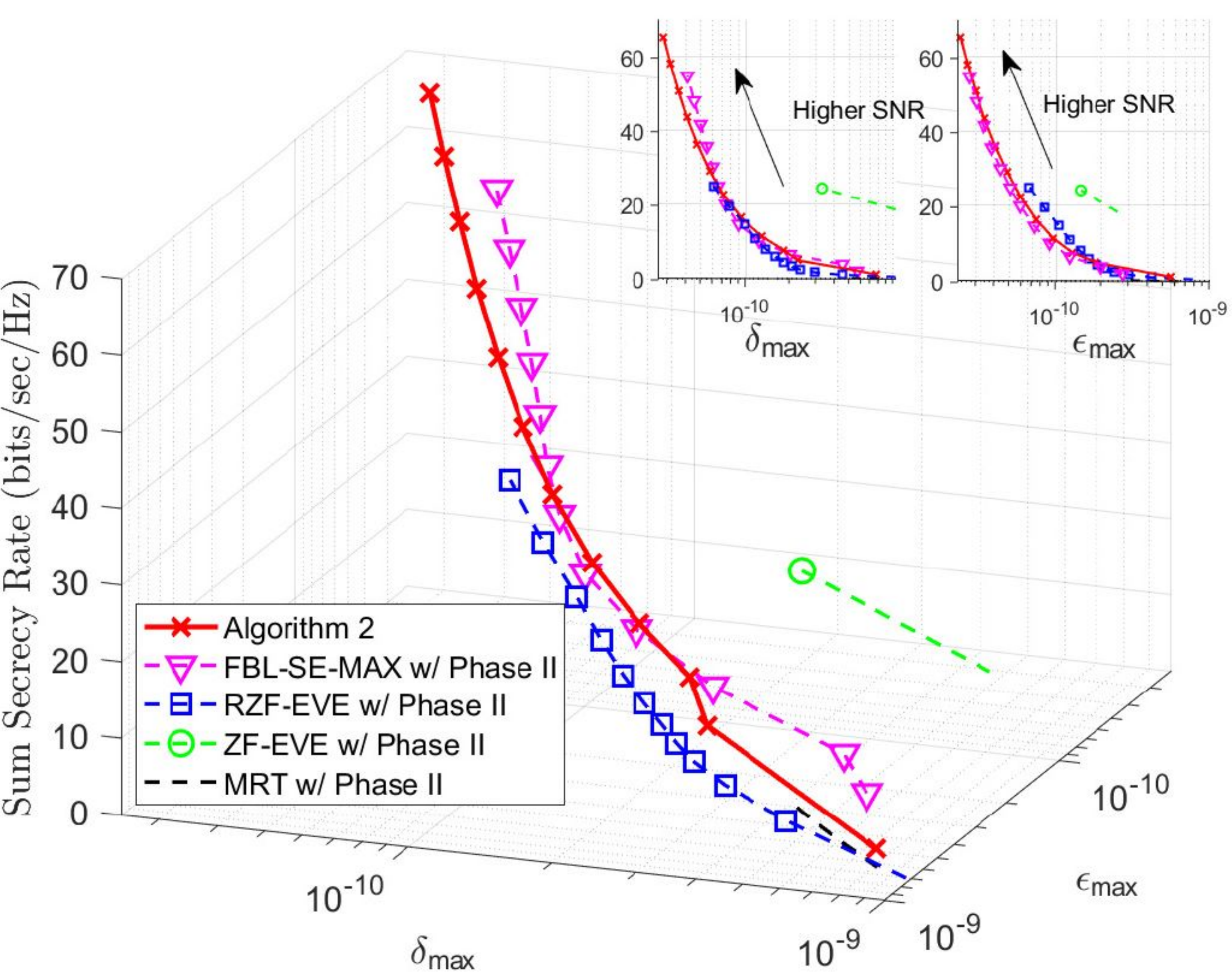}
    }}
      \vspace{-1em}
    \caption{The sum secrecy rate, maximum error probability, and maximum information leakage evaluated over transmit power $P$ from $0$ to $50$ dBm for $N=8$ AP antennas, $K=4$ users, and $M=8$ eavesdroppers.}
    \label{fig:3D_plot}}
      \vspace{-1em}
\end{figure}
% %%%%%%%%%%%%%%%%%%%%%%%%%%%%%%%%%%%%
% ------------------------------------------------------------
% Proposed w.r.t. Rate vs. Ep vs. Del (3D)
% ------------------------------------------------------------
For better visual understanding, we also plot the proposed algorithm by drawing the secrecy rate, maximum error probability, and maximum information leakage rate together over $P$ from $0$ to $50$ dBm for $N=8$, $K=4$, and $M=8$ in Fig.~\ref{fig:3D_plot}.
For instance, if the system requirements of the sum secrecy rate, maximum error probability, and maximum information leakage are $\sum_{k=1}^{K}R^{\sf sec}_k > 40$ bits/sec/Hz, $\max\{\epsilon_k\} < 4.0 \times 10^{-11}$, and $\max\{\delta_{m,k}\} < 4.0 \times 10^{-11}$, the proposed algorithm can satisfy the requirements with 33.3 dBm transmit power, whereas FBL-SE-MAX with Phase II requires 44.4 dBm, i.e., more than approximately $12\times$ the transmit power is needed, and the others cannot meet the requirements in the feasible transmit power regime.
As a result, the proposed algorithm provides the significant gain of SNR to fulfill the stringent requirements of URLLC in the FBL regime.
% ------------------------------------------------------------

% ------------------------------------------------------------
% FIGURE %%%%%%%%%%%%%%%%%%%%%%%%%%%%%
% Convergence
\begin{figure}[!t]    
    {\centerline{\resizebox{0.5\columnwidth}{!}{\includegraphics{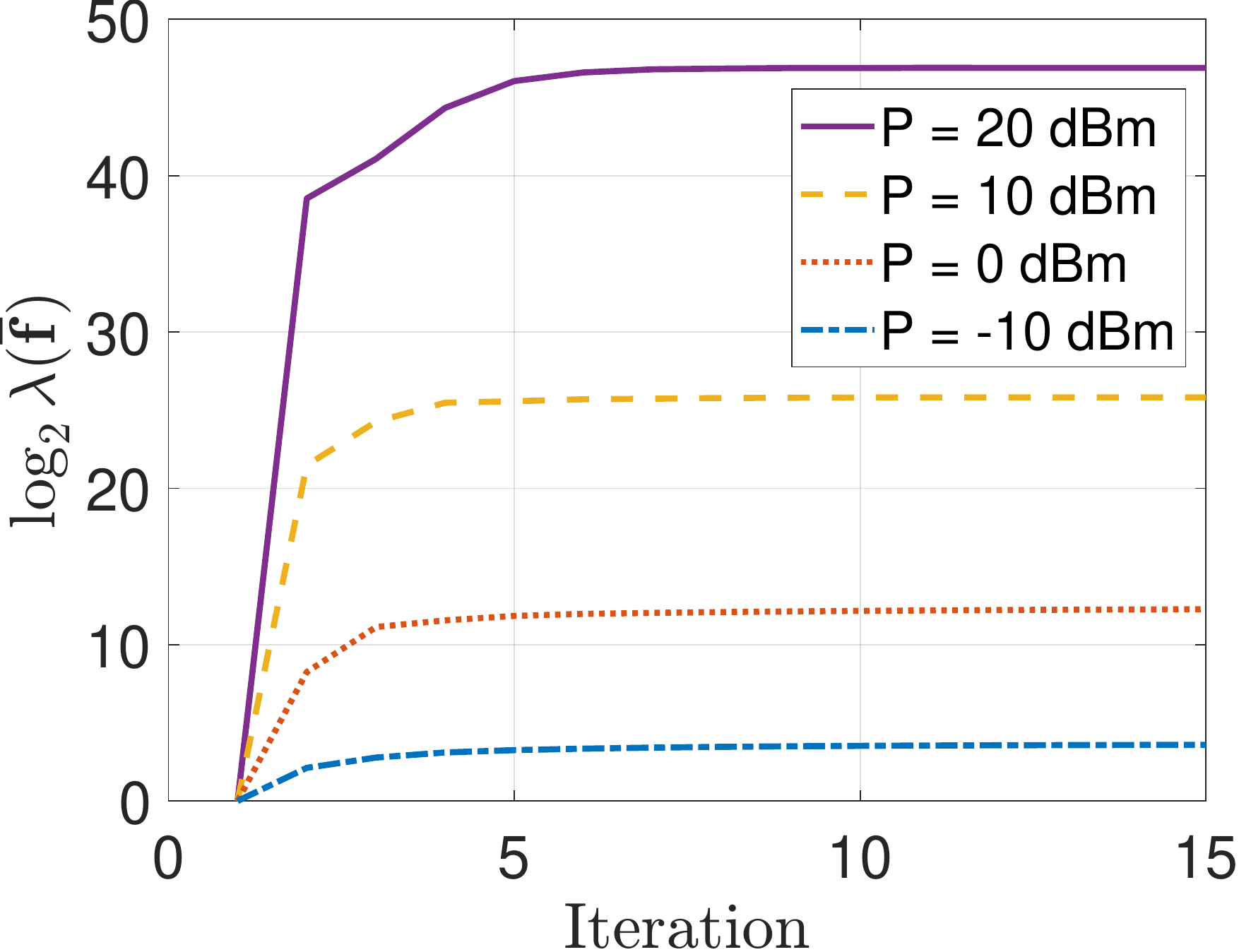}}}
      \vspace{-0.5em}
    \caption{Convergence results in terms of $\log_2 \lambda(\bar{\bf f})$ for $N=8$ AP antennas, $K=4$ users, $M=4$ eavesdroppers, and $P =\{-10, 0, 10, 20\}$ dBm transmit power.}
    \label{fig:Converg}}
      \vspace{-1em}
\end{figure}
% %%%%%%%%%%%%%%%%%%%%%%%%%%%%%%%%%%%%
% ------------------------------------------------------------
% Proposed: Convergence
% ------------------------------------------------------------
In Fig.~\ref{fig:Converg}, we present the convergence results in terms of the approximated objective function $\log_2\lambda(\bar{\bf f})$ in \eqref{eq:lamb} for $P \in \{-10, 0, 10, 20\}$ dBm, $N=8$, $K=4$, and $M=4$.
As shown in Fig.~\ref{fig:Converg}, the proposed algorithm converges within $T=6$ iterations in the practical transmit power regime.
Accordingly, the proposed algorithm demonstrates high potential for the practical communications compared to other high-complexity precoding algorithms in the FBL regime.
% ------------------------------------------------------------
% ------------------------------------------------------------
% Overall
Overall, the proposed algorithms offer significant improvement in the secrecy rate, error probability, and information leakage rate, and thus enhance the data rate, security,  and reliability with low complexity in the FBL regime.
% Overall, the proposed algorithm provides the highest secrecy rate for any SNR, blocklength, the number of AP antennas, the number of users, and the number of Eves.
% Thus, the proposed algorithm is more desirable to the FBL-based secure communication system than the conventional algorithms.
% ------------------------------------------------------------

%%%%%%%%%%%%%%%%%%% Conclusion %%%%%%%%%%%%%%%%%%%%%%%%%%
\section{Conclusion}
In this paper, we proposed secure precoding algorithms that jointly optimize rate, reliability, and security in the finite blocklength regime.
To this end, we formulated a joint optimization problem for the sum secrecy rate and error probability, and information leakage rate according to the characteristics of finite blocklength channel coding-based communications.
% To solve the problem, the key challenges were: 
% the problem is inherently non-convex and the presence of the back-off factor, the formulated problem is a multiple objective optimization, and the sum secrecy rate is non-smooth.
To resolve the key challenges in solving the problem, we first decompose the problem into two sub-problems.
% for alternating optimization. 
Then, for the first problem, we designed the precoding algorithm that maximizes the sum secrecy rate for the given error probability and information leakage rate by identifying stationary points.
% , where we adopted the generalized power iteration method.
For the second problem, we also reformulated the multi-objective optimization problem into the single-objective optimization problem for a given precoder and solved the KKT conditions to derive the optimal error probability and information leakage rate.
% From the first sub-problem, we identified the precoder that maximizes the sum secrecy rate for the given error probability and information leakage rate by deriving the first-order optimality condition.
% For the second sub-problem, we solved the KKT conditions and obtained the optimal error probability and information leakage rate for the given precoder.
Through the alternating optimization, we developed the joint optimization algorithm that provides the significantly improved tradeoff among the security, the error probability, and information leakage rate.
We further extended the algorithms for the case in which only the partial CSI of wiretap channels is available.
Via simulations, we validated the secrecy rate, error probability, and information leakage performances of the proposed algorithms.
In particular, we demonstrated that the proposed methods achieve the highest secrecy performance while satisfying the stringent reliability and security requirements compared to the existing methods with fast convergence and high robustness.
Therefore, the proposed methods can play a key role in realizing the URLLC under high information security.

%%%%%%%%%%%%%%%%%%% Appendix %%%%%%%%%%%%%%%%%%%%%%%%%%
\appendices
\section{Proof of Lemma 1} \label{pf:KKT_condition}
\begin{proof}
The Lagrangian function of the problem \eqref{eq:Object_GPI} is 
\begin{align}
    \cL_1(\bar {\bf f}) = \sum_{k=1}^{K}\left[\log_2 \left(\frac{\bar{\mathbf{f}}^{\sf H}\mathbf{A}_{k}\bar{\mathbf{f}}}{\bar{\mathbf{f}}^{\sf H}\mathbf{B}_{k}\bar{\mathbf{f}}}\right)^{\omega_k} \!-\! \ln \left\{\sum_{m=1}^{M} \left(\frac{\bar{\mathbf{f}}^{\sf H}\mathbf{C}_{m}\bar{\mathbf{f}}}{\bar{\mathbf{f}}^{\sf H}\mathbf{D}_{m,k}\bar{\mathbf{f}}}\right)^{\omega_{m,k}^{\sf e}}\right\}^{ \frac{1}{\alpha}}\right].
\end{align}
According to the first-order KKT condition, the stationary points need to satisfy zero-gradient, i.e., $\frac{\partial \cL_1(\bar {\bf f})}{\partial \bar {\bf f}^{\sf H}} = 0$.
We take the partial derivative of $\cL_{1}(\bar {\bf f})$ with respect to ${\bf \bar f }$ and set it to zero. 
For simplicity, we denote the first and second part of the Lagrangian function as $\cL_{1, \sf user}(\bar {\bf f})$ and $\cL_{1, \sf eve}(\bar {\bf f})$, i.e., $\cL_{1}(\bar {\bf f}) = \cL_{1, \sf user}(\bar {\bf f}) - \cL_{1, \sf eve}(\bar {\bf f})$. 
Then, using 
\begin{align}
    \label{eq:derivative_matrix}
    \frac{\partial \left( \frac{{\bf \bar f }^{\sf H} {\bA}_k \bar {\bf f} }{{\bf \bar f }^{\sf H} {\bB}_k \bar {\bf f}} \right)}{\partial \bar {\bf f}^{\sf H}} = \left( \frac{{\bf \bar f }^{\sf H} {\bA}_k \bar {\bf f} }{{\bf \bar f }^{\sf H} {\bB}_k \bar {\bf f}} \right) \left[  \frac{{\bA}_k \bar {\bf f} }{{\bf \bar f }^{\sf H} {\bA}_k \bar {\bf f}} - \frac{{\bB}_k \bar {\bf f} }{{\bf \bar f }^{\sf H} {\bB}_k \bar {\bf f}} \right],
\end{align}
the partial derivative of $\cL_{1, \sf user}(\bar {\bf f})$ can be calculated as
\begin{align}
    \label{eq:partial_L1_user}
    \frac{\partial \cL_{1, \sf user} (\bar{\bf f})}{\partial \bar {\bf f}^{\sf H}} = \sum_{k=1}^K \frac{1}{\ln 2} \left( \frac{{\bA}_k \bar {\bf f} }{{\bf \bar f }^{\sf H} {\bA}_k \bar {\bf f}} - \frac{{\bB}_k \bar {\bf f} }{{\bf \bar f }^{\sf H} {\bB}_k \bar {\bf f}} \right).
\end{align}
We also take the partial derivative of $\cL_{1, \sf eve}(\bar {\bf f})$ with respect to ${\bf \bar f }$ as
\begin{align}
    \label{eq:partial_L1_eve}
    \frac{\partial \cL_{1, \sf eve} (\bar{\bf f})}{\partial \bar {\bf f}^{\sf H}} \!=\! \frac{1}{\alpha}\sum_{k=1}^K\sum_{m=1}^{M} \left\{\frac{\omega_{m,k}^{\sf e} \left( \frac{\bar{\bf f}^{\sf H} {\bC}_{m} \bar {\bf f} }{\bar{\bf f}^{\sf H} {\bD}_{m,k} \bar {\bf f} } \right)^{\omega_{m,k}^{\sf e}} \left(\frac{{\bC}_{m} \bar {\bf f} }{{\bf \bar f }^{\sf H} {\bC}_{m} \bar {\bf f}} - \frac{{\bD}_{m,k} \bar {\bf f} }{{\bf \bar f }^{\sf H} {\bD}_{m,k} \bar {\bf f}} \right)}{\sum_{\ell=1}^{M} \left( \frac{\bar{\bf f}^{\sf H} {\bC}_{\ell} \bar {\bf f} }{\bar{\bf f}^{\sf H} {\bD}_{\ell,k} \bar {\bf f}}\right)^{\omega_{\ell,k}^{\sf e}}}\right\}.
\end{align}
Then, using \eqref{eq:partial_L1_user} and \eqref{eq:partial_L1_eve}, the first-order KKT condition holds if
\begin{align}
    \label{eq:kkt_cond_rearrange}
    & \sum_{k=1}^{K}  \left[\frac{\omega_k}{\ln 2} \frac{\bA_k}{ {\bf \bar f }^{\sf H} \bA_k \bar {\bf f}} +  \frac{1}{\alpha}\left\{\sum_{m=1}^{M}\left(\frac{\omega^{\sf e}_{m,k} \left(\frac{\bar{\mathbf{f}}^{\sf H}\mathbf{C}_{m}\bar{\mathbf{f}}}{\bar{\mathbf{f}}^{\sf H}\mathbf{D}_{m,k}\bar{\mathbf{f}}}\right)^{\omega_{m,k}^{\sf e} }\frac{\mathbf{D}_{m,k}}{\bar{\mathbf{f}}^{\sf H}\mathbf{D}_{m,k}\bar{\mathbf{f}}}}{\sum_{\ell =1}^{M}\;\left(\frac{\bar{\mathbf{f}}^{\sf H}\mathbf{C}_{\ell}\bar{\mathbf{f}}}{\bar{\mathbf{f}}^{\sf H}\mathbf{D}_{\ell,k}\bar{\mathbf{f}}} \right)^{\omega_{\ell,k}^{\sf e}}}\right)\right\} \right]\bar {\bf f}
    \nonumber \\ 
    &= \sum_{k=1}^{K} \left[\frac{\omega_k}{\ln 2} \frac{\bB_k}{ {\bf \bar f }^{\sf H} {\bB}_k \bar {\bf f}} + \frac{1}{\alpha}\left\{\sum_{m=1}^{M}\left(\frac{\omega^{\sf e}_{m,k} \left(\frac{\bar{\mathbf{f}}^{\sf H}\mathbf{C}_{m}\bar{\mathbf{f}}}{\bar{\mathbf{f}}^{\sf H}\mathbf{D}_{m,k}\bar{\mathbf{f}}}\right)^{\omega_{m,k}^{\sf e} }\frac{\mathbf{C}_{m}}{\bar{\mathbf{f}}^{\sf H}\mathbf{C}_{m}\bar{\mathbf{f}}}}{\sum_{\ell=1}^{M}\;\left(\frac{\bar{\mathbf{f}}^{\sf H}\mathbf{C}_{\ell}\bar{\mathbf{f}}}{\bar{\mathbf{f}}^{\sf H}\mathbf{D}_{\ell,k}\bar{\mathbf{f}}} \right)^{\omega_{\ell,k}^{\sf e}}}\right)\right\} \right] \bar {\bf f}.
\end{align}
% As we defined, $\lambda(\bar {\bf f})$, ${\bA}_{\sf KKT}(\bar {\bf f})$, and ${\bB}_{\sf KKT}(\bar {\bf f})$ in \eqref{eq:lamb}, \eqref{eq:A_KKT}, and \eqref{eq:B_KKT}, 
The first-order KKT condition is reorganized as
\begin{align}
    {\bf{A}}_{\sf KKT}(\bar {\bf f}) \bar {\bf f} = \lambda(\bar {\bf f}) {\bf{B}}_{\sf KKT} (\bar {\bf f}) \bar {\bf f}.
\end{align}
Since ${\bf{B}}_{\sf KKT} (\bar {\bf f}) $ is Hermitian, it is invertible. 
This completes the proof.
\end{proof}

\section{Proof of Lemma~\ref{lem:ErrorMin}} \label{pf:ErrorMin}
\begin{proof}
We note that $Q^{-1}(x)$ is convex if $x<\frac{1}{2}$, so that the problem in \eqref{eq:Error_obj} becomes a convex problem.
To solve the problem, we define the Lagrangian function of the problem in \eqref{eq:Error_obj} as
\begin{align}
    \cL_2 =&\; \frac{w}{R_{\infty}} \sum_{k=1}^{K} \left[\sqrt{\frac{\CMcal{V}_k}{L}}Q^{-1}(\epsilon_k)
    + \sum_{m=1}^{M}\left(R^{\sf e}_{m,k} + \sqrt{\frac{\CMcal{V}_{m,k}^{\sf e}}{L}}Q^{-1}\left(\delta_{m,k}\right)\right)\right]
    \nonumber \\ 
    &  + (1-w)\left(\frac{\tau}{\hat{\epsilon}_{\sf max}} \!+\! \frac{\xi}{\hat{\delta}_{\sf max}}\right) \!-\! \sum_{k=1}^{K}\lambda_k (\tau - \epsilon_k) - \sum_{k=1}^{K} \nu_k (\hat{\epsilon_k} - \epsilon_k) \!-\! \mu (\hat{\epsilon_{\sf max}} - \tau)
    \nonumber\\
    & -  \sum_{m=1}^{M}\sum_{k=1}^{K}\lambda_{m,k}^{\sf e} (\xi - \delta_{m,k}) - \sum_{m=1}^{M}\sum_{k=1}^{K} \nu_{m,k}^{\sf e} (\hat{\delta}_{m,k} - \delta_{m,k}) \!-\! \mu^{\sf e} (\hat{\delta}_{\sf max} - \xi),
\end{align}
where $\lambda_k$, $\nu_k$, $\mu$, $\lambda_{m,k}^{\sf e}$, $\nu_{m,k}^{\sf e}$, and $\mu^{\sf e}$ indicate Lagrangian multipliers.
For a feasible solution, we assume $\lambda_k \geq 0$, $\nu_k \geq 0, \forall k \in \CMcal{K}$, $\mu \geq 0$, $\mu^{\sf e} \geq 0$,  $\lambda_{m,k}^{\sf e} \geq 0$, and $\nu_{m,k}^{\sf e} \geq 0, \forall m \in \CMcal{M}, \forall k \in \CMcal{K}$.
According to the KKT conditions, the optimal solution for the problem in \eqref{eq:Error_obj} should satisfy the followings:
    \begin{align}
        \label{eq:dev_L2_ep}
        &\left.\frac{\partial \cL_2}{\partial \epsilon_{k}}\right|_{\epsilon_k = \epsilon^{\star}_k} =  - \frac{w}{R_{\infty}}\sqrt{\frac{\CMcal{V}_k}{L}}\sqrt{2\pi} \exp\left({\frac{\left(Q^{-1}(\epsilon^{\star}_k)\right)^2}{2}}\right) + \lambda^{\star}_k + \nu^{\star}_k = 0,
        \\
        \label{eq:dev_L2_tau}
        &\left.\frac{\partial \cL_2}{\partial \tau}\right|_{\tau = \tau^{\star}} = \frac{1-w}{\hat{\epsilon}_{\sf max}} - \sum_k \lambda^{\star}_k + \mu^{\star} = 0,
        \\
        \label{eq:dev_L2_del}
        &\left.\frac{\partial \cL_2}{\partial \delta_{m,k}}\right|_{\delta_{m,k} = \delta^{\star}_{m,k}} =
        %%%%%%%%%%%%%%%%%%%%%%% Sum %%%%%%%%%%%%%%%%%%%%%%
        - \frac{w}{R_{\infty}}\sqrt{\frac{\CMcal{V}_{m,k}^{\sf e}}{L}}\sqrt{2\pi} \exp\left({\frac{\left(Q^{-1}(\delta_{m,k}^{\star})\right)^{2}}{2}}\right)+ \lambda^{\sf e, \star}_{m,k} + \nu^{\sf e, \star}_{m,k} = 0,
        %%%%%%%%%%%%%%%%%%%%%% Complicated %%%%%%%%%%%%%%%%%%%%%%
        % - \frac{w}{R_{\infty}}\left[ \frac{\sum_{m,k} \exp\left({\alpha R^{\sf e}_{m,k}} + \alpha \sqrt{\frac{\CMcal{V}_{m,k}^{\sf e}}{L}}Q^{-1}(\delta_{m,k})\right) \cdot \left(\sqrt{2\pi}\alpha \sqrt{\frac{\CMcal{V}_{m,k}^{\sf e}}{L}} \exp\left({\frac{(Q^{-1}(\delta_{m,k}))^{2}}{2}}\right)\right)}{\alpha \sum_{m,k} \exp\left({\alpha R^{\sf e}_{m,k} + \alpha \sqrt{\frac{\CMcal{V}_{m,k}^{\sf e}}{L}}Q^{-1}(\delta_{m,k})}\right)} \right.
        %%%%%%%%%%%%%%%%%%%%%%% Other %%%%%%%%%%%%%%%%%%%%%%
        % - \frac{w}{R_{\infty}}\left[\frac{\exp\left({\alpha R^{\sf e}_{m,k} + \alpha \sqrt{\frac{\CMcal{V}^{\sf e}_{m,k}}{L}}Q^{-1}(\delta_{m,k}^{\star})}\right) \sqrt{\frac{\CMcal{V}_{m,k}^{\sf e}}{L}}\sqrt{2\pi} \exp\left({\frac{\left(Q^{-1}(\delta_{m,k}^{\star})\right)^{2}}{2}}\right)}{\sum_{m=1}^{M}\sum_{k=1}^{K} \exp\left({\alpha R^{\sf e}_{m,k} + \alpha \sqrt{\frac{\CMcal{V}^{\sf e}_{m,k}}{L}}Q^{-1}\left(\delta_{m,k}^{\star}\right)}\right)} + \lambda^{\sf e, \star}_{m,k} + \nu^{\sf e, \star}_{m,k}\right] = 0,
        \\
        \label{eq:dev_L2_xi}
        &\left.\frac{\partial \cL_2}{\partial \xi}\right|_{\xi = \xi^{\star}} = \frac{1-w}{\hat{\delta}_{\sf max}} - \sum_{m,k}\lambda^{\sf e, \star}_{m,k} + \mu^{\sf e, \star} = 0,
        \\
        \label{eq:dev_lambda}
        &\lambda^{\star}_k (\tau^{\star} - \epsilon^{\star}_k) = 0,
        \\
        \label{eq:dev_nu}
        &\nu^{\star}_k (\hat{\epsilon}_k - \epsilon^{\star}_k) = 0,
        \\
        \label{eq:dev_mu}
        &\mu^{\star}(\hat{\epsilon}_{\sf max} - \tau^{\star}) = 0,
        \\
        \label{eq:dev_lambda_eve}
        &\lambda^{\sf e, \star}_{m,k} (\xi^{\star} - \delta^{\star}_{m,k}) = 0,
        \\
        \label{eq:dev_nu_eve}
        &\nu^{\sf e, \star}_{m,k} (\hat{\delta}_{m,k} - \delta^{\star}_{m,k}) = 0,
        \\
        \label{eq:dev_mu_eve}
        &\mu^{\sf e, \star}(\hat{\delta}_{\sf max} - \xi^{\star}) = 0,
    \end{align}
    where \eqref{eq:dev_L2_ep} and \eqref{eq:dev_L2_del} come from that $\partial Q^{-1}(x)/\partial x = - \sqrt{2 \pi} \exp\left(\left(Q^{-1}(x)\right)^2/2\right)$.
    % \begin{align}
    %     \frac{\partial Q^{-1}(x)}{\partial x} = - \sqrt{2 \pi} \exp\left({\frac{\left(Q^{-1}(x)\right)^2}{2}}\right).
    % \end{align}
    Now, we obtain the optimal points $\epsilon^{\star}_k$ and $\delta^{\star}_{m,k}$ by using the derived KKT conditions.
    From \eqref{eq:dev_L2_ep}, \eqref{eq:dev_lambda}, and \eqref{eq:dev_nu}, $\epsilon^{\star}_k$ should be equal to $\tau^{\star}$ or $\hat{\epsilon}_{k}$.
    In the same manner, $\delta^{\star}_{m,k}$ should be equal to $\xi^{\star}$ or $\hat{\delta}_{m,k}$ from \eqref{eq:dev_L2_del}, \eqref{eq:dev_lambda_eve}, and \eqref{eq:dev_nu_eve}.
    Assuming that $\tau^{\star} < \hat{\epsilon}_{\sf max}$ and  $\xi^{\star} < \hat{\delta}_{\sf max}$, we obtain $\mu^{\star} = \mu^{\sf e, \star} = 0$.
    Since we can assume that $\hat{\epsilon}_{\ell-1} < \tau^{\star} < \hat{\epsilon}_{\ell}$ for some $\ell$ and $\hat{\delta}_{j(k)-1,k} < \xi^{\star} < \hat{\delta}_{j(k),k}$ for some $j(k)$, 
    the optimal error probability and information leakage vectors are obtained as
    \begin{align}
        &{\boldsymbol{\epsilon}}^{\star} = [\hat{\epsilon}_1, \cdots, \hat{\epsilon}_{\ell -1},\tau^{\star},\cdots,\tau^{\star}]^{\sf T},
        \\
        &{\boldsymbol{\delta}}_{k}^{\star} = [\hat{\delta}_{1,k}, \cdots, \hat{\delta}_{j(k) -1,k},\xi^{\star},\cdots,\xi^{\star}]^{\sf T}.
    \end{align}
    The information leakage matrix builds upon the derived information leakage vectors as
    \begin{align}
        {\boldsymbol{\Delta}}^{\star} = [{\boldsymbol{\delta}}_{1}^{\star},\cdots,{\boldsymbol{\delta}}_{K}^{\star}].
    \end{align}
    Under the assumption that $\hat{\epsilon}_{\ell-1}<\tau^{\star}<\hat{\epsilon}_{\ell}$ for some $\ell$,  $\nu_{k}^{\star} = 0$ and $\mu^{\star}=0$ for $k\geq \ell$.
    Combining \eqref{eq:dev_L2_ep} and \eqref{eq:dev_L2_tau}, we have
    \begin{align}
        \label{eq:optimal_tau}
        \frac{w}{R_{\infty}}\sum_{k=\ell}^{K}\sqrt{\frac{\CMcal{V}_{k}}{L}}\sqrt{2\pi} \exp\left({\frac{\left(Q^{-1}(\epsilon_{k}^{\star})\right)^{2}}{2}}\right) = \frac{1-w}{\hat{\epsilon}_{\sf max}},  \text{ for } k\geq \ell.
        % \frac{w}{\alpha R_{\infty}}\left[\frac{\sum_{m,k} \exp\left({\alpha R^{\sf e}_{m,k}} + \alpha \sqrt{\frac{\CMcal{V}_{m,k}^{\sf e}}{L}}Q^{-1}(\delta_{m,k})\right) \cdot \left(\sqrt{2\pi}\alpha \sqrt{\frac{\CMcal{V}_{m,k}^{\sf e}}{L}} \exp\left({\frac{(Q^{-1}(\delta_{m,k}))^{2}}{2}}\right)\right)}{\sum_{m,k} \exp\left({\alpha R^{\sf e}_{m,k} + \alpha \sqrt{\frac{\CMcal{V}_{m,k}^{\sf e}}{L}}Q^{-1}(\delta_{m,k})}\right)}\right] = \frac{1-w}{\hat{\delta}_{\sf max}},
        % \\
        % \frac{w}{R_{\infty}}\sum_{m=1}^{M}\sum_{k=1}^{K}\left[\sqrt{\frac{\CMcal{V}_{m,k}^{\sf e}}{L}}\sqrt{2\pi} \exp\left({\frac{\left(Q^{-1}(\delta_{m,k}^{\star})\right)^{2}}{2}}\right)\right] = \frac{1-w}{\hat{\delta}_{\sf max}},
    \end{align}
    We solve \eqref{eq:optimal_tau} with respect to $\epsilon_k^\star$ and then we have $\epsilon_k^\star = \tau^\star$ which is in \eqref{eq:tau_star} for $k \geq \ell$.
    Similarly, we assume that $\hat{\delta}_{j(k)-1,k}<\xi^{\star}<\hat{\delta}_{j(k),k}$, then  $\nu^{\sf e, \star}_{m,k} = 0$ and $\mu^{\sf e, \star}=0$ for $m\geq j(k)$.
    From \eqref{eq:dev_L2_del} and \eqref{eq:dev_L2_xi}, we have
    \begin{align}
        \label{eq:optimal_xi}
        \frac{w}{R_{\infty}}\sum_{m=j(k)}^{M}\sum_{k=1}^{K}\left[\sqrt{\frac{\CMcal{V}_{m,k}^{\sf e}}{L}}\sqrt{2\pi} \exp\left({\frac{\left(Q^{-1}(\delta_{m,k}^{\star})\right)^{2}}{2}}\right)\right] = \frac{1-w}{\hat{\delta}_{\sf max}}, \text{ for } m\geq j(k).
    \end{align}
    Solving \eqref{eq:optimal_xi} with respect to $\delta_{m,k}^\star$, we have $\delta_{m,k}^\star = \xi^\star$ in \eqref{eq:xi_star} for $m\geq j(k)$.
    % This completes proof.
\end{proof}

\bibliographystyle{IEEEtran}
\bibliography{main}

\end{document}